\documentclass[11pt,a4paper]{article}
\pdfoutput=1

\usepackage[margin=2.6cm]{geometry}
\usepackage{amsmath}
\usepackage{amssymb}
\usepackage{amsthm}
\usepackage{array}
\usepackage[hidelinks]{hyperref}
\hypersetup{pdftitle={Bath dimension and initial entropy for closed repeated use of a quantum channel},
  pdfauthor={Seth Douglas}}

\numberwithin{equation}{section}
\theoremstyle{plain}
\newtheorem{theorem}{Theorem}[section]
\newtheorem{proposition}[theorem]{Proposition}
\newtheorem{lemma}[theorem]{Lemma}
\newtheorem{corollary}[theorem]{Corollary}
\theoremstyle{definition}
\newtheorem{definition}[theorem]{Definition}

\DeclareMathOperator{\Tr}{Tr}
\DeclareMathOperator{\rank}{rank}
\DeclareMathOperator{\supp}{supp}

\newcommand{\rg}[1]{\mathsf{#1}}

\newcommand{\id}{\mathrm{id}}
\newcommand{\one}{\mathbb{I}}
\newcommand{\CC}{\mathbb{C}}
\newcommand{\Dist}{D}                      
\newcommand{\Ddiam}{D_{\diamond}}          
\newcommand{\Rmap}{\mathcal{R}}            
\newcommand{\Pfun}{\mathsf{P}_{\mathrm{q}}}
\newcommand{\kexact}{\kappa_{\mathrm{exact}}}
\newcommand{\cseq}{c_{\mathrm{seq}}}
\newcommand{\ket}[1]{\lvert #1 \rangle}
\newcommand{\bra}[1]{\langle #1 \rvert}
\newcommand{\ketbra}[2]{\lvert #1 \rangle\langle #2 \rvert}

\title{Bath dimension and initial entropy for closed repeated use of a
quantum channel}
\author{Seth Douglas\\\href{mailto:seth.douglas@gmail.com}{\texttt{seth.douglas@gmail.com}}}
\date{}

\begin{document}
\maketitle

\begin{abstract}
We characterize the bath resources needed to supply repeated uses of a fixed
finite-dimensional quantum channel in a closed device. For each horizon $T$,
one bath, one initial state and one repeated unitary are fixed before the user.
Each output is returned before the next input arrives; no reset, discard,
fresh ancilla or uncounted controller is available. Approximation error must
vanish against arbitrary adaptive users with quantum memory and references.
Writing $r=\lim \log_2(R_T)/T$ for the bath dimension rate and
$s=\lim S(\omega_T)/T$ for the actual initial entropy rate, we prove that the
achievable region is exactly $s\ge 0$, $r+s\ge h$ and $r-s\ge\kappa$. Here $h$ is
maximum entropy exchange and $\kappa$ is a smoothed independent-reference
extension cost, with the zero-error limit taken before the supremum over
full-rank inputs. Its exact fixed-input form is an affine transform of the
zero-leakage quantum privacy funnel. The minimum dimension rate is
$(h+\kappa)/2$. The proof combines entropy converses, a
bath-dimension-independent support repair, and a closed adaptive implementation
of encoder-only fully quantum Slepian--Wolf recycling. All seeds, clocks,
workspace and retained residues are counted. Worked examples include
dephasing, pure replacement and a qubit channel with $0<\kappa<h$. No
computability of $\kappa$ or efficient circuit synthesis is claimed.
\end{abstract}

\section{Introduction and model}
\label{sec:model}

A quantum channel is normally charged once. Fix a CPTP map $\Phi$ and a single
use costs a dilation: an environment of dimension at most $q^2$ if it starts
pure, or some other finite bath if it may start mixed. Serving $T$ inputs in
sequence is a different question, and the difference is not merely a factor
of $T$: the closed cost is still linear in $T$, but its coefficient, and the
trade between dimension and initial entropy, are what a single-use dilation
does not determine. A
device that has already produced outputs must have somewhere to keep what
producing them left behind. If it may reset its bath between visits, discard a
spent cell, or draw a fresh ancilla, the question collapses to the single-use
one: use one dilation $T$ times and let the rest arrive from outside. Those
three permissions are exactly what a closed implementation lacks. Reset
consumes a supply of pure states, discard needs a place to put what is
discarded, and a fresh ancilla is a resource that somebody prepared. Charging
all of them means fixing one finite Hilbert space and one finite initial state
in advance and asking what fits inside.

Definition~\ref{def:closed-device} is that contract. For each horizon $T$ the
device is one active bath $\rg{B}_T$ of dimension $R_T$, one user-independent
initial state $\omega_T$ on it, and one unitary $W_T$ applied at every visit;
all three are chosen before the user and may depend on $\Phi$ and $T$. The
output is handed back before the next input arrives. The tester is adaptive,
may retain every output, and may hold arbitrary finite references, so the
error of Definition~\ref{def:complete-error-rates} compares complete final
states, not single-use marginals. Nothing inside is free: the seed, the clock,
the workspace, the records and every register parked permanently after use are
all part of $R_T$. The one object outside is a purifier of $\omega_T$, which
the device may never touch and which is therefore never a resource.

Two numbers describe such a device, and they are not interchangeable. The
first is $\log R_T$, the dimension it occupies. The second is $S(\omega_T)$,
the actual entropy of the state it was initialized in. A rank count lies
between them: $S(\omega)\le\log\rank(\omega)\le\log R$, with equality
throughout only for a flat positive spectrum, so neither dimension nor rank
determines the other number. The four-dimensional example after
Definition~\ref{def:collision} separates all three. Initial entropy is the
right second axis because the contract makes it a resource that can be spent:
a device may be handed a mixed initializer, and what it is charged is the
entropy of that initializer, not the provenance of the randomness, which lies
outside this contract and is not identified here with work or energy. The
region below then charges that convenience back. Past the corner, every
further bit of initial entropy costs a bit of dimension, since $r\ge\kappa+s$
there.

Theorem~\ref{thm:rate-region} characterizes what is achievable. Writing $r$
for the dimension rate and $s$ for the actual initial entropy rate, the closed
region is $s\ge 0$, $r+s\ge h(\Phi)$ and $r-s\ge\kappa(\Phi)$, and the least
dimension rate is $(h+\kappa)/2$, attained with initial entropy rate
$(h-\kappa)/2$ and a spectrum flat on its support. Both channel quantities are
intrinsic. The first, $h$, is Schumacher's entropy exchange maximized over
inputs. The second, $\kappa$, is an infimum of $S(\rg{Q}\rg{A}\mid\rg{Z})$
over finite extensions whose reference stays exactly in product with the
auxiliary, smoothed in the output error, with the positive-error limit taken
before the full-rank input supremum; Corollary~\ref{cor:pointwise-funnel}
identifies the positive-error limit at each fixed full-rank input with
$S(\chi_\rho)-\Pfun^{\psi_{\rg{E}\rg{Q}}}(0)$, an affine transform of the
zero-leakage quantum privacy funnel. Kappa is the supremum of these limits.
Two resources, two channel quantities, one closed region.

The closest antecedent contract is Ryb\'ar and Ziman's repeatable quantum
memory channel: one fixed finite memory, initializer and unitary, required to
produce the same single-use marginal at every repetition, for uncorrelated
inputs. Ryb\'ar and Ziman also discuss finite $n$-repeatability and a finite-cell
construction. Here we optimize dimension and actual initialization entropy
as the horizon grows, while requiring vanishing complete error against
arbitrary adaptive quantum users. The gap between the contracts is not cosmetic. One fair shared bath
bit controlling $\one$ or $Z$ is a legitimate marginal dephasing repeater in
their sense, and Proposition~\ref{prop:shared-seed} shows its complete service
error is at least $1-2^{1-T}$, already $1/2$ at $T=2$. No resolution of their
original question is claimed here.

The contributions, in decreasing operational significance, are these. First,
the characterization itself under the closed contract, together with its
converse Theorem~\ref{thm:same-spectrum-converse}, which assumes no
entropy-rate limit, constrains the same actual initialization through two
separate legal experiments, and holds at every finite horizon with constants
explicit in that horizon's own error. Second, the active support
repair of Theorem~\ref{thm:active-repair}: an approximate finite collision is
corrected so that its Choi support lies inside the target's, with the positive
initial spectrum unchanged and a gain penalty depending only on the fixed
channel, not on the bath dimension, the initial spectrum or any auxiliary
dimension; the subsequent flagged exactification changes the spectrum
explicitly and says so, and with Theorem~\ref{thm:uniform-gain} it gives
$\kexact=\kappa$. Third, the implementation of
Theorem~\ref{thm:causal-balancing} and Appendix~\ref{app:balancing}, which
turns the encoder-only fully quantum Slepian--Wolf split into a device obeying
Definition~\ref{def:closed-device}: spectral concentration uniform over
adaptive controllers, one physical seed reused across every block and returned
jointly independent of the whole future-accessed exterior, reversible
flat-stock preparation on exact integer wires, and every residue allocated and
counted in one closed bath. The entropy-exchange functional, the privacy
funnel, the classical minimax step, the Slepian--Wolf split with its half-sum
rates and the decoupling second moment are established ingredients, cited at
their uses; the contract they are assembled under, and the repair and
allocation needed to meet it, are the implementation tasks addressed here.

The characterization has real limits, stated where they arise. The quantity
$\kappa$ is an ordered limit of an infimum with no attainment, no bound on the
auxiliary dimension, and no assertion of computability or of continuity in
$\Phi$; the thresholds in Section~\ref{sec:region} are finite but not
effective. Proposition~\ref{prop:kappa-lower} supplies a lower bound on
$\kappa$ that is an optimization over the input space alone, and
Proposition~\ref{prop:interior} brackets a single-qubit channel with
$0<\kappa<h$, so the non-degenerate case $0<\kappa<h$ occurs and the two
degenerate edges are not the only cases. Nothing here bounds circuit size or
latency, requires the bath to be returned, or extends beyond one fixed
memoryless channel visited once per input.

Section~\ref{sec:main-converse} states the region and proves the converse.
Section~\ref{sec:collision-gain} identifies $\kappa$ with an optimum over
physical approximate collisions, Section~\ref{sec:repair} repairs such a
collision into an exact one and derives the privacy-funnel form, and
Section~\ref{sec:balancing} with Appendix~\ref{app:balancing} builds an
all-horizon device from one exact witness. Section~\ref{sec:region} assembles
the region from these, Section~\ref{sec:examples} gives worked channels and
the limits above, and Section~\ref{sec:related} compares the contract with
neighbouring ones.

\begin{definition}[closed causal device]
\label{def:closed-device}
Fix a CPTP map $\Phi$ from operators on $\rg{S}=\CC^q$ to operators on
$\rg{A}=\CC^q$. For every integer horizon $T\ge 1$, a device consists of a
finite active bath $\rg{B}_T$ of dimension $R_T$, a density operator
$\omega_T$ on it, and a unitary $W_T:\rg{S}\rg{B}_T\to\rg{A}\rg{B}_T$, with
$\rg{S}$ and $\rg{A}$ identified between visits. All three may depend on
$\Phi$ and $T$ and are chosen before the user. Initially $\omega_T$ is in
product with the user's entire state. A mathematical purifier $\rg{F}$ is
inaccessible forever; it is an analysis device only, the device never acts on
it, so it is not charged, and the cost of a mixed $\omega_T$ enters through
$S(\omega_T)$. Each visit applies this same $W_T$ and hands $\rg{A}$
back before the next input arrives. All device seeds, clocks, stock, work
wires, records and permanently parked residues are part of $\rg{B}_T$.
Offline preparation and unrestricted fixed hardware are allowed. No device
reset, discard, fresh online ancilla, external randomness, purifier operation
or uncounted controller is allowed. There is no bath-return requirement,
marginal or independent.
\end{definition}

Here \emph{active bath} means the entire allocated device Hilbert space
$\rg{B}_T$, including registers parked permanently after use; the
\emph{active dimension} is $R_T$. The inaccessible purifier is excluded. A
close antecedent is the unitary memory model of Ryb\'ar and Ziman
\cite[sections II--III]{RybarZiman2008}. Their infinite-repeatability definition
fixes one finite device for exact single-use marginals with uncorrelated inputs;
they also discuss finite $n$-repeatability and preallocated cells. Here we optimize
horizon-dependent resources while approximating the complete process against
adaptive inputs. Section~\ref{sec:related} gives an explicit separation; no
resolution of their original question is claimed.

\begin{definition}[complete error and rates]
\label{def:complete-error-rates}
A tester has arbitrary finite user memory and references, may retain all
outputs, and applies arbitrary adaptive quantum operations between visits. Let
$\rho_T^{\mathrm{real}}(U)$ and $\rho_T^{\mathrm{ideal}}(U)$ be its complete
final states when connected to the device or to $T$ independent $\Phi$ calls.
With $\Dist(\rho,\sigma)=\lVert\rho-\sigma\rVert_1/2$, set
\begin{equation}
\label{eq:complete-error}
\delta_T=\sup_U \Dist\bigl(\rho_T^{\mathrm{real}}(U),
\rho_T^{\mathrm{ideal}}(U)\bigr),\qquad
\cseq(\Phi)=\inf_{\delta_T\to 0}\ \limsup_{T\to\infty}\frac{\log R_T}{T}.
\end{equation}
The supremum includes all finite reference sizes. \emph{All-horizon} means a
device for every positive integer $T$, with error tending to zero along all
integers. An achievable finite rate pair has actual limits $\log R_T/T\to r$
and $S(\omega_T)/T\to s$; the rate region is the closure of these pairs. Thus
$R_T=2^{rT+o(T)}$: positive $r$ means exponential growth, while $r=0$ permits
subexponential growth.
\end{definition}

Logarithms and entropies are base two, $S(\rho)=-\Tr\rho\log\rho$, with
$0\log 0=0$. An unconditional stopped experiment is covered by padding its
continuation with dummy inputs. No bound conditional on a rare stopping
outcome is promised. Batch access to all inputs would remove the
immediate-output requirement and is a different task.

\begin{definition}[intrinsic costs]
\label{def:intrinsic-costs}
For a minimal pure Stinespring dilation $V:\rg{S}\to\rg{A}\rg{E}$ of $\Phi$,
write $\Phi^c(\rho)=\Tr_{\rg{A}}V\rho V^*$ and
$h(\Phi)=\max_\rho S(\Phi^c(\rho))$. Let
$\ket{\Omega_q}=q^{-1/2}\sum_i\ket{i,i}$ and
$J_\Phi=(\id\otimes\Phi)(\ketbra{\Omega_q}{\Omega_q})$, so that
$\Tr J_\Phi=1$ and $\Tr_{\rg{A}}J_\Phi=\one_{\rg{Q}}/q$. For $\rho>0$ choose
$\ket{\psi_\rho}=(\one\otimes\sqrt{q\rho})\ket{\Omega_q}$, whose $\rg{Q}$
marginal is $\rho^{T}$ in this basis, and put
$\chi_\rho=(\id\otimes\Phi)(\ketbra{\psi_\rho}{\psi_\rho})$. For
$\varepsilon>0$ define
\begin{equation}
\label{eq:kappa-definition}
k_{\rho,\varepsilon}=\inf\ S(\rg{Q}\rg{A}\mid\rg{Z})_\sigma
\quad\text{subject to}\quad
\sigma_{\rg{Q}\rg{Z}}=\rho^{T}\otimes\sigma_{\rg{Z}}\ \text{exactly},\quad
\Dist(\sigma_{\rg{Q}\rg{A}},\chi_\rho)\le\varepsilon,
\end{equation}
and $\kappa(\Phi)=\sup_{\rho>0}\lim_{\varepsilon\downarrow 0}
k_{\rho,\varepsilon}$.
\end{definition}

The quantity $h$ maximizes the established entropy exchange of Schumacher
\cite[section V\,A, summary (iii), p.~2622]{Schumacher1996}. Complementary
dilations differ by environment isometries, so this number is independent of
the minimal choice. In \eqref{eq:kappa-definition} the infimum runs over
normalized states on $\rg{Q}\rg{A}\rg{Z}$ and every finite $\rg{Z}$; there is
no uniform dimension bound or attainment assertion. Here
$S(X\mid Y)=S(XY)-S(Y)$. The value $k_{\rho,\varepsilon}$ is nonincreasing in
$\varepsilon$ and lies between $S(\rho)-\log q$ and $S(\chi_\rho)$, so its
positive-error limit exists and is finite. The positive-error limit is taken
with $\rho$ fixed, before the full-rank supremum. Equality with the
exact-extension value at $\varepsilon=0$ requires the stability argument after
Theorem~\ref{thm:active-repair}; it is not an assumption of this definition.

The exact fixed-input optimization is already related to a named entropy
functional. Write $K_0(\rho)$ for the same infimum with
$\sigma_{\rg{Q}\rg{A}}=\chi_\rho$ exactly. For
$\psi_{\rg{Q}\rg{A}\rg{E}}=(\id\otimes V)\psi_\rho$, all finite
$\rg{Q}\rg{A}$ extensions are obtained by a channel $\rg{E}\to\rg{Z}$, and the
product constraint is $I(\rg{Q};\rg{Z})=0$. Thus
\begin{equation}
\label{eq:exact-funnel}
K_0(\rho)=S(\chi_\rho)-\Pfun^{\psi_{\rg{E}\rg{Q}}}(0),
\end{equation}
where $\Pfun$ is the quantum privacy funnel of Datta, Hirche and Winter
\cite[section V, Eq.~(22)]{Datta2019}, with their $X=\rg{E}$, $Y=\rg{Q}$,
$R=\rg{A}$ and $W=\rg{Z}$. Its objective $I(YR;W)$ becomes
$I(\rg{Q}\rg{A};\rg{Z})$, and
$S(\rg{Q}\rg{A}\mid\rg{Z})=S(\chi_\rho)-I(\rg{Q}\rg{A};\rg{Z})$. Their
subscript $\mathrm{q}$ denotes quantum, not the dimension here. Supremum and
infimum range over all finite auxiliary dimensions; neither attainment nor a
dimension bound is asserted. Section~\ref{sec:repair} supplies the additional
positive-error stability for this paper's fixed channel. The known functional
and the Slepian--Wolf split credited in Section~\ref{sec:balancing} are
ingredients in the closed adaptive rate characterization proved below.

\begin{definition}[finite collision]
\label{def:collision}
A finite unitary $U$ on $\rg{S}\rg{B}$ and a state $\tau_{\rg{B}}$ give
$\Phi_C(\rho)=\Tr_{\rg{B}}U(\rho\otimes\tau)U^*$ and the \emph{active} bath
channel $\Gamma_C(\rho)=\Tr_{\rg{A}}U(\rho\otimes\tau)U^*$. Define
$f_C(\rho)=S(\Gamma_C(\rho))-S(\tau)$, $g_C=\max_\rho f_C(\rho)$ and
$\kexact=\inf_{\Phi_C=\Phi}g_C$. Here $\Gamma_C$ excludes the inert purifier.
\end{definition}

Channel distance is
$\Ddiam(\Psi,\Phi)=\lVert\Psi-\Phi\rVert_{\diamond}/2$, equivalently the
supremum half trace distance on reference-entangled inputs. The exact finite
collision and its spectrum are fixed before invoking a block limit.
Mixed-environment realizations and their unitary column constraints were
studied by Terhal et al.\ \cite[Eqs.~(6)--(8)]{Terhal1998}. Repeated system--bath
interactions also appear in Scarani et al.\ \cite[Eqs.~(1)--(2)]{Scarani2002}: their
thermalization model supplies identically prepared bath qubits that each
interact once with the same system. Our finite collision is one such unitary
building block; its repeated-service implementation must allocate and count
every cell within Definition~\ref{def:closed-device}'s horizon-dependent
closed bath.

For example, a four-dimensional bath initialized with eigenvalues
$(3/4,1/4,0,0)$ has log dimension $2$, log rank $1$, actual entropy
$H_2(1/4)\approx 0.8113$, and entropy deficit $2-H_2(1/4)\approx 1.1887$. Its
minimal purifier dimension is $2$, but those inert degrees of freedom are not
available workspace. In general only $S(\omega)\le\log\rank(\omega)\le\log R$
holds; equality of entropy and log rank requires a flat positive spectrum.

Conventions, register tables and the bibliography are collected in
Appendix~\ref{app:conventions}. The proofs below use no theorem about
finite-tracial closure and none about processes with more than one visit per
input.

\section{Main theorem and converse}
\label{sec:main-converse}

\begin{theorem}[rate region]
\label{thm:rate-region}
By Theorems~\ref{thm:same-spectrum-converse}, \ref{thm:uniform-gain},
\ref{thm:active-repair} and \ref{thm:causal-balancing}, proved below, the rate
region of Definition~\ref{def:complete-error-rates} is exactly
\[
s\ge 0,\qquad r+s\ge h,\qquad r-s\ge\kappa .
\]
The minimum dimension rate is $(h+\kappa)/2$, attained with actual entropy
$(h-\kappa)/2$ and a spectrum that is flat on its support. For $q=1$ all
intrinsic costs and the minimum rate vanish.
\end{theorem}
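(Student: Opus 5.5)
The proof splits into a converse (every achievable pair satisfies the three inequalities) and achievability (every point of the closed region is a limit of achievable pairs). The region is then closed by definition, and the corner $(r,s)=((h+\kappa)/2,(h-\kappa)/2)$ is the minimum-$r$ point of the region.

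\emph{Converse.} We would invoke Theorem~\ref{thm:same-spectrum-converse}. It holds at every finite horizon with constants explicit in $\delta_T$, so dividing by $T$ and letting $\delta_T\to 0$ gives the asymptotic inequalities. The constraint $s\ge 0$ is trivial.

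The bound $r+s\ge h$ comes from the first legal experiment. The tester feeds $T$ halves of maximizers of $S(\Phi^c(\rho))$ and keeps the references. The environment entropy $Th$ must then be absorbed by $\rg{B}_T\rg{F}$. Its total entropy is at most $\log R_T+S(\omega_T)$, because the purifier carries exactly $S(\omega_T)$. Continuity (Fannes--Audenaert) controls the $\delta_T$ error.

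The bound $r-s\ge\kappa$ comes from the second experiment on the same initialization. Feeding a full-rank $\rho$ with an independent reference, the device together with its residual bath yields an approximate extension of $\chi_\rho^{\otimes T}$ whose reference is in product with $\rg{Z}=\rg{F}$. That product structure holds because $\omega_T$ is independent of the user. The entropy balance then gives $\log R_T-S(\omega_T)\ge T\,k_{\rho,\varepsilon}-o(T)$. Taking $\varepsilon\downarrow 0$ at fixed $\rho$, then the supremum over $\rho$, yields $\kappa$. We take this step as given from Theorem~\ref{thm:same-spectrum-converse}, including its single-letterization of the extension cost, and only check that the limit order matches Definition~\ref{def:intrinsic-costs}.

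\emph{Achievability.} We first argue that it suffices to attain the corner, together with one move that adds initial entropy. From the corner, adding $t$ bits of maximally mixed dead weight raises $r$ and $s$ equally. This traces the ray with $r-s$ fixed, and the region's boundary there is $r-s=\kappa$. Adding pure idle qubits raises $r$ alone. Every point of the region with $r+s\ge h$ and $r-s\ge\kappa$ is reached from the corner by these moves. The face $r+s=h$ below the corner, with $s$ between $0$ and $(h-\kappa)/2$, needs a separate treatment. There we would trade: realize a fraction of the horizon with a pure-environment dilation (rate $(h,0)$) and the rest at the corner, time-sharing inside one closed bath. Both pieces are closed devices, and concatenating their clocks in a single $\rg{B}_T$ preserves the contract. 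Convex combinations of $(h,0)$ and the corner cover that face.

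To reach the corner itself, we would chain the three constructive results. Theorem~\ref{thm:uniform-gain} gives approximate finite collisions with gain $g_C$ close to $\kappa$. Theorem~\ref{thm:active-repair} repairs one into an exact collision, so that $\kexact=\kappa$: for every $\eta>0$ there is an exact collision $C$ with $g_C\le\kappa+\eta$. Theorem~\ref{thm:causal-balancing} turns this one exact witness into an all-horizon closed device. It supplies $\log R_T/T\to(h+g_C)/2+O(\eta)$ and $S(\omega_T)/T\to(h-g_C)/2$, with a flat spectrum and $\delta_T\to 0$. A diagonal argument in $\eta$ then gives the corner in the closure.

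The main obstacle I expect is this last bookkeeping. We must check that the rates delivered by Theorem~\ref{thm:causal-balancing} converge as actual limits rather than limsups, as Definition~\ref{def:complete-error-rates} requires. We must also check that the time-sharing and dead-weight padding keep every register counted and the error vanishing against adaptive testers. If that theorem only gives limsups, the fix would be subsequence padding to make both rates exact.

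The case $q=1$ is immediate. Then $\Phi$ is trivial, so $h=\kappa=0$, and a one-dimensional bath works.
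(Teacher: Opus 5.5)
The gap is the face $r+s=h$ with $0\le s<(h-\kappa)/2$. Your converse and your route to the corner match the paper: Theorem~\ref{thm:same-spectrum-converse}, then $\kexact=\kappa$ from Theorems~\ref{thm:uniform-gain} and~\ref{thm:active-repair} fed into Theorem~\ref{thm:causal-balancing}, then mixed and pure padding for $s\ge(h-\kappa)/2$.

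You time-share the corner with ``a pure-environment dilation (rate $(h,0)$)''. Preallocated pure Stinespring cells cost $\log k$ per use, where $k=\rank J_\Phi$ is the minimal environment dimension, and in general $h=\max_\rho S(\Phi^c(\rho))<\log k$. For the channel of Proposition~\ref{prop:interior}, $k=4$. Since $\Tr(K_3^*K_3\rho)\le 1/8$ for every $\rho$, no input gives $\Phi^c(\rho)=\one_4/4$, so $h<2$ (numerically $h\approx 1.149$). Time-sharing with such cells gives $r+s=\lambda h+(1-\lambda)\log k>h$ for $\lambda<1$. Those points lie off the face, so the face is not shown achievable. The examples of dephasing and the pure replacer hide this only because there $h=\log k$.

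The point $(h,0)$ does exist, but only through recycling. Apply Theorem~\ref{thm:causal-balancing} to the exact pure Stinespring collision, which has $\sigma=0$ and $v=g=h$, hence $a=0$ and $b=h$. With that family your concatenation is a valid alternative. One clock controls two baths, and the second bath is still in product with everything when first used. The error is therefore at most $\delta^{(1)}_{T_1}+\delta^{(2)}_{T_2}$ by contractivity.

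The paper avoids time-sharing altogether. It replaces $l_T=\min(K_T,\lfloor dT\rfloor)$ of the corner device's initially mixed data qubits by halves of Bell pairs with $l_T$ new, counted, never-touched partners. The old-bath marginal stays exactly $\omega_T$, so the whole adaptive process and its error are unchanged. Meanwhile $\log R_T$ rises and $S(\omega_T)$ falls by $l_T$, giving $(h-s,s)$ from one family with no error composition. This is why the paper needs the separate limit $K_T/T\to s_0$ rather than only the total entropy rate.

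There are two smaller omissions.
\begin{itemize}
\item Theorem~\ref{thm:causal-balancing} assumes $g\le h$. An exact witness with $g_C\le\kappa+\eta$ can violate this when $\kappa=h$, which does occur (Proposition~\ref{prop:pure-replacer}). The paper replaces any witness of gain above $h$ by the pure Stinespring collision of gain $h$. That replacement also gives $\kappa\le h$, hence $s_0\ge 0$, which your case split needs.
\item Placing the corner in the closure needs no diagonal argument, since $\bigl((h+g_m)/2,(h-g_m)/2\bigr)\to\bigl((h+\kappa)/2,(h-\kappa)/2\bigr)$ already. What ``attained'' in the statement requires is a single all-horizon family with actual limits at the corner and a flat spectrum. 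The paper builds it by fixing, for each $m$, a threshold $H_m$ valid for \emph{every} $T\ge H_m$, and switching witnesses at $T_{m+1}=\max(H_{m+1},T_m+1)$.
\end{itemize}
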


\begin{theorem}[same-spectrum converse]
\label{thm:same-spectrum-converse}
Every vanishing-error all-horizon family satisfies
\[
\liminf_{T\to\infty}\frac{\log R_T-S(\omega_T)}{T}\ge\kappa,
\qquad
\liminf_{T\to\infty}\frac{\log R_T+S(\omega_T)}{T}\ge h .
\]
No entropy-rate limit is assumed.
\end{theorem}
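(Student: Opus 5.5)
The plan is to run two legal experiments against the same device $(\rg{B}_T,\omega_T,W_T)$ with the same purifier $\rg{F}$ of $\omega_T$. In each, the global state on user registers, $\rg{B}_T$ and $\rg{F}$ stays pure, and $\rg{F}$ is never touched. Both are non-adaptive testers feeding $T$ independent purified inputs, so both are covered by the supremum defining $\delta_T$ in Definition~\ref{def:complete-error-rates}.

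\emph{Entropy-exchange bound.} Take $\rho^\star$ attaining $h$ and its purification $\psi_{\rho^\star}$ on $\rg{Q}\rg{S}$, and feed $\psi_{\rho^\star}^{\otimes T}$.
\begin{itemize}
\item In the ideal process the final user state is $\chi_{\rho^\star}^{\otimes T}$. Since the input is pure, $S(\chi_{\rho^\star})=S(\Phi^c(\rho^\star))=h$, so this state has entropy $Th$.
\item In the real process the state on $\rg{Q}^T\rg{A}^T\rg{B}_T\rg{F}$ is pure. Hence
\[
S(\rg{Q}^T\rg{A}^T)=S(\rg{B}_T\rg{F})\le \log R_T+S(\rg{F})=\log R_T+S(\omega_T).
\]
\item The two user states are $\delta_T$-close on a space of dimension $q^{2T}$. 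The Fannes--Audenaert inequality then gives
\[
Th\le \log R_T+S(\omega_T)+2T\delta_T\log q+1 .
\]
\end{itemize}
Dividing by $T$ and using $\delta_T\to0$ yields the second inequality at every finite horizon, with explicit constants. No entropy-rate limit is needed.

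\emph{Extension-cost bound.} Fix a full-rank $\rho$ and feed $\psi_\rho^{\otimes T}$, writing $\rg{Q}_t\rg{A}_t$ for the $t$-th reference/output pair.
\begin{itemize}
\item Purity of the global state gives $S(\rg{Q}^T\rg{A}^T\rg{F})=S(\rg{B}_T)\le\log R_T$. Also $S(\rg{F})=S(\omega_T)$. Hence
\[
\log R_T-S(\omega_T)\ge S(\rg{Q}^T\rg{A}^T\mid \rg{F}).
\]
\item By the chain rule,
\[
S(\rg{Q}^T\rg{A}^T\mid\rg{F})=\sum_{t=1}^T S(\rg{Q}_t\rg{A}_t\mid \rg{Z}_t),\qquad \rg{Z}_t=\rg{F}\rg{Q}_{<t}\rg{A}_{<t}.
\]
\item Each term is a feasible point for $k_{\rho,\delta_T}$ in \eqref{eq:kappa-definition}:
  \begin{itemize}
  \item The marginal of the final state on $\rg{Q}_t\rg{A}_t$ is within $\delta_T$ of $\chi_\rho$, because it is a marginal of the complete state.
  \item The product constraint $\sigma_{\rg{Q}_t\rg{Z}_t}=\rho^{T}\otimes\sigma_{\rg{Z}_t}$ holds exactly, by causality. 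Registers $\rg{Q}_{<t}\rg{A}_{<t}$ are returned before input $t$ arrives and are never touched again. $\rg{F}$ is never touched. $\rg{Q}_t$ is a fresh reference in product with everything at the moment of visit $t-1$'s completion. So the joint marginal on $\rg{Q}_t\rg{Z}_t$ is frozen from that moment, and at that moment $\rg{Q}_t$ is in product with $\rg{Z}_t$.
  \end{itemize}
\item Therefore $\log R_T-S(\omega_T)\ge T\,k_{\rho,\delta_T}$.
\end{itemize}
Since $k_{\rho,\varepsilon}$ is nonincreasing in $\varepsilon$ and $\delta_T\to0$, we get $\liminf_T(\log R_T-S(\omega_T))/T\ge\lim_{\varepsilon\downarrow0}k_{\rho,\varepsilon}$. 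Because the left side does not depend on $\rho$, taking the supremum over full-rank $\rho$ gives $\kappa$. This matches the order of limits in Definition~\ref{def:intrinsic-costs}: the $\varepsilon$-limit is taken first at fixed $\rho$, then the supremum.

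The step needing the most care is the exact product structure $\rg{Q}_t\perp\rg{Z}_t$. It relies on every register in $\rg{Z}_t$ being frozen once $\rg{Q}_t$ could first interact. This is why the conditioning must be on the past outputs and the untouched purifier, not on $\rg{B}_T$, and why the same actual spectrum $\omega_T$ enters both experiments through $S(\rg{F})$. I would also verify that using $\rg{Z}_t$ of dimension up to $R_T q^{2T}$ is admissible. It is, since Definition~\ref{def:intrinsic-costs} allows every finite $\rg{Z}$.
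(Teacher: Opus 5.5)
Your proposal is correct and follows the paper's proof. Your chain-rule decomposition of $S(\rg{Q}^T\rg{A}^T\mid\rg{F})$ on the final state is the paper's telescope $S(\rg{B}_t)-S(\rg{B}_{t-1})=S(\rg{Y}_t\mid\rg{F}\rg{Y}_{<t})$ rewritten through global purity, with the same frozen-marginal justification of the exact $\rg{Q}_t\rg{Z}_t$ product; your entropy-exchange bound is the paper's second legal run with Audenaert's continuity bound (and your monotonicity remark already handles a horizon with $\delta_T=0$, where you bound by $k_{\rho,\varepsilon}$ once $\delta_T\le\varepsilon$ rather than invoking $k_{\rho,0}$).
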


\begin{proof}
The entropy-decrease budget has the antecedent of Ryb\'ar and Ziman
\cite[section III, Eq.~(3.7)]{RybarZiman2008}. The reference-conditioned
telescope below strengthens its increment to the intrinsic extension cost.
Fix a horizon simulator and its actual user-independent $\omega_T$, with inert
purifier $\rg{F}$. Feed fresh purified copies of one full-rank $\rho$ and
retain $\rg{Y}_t=\rg{Q}_t\rg{A}_t$. Before each visit
$\rg{B}_{t-1}\rg{F}\rg{Y}_{<t}$ is pure. The register $\rg{Q}_t$ is
independent of $\rg{F}\rg{Y}_{<t}$, and tracing the active operation preserves
this product exactly. Global purity and the service error $\delta_T$ give
\begin{equation}
\label{eq:converse-budgets}
S(\rg{B}_t)-S(\rg{B}_{t-1})=S(\rg{Y}_t\mid \rg{F}\rg{Y}_{<t})
\ge k_{\rho,\delta_T},
\qquad
\log R_T-S(\omega_T)\ \ge\ T\,k_{\rho,\delta_T}.
\end{equation}
Explicitly, $\rg{Z}_t=\rg{F}\rg{Y}_{<t}$ is finite and
$\rg{Q}_t\rg{Z}_t$ is exactly product, since the visit acts only on its
complement $\rg{S}_t\rg{B}_{t-1}$. The $\rg{Q}\rg{A}$ marginal is within
$\delta_T$ of $\chi_\rho$ by the complete service guarantee and partial trace.
Thus $\rg{Y}_t\rg{Z}_t$ is a feasible extension even though $\dim\rg{Z}_t$
grows with $T$. The telescope starts at $S(\rg{B}_0)=S(\omega_T)$ and ends at
$S(\rg{B}_T)\le\log R_T$.

Two points about that extension each deserve a sentence. First, the service
guarantee bounds the complete final state at horizon $T$, whereas the
telescope reads the state just after visit $t$. The two agree here because
this tester never touches $\rg{Y}_{\le t}$ again and every later visit acts on
$\rg{S}_{t'}\rg{B}_{t'-1}$ alone, so the $\rg{Q}_t\rg{A}_t$ marginal is frozen
from visit $t$ onward and inherits the horizon-$T$ bound. Second,
$\rg{Z}_t$ contains the purifier $\rg{F}$, which
Definition~\ref{def:closed-device} makes permanently inaccessible. That is not
a use of a forbidden resource. The auxiliary $\rg{Z}$ of
Definition~\ref{def:intrinsic-costs} is a mathematical extension of a state,
not a register that anybody operates, and $k_{\rho,\varepsilon}$ is an infimum
over all such states, so the bound needs only that one finite state on
$\rg{Q}_t\rg{A}_t\rg{Z}_t$ with the two stated properties exists. No step here
prepares, reads, measures or acts on $\rg{F}$; a device or tester that did so
would violate Definition~\ref{def:closed-device}.

For each fixed $\rho$, take the positive-error limit and then the full-rank
input supremum. Even if some $\delta_T$ vanish exactly, eventually
$\delta_T\le\varepsilon$ for each $\varepsilon>0$, which supplies the required
lower bound without identifying $k_{\rho,0}$.

In a separate legal run, feed purified copies of an input attaining $h$. The
ideal user entropy is $Th$, while purity and subadditivity give
$S(\rg{Y}^T)=S(\rg{B}_T\rg{F})\le\log R_T+S(\omega_T)$. Entropy continuity on
the $q^{2T}$-dimensional user system gives
\begin{equation}
\label{eq:h-budget}
\log R_T+S(\omega_T)\ \ge\ Th-2\delta_T T\log q-H_2(\delta_T).
\end{equation}
Both testers constrain the same numerical initialization entropy because the
device is fixed before either tester is chosen. Their final bath states can
differ. For each $\zeta>0$ the two eventual inequalities hold simultaneously,
so adding them gives $\liminf \log R_T/T\ge(h+\kappa)/2$. This proves the
converse part of Theorem~\ref{thm:rate-region}, even if $S(\omega_T)/T$
oscillates. The entropy-continuity bound used here is Audenaert's
\cite[Theorem 1]{Audenaert2007}, in its eventual small-distance regime. The
remaining implication is proved in Section~\ref{sec:region}.
\end{proof}

\section{Intrinsic gain and physical witnesses}
\label{sec:collision-gain}

\begin{theorem}[uniform collision gain]
\label{thm:uniform-gain}
One has
\[
\kappa=\lim_{\eta\downarrow 0}\ \inf_{\Ddiam(\Phi_C,\Phi)\le\eta} g_C .
\]
\end{theorem}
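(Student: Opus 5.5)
We prove the two inequalities between $\kappa$ and $G(\eta):=\inf_{\Ddiam(\Phi_C,\Phi)\le\eta}g_C$ separately. Since $G$ is nonincreasing as $\eta$ grows, its limit as $\eta\downarrow 0$ exists in the extended reals, so it suffices to show $\lim_\eta G(\eta)\ge\kappa$ and $\lim_\eta G(\eta)\le\kappa$.

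\emph{Lower bound $G(\eta)\ge\kappa-o(1)$.} Fix a collision $C=(U,\tau)$ with $\Ddiam(\Phi_C,\Phi)\le\eta$ and a full-rank $\rho$. Purify $\tau$ by an inert $\rg{F}$ and apply $U$ to the $\rg{S}$ half of $\psi_\rho$. This produces a pure state on $\rg{Q}\rg{A}\rg{B}\rg{F}$, and we take $\rg{Z}=\rg{F}$. Since $U$ does not act on $\rg{Q}\rg{F}$, the marginal $\sigma_{\rg{Q}\rg{F}}=\rho^T\otimes\tau_{\rg{F}}$ holds exactly. The diamond bound gives $\Dist(\sigma_{\rg{Q}\rg{A}},\chi_\rho)\le\eta$. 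By purity,
\[
S(\rg{Q}\rg{A}\mid\rg{F})=S(\rg{B})-S(\rg{F})=S(\Gamma_C(\rho'))-S(\tau)=f_C(\rho'),
\]
where $\rho'$ is the appropriate transpose of $\rho$. Hence $g_C\ge f_C(\rho')\ge k_{\rho,\eta}$. The map $\rho\mapsto\rho^T$ is a bijection on full-rank states, so this is the one-shot version of the converse telescope \eqref{eq:converse-budgets} with $T=1$. Taking the infimum over $C$ gives $G(\eta)\ge\sup_{\rho>0}k_{\rho,\eta}$. For each fixed $\rho$ we have $\lim_{\eta\downarrow0}G(\eta)\ge\lim_\eta k_{\rho,\eta}$, and taking the supremum over $\rho$ then gives $\ge\kappa$. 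The order of limits is harmless in this direction.

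\emph{Upper bound.} This is the hard direction. Given $\kappa$, we must build, for each $\eta>0$, a single collision that is $\eta$-close to $\Phi$ and satisfies $g_C\le\kappa+o(1)$ \emph{uniformly over all inputs}. The intended route is a minimax argument. For a fixed collision, $f_C(\rho)$ is concave in $\rho$, because it is an entropy of a linear image minus a constant. Any finite feasible extension $\sigma_{\rg{Q}\rg{A}\rg{Z}}$ for $k_{\rho,\varepsilon}$ can be turned into a collision. Take a Stinespring picture of $\sigma$ in which $\rg{Z}$ together with its purifier forms a mixed bath, using the Terhal-type column construction: exact product $\sigma_{\rg{Q}\rg{Z}}=\rho^T\otimes\sigma_{\rg{Z}}$ is precisely what allows the bath to start in $\sigma_{\rg{Z}}$ independently of the input and still be unitarily dilated. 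The resulting collision has $f_C(\rho)=S(\rg{Q}\rg{A}\mid\rg{Z})$ at the chosen input. Its channel is close to $\Phi$ only on the Choi-type state $\psi_\rho$. Because $\rho>0$, closeness on $\psi_\rho$ converts to diamond closeness with a factor of order $\lambda_{\min}(\rho)^{-1}$. This is why the positive-error limit is taken before the supremum, and why we restrict to inputs bounded away from the boundary.

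\emph{Removing the dependence on a single input.} We convexify. Mixtures of collisions are realized by adding a classical flag to the bath, and this changes $f$ by at most the flag's entropy. The classical minimax theorem for the concave–convex payoff $(\rho,\text{mixture})\mapsto f$ on the compact input simplex then exchanges $\inf_C\max_\rho$ with $\max_\rho\inf_C$. The resulting quantity is bounded by $\sup_\rho\lim_\varepsilon k_{\rho,\varepsilon}=\kappa$, with an additive error controlled by the flag entropy.

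The main obstacle is the flag entropy penalty and boundary inputs. I expect to handle the flag by making the flag register part of the bath entropy accounting, so that the conditional entropy $S(\rg{Q}\rg{A}\mid\rg{Z}\,\text{flag})$ is what enters. Rank-deficient $\rho$ are to be handled by continuity of $f_C$ together with an $\eta$-dependent restriction to $\rho\ge\epsilon\one$. Uniform continuity of $\rho\mapsto k_{\rho,\varepsilon}$ near the boundary is the step I would check most carefully. If it fails, I would fall back on routing the upper bound through the active repair of Theorem~\ref{thm:active-repair}, whose gain penalty depends only on $\Phi$.
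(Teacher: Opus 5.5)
Your lower bound is correct and is exactly the right half of the paper's sandwich \eqref{eq:lifting-sandwich}. Your lifting with the $\lambda_{\min}(\rho)^{-1}$ diamond factor is its left half; the paper makes that lifting precise by uniqueness of purifications of $\rho^{T}\otimes\sigma_{\rg{Z}}$. Combined with monotonicity of $k_{\rho,\varepsilon}$ in $\varepsilon$, the lifting already gives $a_\eta(\rho):=\inf\{f_C(\rho):\Ddiam(\Phi_C,\Phi)\le\eta\}\le k_{\rho,\eta\lambda_{\min}(\rho)}\le\kappa$ for every full-rank $\rho$ and every $\eta>0$.

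The upper bound, however, has two genuine gaps. The first is the flag. You say flagging changes $f$ by up to the flag entropy, and you carry that additive error through the minimax. Minimax on the averaged functions $\sum_jp_jf_{C_j}$ then transfers back to an actual collision only up to $H(p)$. For an $N$-term mixture this can be as large as $\log N$, with no a priori bound on $N$, so no $o(1)$ estimate follows. What is missing is that the penalty is exactly zero. For $\rg{B}=\bigoplus_j\rg{B}_j$, $\tau=\bigoplus_jp_j\tau_j$ and $U=\bigoplus_jU_j$, the sector label survives and $\Gamma_C(\rho)=\bigoplus_jp_j\Gamma_{C_j}(\rho)$. So $H(p)$ enters both $S(\Gamma_C(\rho))$ and $S(\tau)$ and cancels, giving $f_C=\sum_jp_jf_{C_j}$ for every input. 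The channel error stays at most $\eta$ by convexity. Your remark about putting the flag into the bath accounting points this way. But it is this identity that makes the set of gain functions convex and licenses Kneser--Fan with the input state space as the compact side. The paper also gives a direct finite-cover and separation argument.

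The second gap is the boundary. Minimax yields $G(\eta)=\sup_\rho a_\eta(\rho)$ over \emph{all} inputs, while the bound above needs $\lambda_{\min}(\rho)>0$. Continuity of each $f_C$ separately does not pass to the infimum $a_\eta$. Uniform continuity of $\rho\mapsto k_{\rho,\varepsilon}$ is not what is needed. The fallback through Theorem~\ref{thm:active-repair} does not help either: it repairs the Choi support of a given collision, produces no input-uniform witness, does not touch rank-deficient inputs, and its identification $\kexact=\kappa$ is itself derived from the present theorem.

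The paper closes this with concavity. The function $a_\eta$ is concave as an infimum of concave functions. Also $S(\rho)-\log q\le f_C(\rho)\le S(\rho)+\log q$ with no bath-dimension dependence, since $f_C(\rho)=S(\rho)-S(\rg{A}\mid\rg{B})$ on the output. So $\rho_t=(1-t)\rho+t\one/q$ obeys $a_\eta(\rho_t)\ge a_\eta(\rho)-3t\log q$. Hence the all-input supremum of $a_\eta$ equals the full-rank one, and $G(\eta)\le\kappa$ for every $\eta>0$. Your continuity route can also be rescued, but only with a modulus uniform in $C$. That follows, for instance, from Winter's conditional-entropy bound applied to $S(\rg{A}\mid\rg{B})$ with $\dim\rg{A}=q$; as written, you do not have that uniformity.
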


\begin{proof}
Let $\mathcal{C}_\eta$ be the finite collisions within half-diamond distance
$\eta$ of $\Phi$, and define
$a_\eta(\rho)=\inf_{C\in\mathcal{C}_\eta}f_C(\rho)$ and
$b_\eta=\inf_{C\in\mathcal{C}_\eta}\max_\rho f_C(\rho)$. These sets are
nonempty: a pure minimal Stinespring isometry extends to a finite square
collision.

Fix full-rank $\rho$ and a feasible $\sigma_{\rg{Q}\rg{A}\rg{Z}}$. Purify it
on $\rg{Q}\rg{A}\rg{Z}\rg{C}$. Its product $\rg{Q}\rg{Z}$ marginal has rank
$q\rank(\sigma_{\rg{Z}})$, so $\dim\rg{C}\ge\rank(\sigma_{\rg{Z}})$.
Purification uniqueness gives an isometry
$\rg{S}\rg{B}_0\to\rg{A}\rg{C}$ from
$\psi_\rho\otimes\Omega_{\rg{B}_0\rg{Z}}$, where
$\dim\rg{B}_0=\rank(\sigma_{\rg{Z}})$. Embed $\rg{B}_0$ into a bath of
dimension $\dim\rg{C}$ and complete this isometry to a square active-only
unitary. Initial positive eigenvalues are those of $\sigma_{\rg{Z}}$; the
added dimensions have zero weight. The register $\rg{Z}$ remains inert. Purity
gives $f_C(\rho)=S(\rg{Q}\rg{A}\mid\rg{Z})_\sigma$. In matching Schmidt
coordinates, any normalized pure reference-input vector has the form
$(X\otimes\one)\psi_\rho$ with $\lVert X\rVert^2\le 1/\lambda_{\min}(\rho)$.
Trace-norm contraction under this congruence gives
$\Ddiam(\Phi_C,\Phi)\le\varepsilon/\lambda_{\min}(\rho)$. Conversely a
collision in $\mathcal{C}_\eta$ yields a feasible extension at error $\eta$.
Therefore
\begin{equation}
\label{eq:lifting-sandwich}
a_{\varepsilon/\lambda_{\min}(\rho)}(\rho)\ \le\ k_{\rho,\varepsilon}\ \le\
a_\varepsilon(\rho).
\end{equation}
Their positive-error limits coincide for each \emph{fixed} $\rho$. No
eigenvalue bound uniform in $\rho$ is used. The $\rg{Q}$ marginal is $\rho^{T}$
in a fixed canonical basis; its eigenvalues equal those of the physical input.

Finite convex combinations of gain \emph{functions} are physical:
$\rg{B}=\bigoplus_j\rg{B}_j$, $\tau=\bigoplus_j p_j\tau_j$ and
$U=\bigoplus_j U_j$ give $f_C(\rho)=\sum_j p_j f_{C_j}(\rho)$ for all $\rho$,
with channel error at most $\eta$ if all branches lie in $\mathcal{C}_\eta$.
Actual initial entropy is $H(p)+\sum_j p_j S(\tau_j)$, and active dimension is
$\sum_j\dim\rg{B}_j$. The term $H(p)$ cancels in the gain because the same
sector probabilities remain at the output. It continues to count in storage.
The maximum of the mixed gain can be smaller than the average separate maxima.

For fixed $\eta$, every $f_C$ is continuous and concave on the compact
$q$-state space. Put $a=\sup_\rho a_\eta(\rho)$. Epsilon-optimal pointwise
choices and a finite open subcover supply $f_1,\dots,f_N$ with
$\min_j f_j(\rho)<a+\epsilon$ for every $\rho$. The downward set
$\mathcal{D}=\{x:\ \text{some }\rho\text{ has }x_j\le f_j(\rho)\ \text{for all }j\}$
is closed by subsequence compactness of the input states and convex by
concavity. The point $(a+\epsilon,\dots,a+\epsilon)$ is outside $\mathcal{D}$
by the finite cover. Strict finite-dimensional separation gives a normal with
nonnegative coordinates: a negative coordinate would be unbounded above on a
downward ray of $\mathcal{D}$. Normalize that nonzero normal to weights
summing to one, giving $\max_\rho\sum_j p_j f_j(\rho)<a+\epsilon$. The
physical flagged collision realizes this function, so weak minimax supplies
$b_\eta=\sup_\rho a_\eta(\rho)$. There is no infinite mixture, bath
compactness, attainment assumption or finite size bound.

Finally $S(\rho)-\log q\le f_C(\rho)\le S(\rho)+\log q$ by entropy
inequalities on $\rg{A}\rg{B}$. These bounds are independent of the bath. The
infimum $a_\eta$ is concave, so $\rho_t=(1-t)\rho+t\,\one/q$ satisfies
$a_\eta(\rho_t)\ge a_\eta(\rho)-3t\log q$. Thus at each $\eta$ the full-rank
supremum equals the all-input supremum, without assuming boundary continuity.
Shrinking error sets increase their infima, and hence
\begin{equation}
\label{eq:gain-minimax}
\lim_{\eta\downarrow 0}b_\eta
=\sup_{\eta>0}\ \sup_{\rho>0}a_\eta(\rho)
=\sup_{\rho>0}\ \sup_{\eta>0}a_\eta(\rho)=\kappa .
\end{equation}
Only two suprema have been interchanged.
\end{proof}

This yields finite approximate collisions uniformly over inputs at every
positive accuracy and gain slack. It yields no exact witness at zero error and
no uniform bound on witness dimension or spectrum. Pointwise lifting uses
purification uniqueness and the semilocalization mechanism of Eggeling,
Schlingemann and Werner \cite[section III]{Eggeling2002}. After physical flags convexify gain
functions, the uniform step is an instance of classical minimax, as in Sion
\cite[Theorem 4.2 (Kneser--Fan), p.~175]{Sion1958}. The compact side is the
input state space, and evaluation is concave in the input and affine in the
gain function. The finite separation proof above makes this application
explicit. The error conversion and limit order then give the collision
interpretation needed here.

\section{Active repair and exactification}
\label{sec:repair}

\begin{theorem}[active support repair and exactification]
\label{thm:active-repair}
For a fixed $\Phi$ with Choi rank $k$, every sufficiently $\eta$-close finite
collision $(U,\tau)$ on $R$ bath dimensions admits a support-repaired
collision on $(k+1)R$ dimensions with the same positive initial spectrum, Choi
support contained in that of $\Phi$, channel error at most
$\eta+\varepsilon_\eta$, and gain increase at most
$\beta_q(\varepsilon_\eta)$, where
$\varepsilon_\eta=O_\Phi(\eta^{1/6})$. A further counted flagged correction
implements $\Phi$ exactly, on at most $(k+1)R+k$ dimensions, with gain
increase tending to zero uniformly in $R$ and $\tau$. Combining this repair
with Theorem~\ref{thm:uniform-gain} gives $\kexact=\kappa$. The small-error
threshold and constants may depend on the fixed $\Phi$ (including $k$,
$C_\Phi$ and its smallest positive Choi eigenvalue $\mu$), but not on $R$,
$\tau$ or the auxiliary dimension.
\end{theorem}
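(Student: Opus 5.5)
The proof has two stages, a support repair and a flagged exactification, followed by a limit argument that uses Theorem~\ref{thm:uniform-gain}.

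\emph{Stage one: support repair.} Write the collision's effective Kraus operators. Diagonalize $\tau=\sum_b t_b\ketbra{b}{b}$ and set
\[
K_{a',b,a}=\sqrt{t_b}\,(\one\otimes\bra{a'})U(\one\otimes\ket{b}),
\]
so that $\Phi_C=\sum K\cdot K^*$. Let $P$ be the projector onto $\supp J_\Phi$, which has rank $k$. Because $\Ddiam(\Phi_C,\Phi)\le\eta$, the Choi states satisfy $\Dist(J_{\Phi_C},J_\Phi)\le\eta$. The weight of $J_{\Phi_C}$ outside $P$ is therefore at most $O(\eta)$, and by gentle measurement the post-selected state is $O(\sqrt\eta)$-close.

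Inside the unitary, coherently measure whether the vectorized Kraus action lies in $P$. The repair map then does the following.
\begin{enumerate}
\item On the $P$ branch it keeps the action.
\item On the complementary branch it reroutes to a fixed exact dilation of $\Phi$, indexed by one of $k$ fresh Kraus labels.
\end{enumerate}
This is the source of the factor $(k+1)R$: each original bath level is tensored with a $(k+1)$-level flag. The flag starts in a fixed pure state, so the positive initial spectrum is unchanged.

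Trace preservation fails after the projection: $\sum K^*PK\neq\one$. I would fix this by rescaling with $(\sum\tilde K^*\tilde K)^{-1/2}$. That inverse is close to $\one$, and the size of the correction is controlled using $\mu$, the smallest positive Choi eigenvalue. This step is where the exponent degrades to $\eta^{1/6}$: gentle measurement costs a square root, the normalization correction costs another, and a polar-decomposition bound costs a cube root. The resulting Choi support lies in $\supp J_\Phi$.

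\emph{Gain control.} The new bath output differs from the old one by a unitary conjugation plus an $\varepsilon_\eta$-perturbation on registers that the flag keeps block-structured. I would compare the two using the conditional-entropy form of Fannes--Audenaert continuity (Alicki--Fannes--Winter). The gain is $S(\Gamma_C(\rho))-S(\tau)$. Since $S(\tau)$ is unchanged, it is enough to bound the change in $S(\rg{A}\rg{B}\rg{F}\mid\cdot)$ through the complementary picture. Using the purifier, $f_C(\rho)$ equals $S(\rg{A}\rg{R}\mid\rg{F})$ for an output living on dimension $q^2$. Its continuity modulus $\beta_q$ is then independent of $R$, $\tau$ and the auxiliary dimension. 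This is the key reason the bound is uniform.

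\emph{Stage two: exactification.} Once the supports are nested, write
\[
\Phi=(1-p)\Phi_{C'}+p\,\Delta ,
\]
where $\Delta=(\Phi-(1-p)\Phi_{C'})/p$ is CP for $p=O(\varepsilon_\eta/\mu)$. The nesting makes this possible, since
\[
J_\Phi-(1-p)J_{\Phi_{C'}}\ge \mu P-O(\varepsilon_\eta)P\ge0
\]
on the common support. Trace preservation of $\Delta$ follows from linearity. I would implement the mixture with a flagged direct sum as in Theorem~\ref{thm:uniform-gain}:
\begin{itemize}
\item the $C'$ sector with weight $1-p$;
\item a minimal pure dilation of $\Delta$ on at most $k$ extra dimensions with weight $p$.
\end{itemize}
This gives $(k+1)R+k$ dimensions. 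The gain becomes $(1-p)f_{C'}+pf_\Delta$, and by the bath-independent bounds of Theorem~\ref{thm:uniform-gain} it changes by at most $2p\log q$. The spectrum changes to $(1-p)\tau\oplus p$, and I will state this explicitly, as the theorem requires.

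\emph{Conclusion.} Combine this with Theorem~\ref{thm:uniform-gain}. For each $\eta$ choose $C$ with $g_C\le\kappa+o(1)$, then repair and exactify. This gives $\kexact\le\kappa+o(1)$. The reverse inequality $\kexact\ge\kappa$ is immediate, since exact collisions lie in every $\mathcal{C}_\eta$. Hence $\kexact=\kappa$.

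The main obstacle will be the uniform continuity bound in stage one. The support-repair perturbation must change the gain by an amount independent of $R$, yet naive Fannes bounds on $\rg{B}$ scale with $\log R$. My first attempt would be to rewrite the gain as a conditional entropy on the $q^2$-dimensional user side, conditioned on the untouched purifier, and to check that the repair acts as an isometry on the conditioning system.
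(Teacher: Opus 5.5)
Your stage two, the $\kexact=\kappa$ conclusion, and your idea of writing the gain as $S(\rg{Q}\rg{A}\mid\rg{F})$ so that Alicki--Fannes--Winter costs only $\beta_q$ all match the paper. Your pointwise penalty $2p\log q$ is a valid alternative to the paper's $f\log k$ plus $g_C\ge 0$.

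The gap is in stage one, which is where the theorem's content lies. There is no operation that ``coherently measures whether the vectorized Kraus action lies in $P$'': $P_{\mathcal K}$ acts on the Choi space $\rg{Q}\rg{A}$, which includes the user's reference. What does exist is the offline operator split $U=V+L_0$ with $V=\sum_{b,j}\Pi_{\mathcal K}(U_{bj})\otimes\ketbra{b}{j}$. But $L_0$ is not an orthogonal branch of the output, so it cannot be rerouted.

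Your normalization by $(\sum\tilde K^*\tilde K)^{-1/2}$ then undoes the support containment you wanted. The reason is that $\mathcal K$ is not closed under right multiplication: for dephasing, $\mathcal K$ is the diagonal matrices, and a non-diagonal positive factor takes you out of it. A channel-level normalization is also the wrong constraint. A collision with the same $\tau$ needs an isometry on $\rg{S}\rg{B}$, and polar-normalizing $V$ there again multiplies its coefficients on the right.

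The missing idea is the Gram completion of Lemma~\ref{lem:gram-completion}. Keep $V/\sqrt{1+t}$ and append $W=\sum_a A_a\otimes C_a$ with $C_a^*C_b=G_{ab}$, where $G=(t\one-(\Rmap\otimes\id)(V^*V-\one))/(1+t)\ge 0$ and $t=C_\Phi\lVert V^*V-\one\rVert$. Then $[V/\sqrt{1+t};W]$ is an exact isometry with every coefficient in $\mathcal K$. Uniformity in $R$ comes from the bath-dimension-independent amplification constant $C_\Phi$ of the right inverse $\Rmap$ of the Gram map, which your proposal never reaches.

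That completion needs $\lVert V^*V-\one\rVert$ to be small in operator norm. The leakage bound $\Tr((\one\otimes\tau)L_0^*L_0)/q\le\eta$ controls $L_0$ only on $\tau$-average, so on weakly weighted bath levels it can be of order one. Gentle measurement on the Choi state does not change this.

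The paper handles it by cutting the bath with $P=\mathbf{1}_{[0,\epsilon^2/q]}(\Tr_{\rg{S}}L_0^*L_0)$. This gives $\lVert L_0(\one\otimes P)\rVert\le\epsilon$ and $\Tr(\tau(\one-P))\le q^2\eta/\epsilon^2$. It completes only on the good sector and sends the bad sector through the target's pure Stinespring map, which stays inside $\mathcal K$. It then balances the $\sqrt{\epsilon}$ cost of $W$ against the bad-sector cost $2q\sqrt{\eta}/\epsilon$ at $\epsilon=\eta^{1/3}$. That is the source of $\eta^{1/6}$, not a compounding of gentle-measurement and polar-decomposition roots.

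This isometry-level bound is also what your gain step silently relies on. The continuity argument for $S(\rg{Q}\rg{A}\mid\rg{F})$ requires the joint pure outputs on $\rg{Q}\rg{A}\hat{\rg{B}}\rg{F}$ to be close uniformly in $\rho$. Closeness of the two channels' Choi states does not control the correlations with $\rg{F}$.
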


\begin{lemma}[amplified Gram completion]
\label{lem:gram-completion}
The fixed Kraus span admits the following almost-isometry completion with an
auxiliary-dimension-independent constant. Uniform amplification bounds for maps
into a fixed matrix algebra are standard, including Smith's lemma
\cite{Smith1983}. We give an explicit trace-slice bound and use it to complete
the Gram defect within the prescribed Kraus span. Fix a minimal Kraus family $A_a$ for
$\Phi$, $\mathcal{K}=\operatorname{span}A_a$ and
$\mathcal{D}=\operatorname{span}A_a^*A_b$. The Gram map $L(E_{ab})=A_a^*A_b$
maps $\one_k$ to $\one_q$. A Hermiticity-preserving right inverse $\Rmap$ on
$\mathcal{D}$ can be expressed as a finite sum of trace functionals times
fixed Hermitian matrices. The slice inequality bounds every amplification
$\Rmap\otimes\id$ by one constant $C_\Phi$ independent of the auxiliary
dimension.
\end{lemma}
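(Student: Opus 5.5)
The plan is to build $\Rmap$ explicitly from a dual-basis construction on $\mathcal{D}$, and then bound its amplifications by a slice inequality whose constant involves only the finitely many fixed matrices in the construction.

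\textbf{Step 1: $L$ is unital and Hermiticity preserving.} Write $L(X)=\sum_{a,b}X_{ab}A_a^*A_b$ for $X\in M_k$. Then
\[
L(X)^*=\sum_{a,b}\overline{X_{ab}}A_b^*A_a=L(X^*),
\]
so $L$ is Hermiticity preserving. Its image is $\mathcal{D}$, which is therefore closed under adjoints. The relation $\sum_a A_a^*A_a=\one_q$ gives $L(\one_k)=\one_q$, and in particular $\one_q\in\mathcal{D}$.

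\textbf{Step 2: a Hermitian basis with Hermitian preimages.} Choose a Hermitian basis $D_1=\one_q,D_2,\dots,D_m$ of $\mathcal{D}$; this is possible because $\mathcal{D}$ is a $*$-closed subspace. For each $j$ pick any preimage of $D_j$ and replace it by its Hermitian part. Because $L$ is Hermiticity preserving and $D_j$ is Hermitian, this gives Hermitian $Y_j\in M_k$ with $L(Y_j)=D_j$, and we may take $Y_1=\one_k$.

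\textbf{Step 3: Hermitian dual functionals.} The Hilbert--Schmidt Gram matrix $\bigl(\Tr(D_iD_j)\bigr)_{ij}$ is real, symmetric and invertible. Inverting it gives Hermitian $G_j\in\operatorname{span}_{\mathbb{R}}\{D_i\}$ with $\Tr(G_iD_j)=\delta_{ij}$.

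\textbf{Step 4: the right inverse.} Define
\[
\Rmap(X)=\sum_{j=1}^m\Tr(G_jX)\,Y_j .
\]
This is a finite sum of trace functionals times fixed Hermitian matrices, as the lemma requires. On $\mathcal{D}$ we have $L\Rmap=\id$ by duality. The map $\Rmap$ is Hermiticity preserving because each $\Tr(G_jX)$ is real for Hermitian $X$. Moreover $\Rmap(\one_q)=\one_k$ when the $D_i$ with $i\ge2$ are chosen Hilbert--Schmidt orthogonal to $\one_q$. The same formula extends $\Rmap$ to all of $M_q$ if that is convenient later.

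\textbf{Step 5: the amplification bound.} For any auxiliary dimension $n$ and any $Z$ on $\CC^q\otimes\CC^n$,
\[
(\Rmap\otimes\id_n)(Z)=\sum_jY_j\otimes\Tr_1\bigl((G_j\otimes\one_n)Z\bigr).
\]
The slice inequality I will prove is
\[
\bigl\lVert\Tr_1((G\otimes\one)Z)\bigr\rVert_1\le\lVert G\rVert_\infty\lVert Z\rVert_1 .
\]
It follows from trace-norm duality: pairing the slice with a contraction $W$ on $\CC^n$ gives $\Tr((G\otimes W)Z)$, and $\lVert G\otimes W\rVert_\infty\le\lVert G\rVert_\infty$. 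Combining this with $\lVert Y\otimes W\rVert_1=\lVert Y\rVert_1\lVert W\rVert_1$ yields
\[
\lVert(\Rmap\otimes\id_n)(Z)\rVert_1\le C_\Phi\lVert Z\rVert_1,\qquad C_\Phi=\sum_j\lVert Y_j\rVert_1\lVert G_j\rVert_\infty .
\]
The dual statement in operator norm holds with constant $\sum_j\lVert Y_j\rVert_\infty\lVert G_j\rVert_1$. Both constants depend only on the fixed Kraus family and not on $n$. This is the whole point of choosing the finite trace-functional form: a generic linear right inverse would only have its completely bounded norm controlled up to a factor that may grow with the auxiliary dimension, which is the issue Smith's lemma addresses.

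\textbf{Main obstacle: the completion step.} I expect the real difficulty to be how this is used, rather than the lemma itself. Given an almost-isometry with Kraus operators in $\mathcal{K}\otimes(\text{auxiliary})$, its Gram defect $\Delta=\one-\sum(\cdot)^*(\cdot)$ lies in $\mathcal{D}\otimes M_n$. One applies $\Rmap\otimes\id$ to obtain a Hermitian correction on $\CC^k\otimes\CC^n$ of size at most $C_\Phi\lVert\Delta\rVert$. That correction is then absorbed into the coefficient matrix, for instance by the square root $(\one+(\Rmap\otimes\id)(\Delta))^{1/2}$, which is valid once $C_\Phi\lVert\Delta\rVert<1$. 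The product form of $\Rmap\otimes\id$ is what guarantees that the corrected operators stay inside $\mathcal{K}\otimes M_n$. For this reason I will verify carefully that $L\otimes\id$ composed with $\Rmap\otimes\id$ is the identity on $\mathcal{D}\otimes M_n$; this follows termwise from Step 4.
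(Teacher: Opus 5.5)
The weak point is the completion step, which the lemma asserts ("complete the Gram defect within the prescribed Kraus span"). Your construction of $\Rmap$ is the same as the paper's: Hermitian basis of $\mathcal{D}$, Hermitian preimages, trace-dual functionals. Your slice inequality is also correct.

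The completion, however, needs the operator-norm bound $\lVert(\Rmap\otimes\id)(X)\rVert_\infty\le\sum_j\lVert Y_j\rVert_\infty\lVert G_j\rVert_1\,\lVert X\rVert_\infty$, which you mention only as a ``dual statement''. That constant is exactly the paper's $C_\Phi$. The paper proves it by diagonalizing each dual functional, so the amplified slice becomes a signed sum of diagonal compressions of $X$. In your last paragraph you write $C_\Phi\lVert\Delta\rVert$ with your trace-norm constant. But $\Delta$ is controlled in operator norm and what must be certified is an operator inequality, so you need the operator-norm constant. Separately, your remark that a generic right inverse could have amplifications growing with the auxiliary dimension is wrong. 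By Smith's lemma, any linear map into $M_k$ has completely bounded norm equal to the norm of its $k$-th amplification; the explicit form only makes the constant concrete.

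\textbf{The gap: the completion as you sketch it fails.} Correcting $V=\sum_aA_a\otimes B_a$ in place needs the new coefficient Gram matrix $[B_a^*B_b]_{ab}+(\Rmap\otimes\id)(\one-V^*V)$ to be positive, and it need not be, however small the defect.
\begin{itemize}
\item When $k>q$, $L$ has a kernel. So $[B_a^*B_b]$ is not determined by $V^*V$ and can be singular even when $V$ is an exact isometry.
\item Concretely, take $(1+s)V$ for such a $V$, with your normalization $\Rmap(\one_q)=\one_k$. The in-place Gram is $(1+s)^2[B_a^*B_b]-(2s+s^2)\one$, which is negative on the kernel.
\item Your matrix $\one+(\Rmap\otimes\id)(\Delta)$ is not that Gram matrix. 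Its block columns give $W=\sum_aA_a\otimes C_a$ with $W^*W=2\one-V^*V$. This is not an isometry and is not close to $V$ in general.
\end{itemize}

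\textbf{The missing idea} is to keep $V$, shrink it, and append an orthogonal block with the smallest admissible shift.
\begin{itemize}
\item Use the paper's sign $\Delta=V^*V-\one$, with $\lVert\Delta\rVert\le d$, and set $t=C_\Phi d$.
\item The matrix $G=(t\one-(\Rmap\otimes\id)(\Delta))/(1+t)$ is positive. This is exactly because $(\Rmap\otimes\id)(\Delta)$ is Hermitian with operator norm at most $t$.
\item The square-root block columns $C_a$ of $G$ give $W=\sum_aA_a\otimes C_a$ with $W^*W=(t\one-\Delta)/(1+t)$.
\item Hence $[V/\sqrt{1+t};\,W]$ is an exact isometry with coefficients in $\mathcal{K}$, and $\lVert W\rVert\le\sqrt{(t+d)/(1+t)}$. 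This holds for every $d$, not only when $C_\Phi d<1$.
\end{itemize}
The cost is an extra output sector $\CC^k\otimes\rg{H}_0$. That sector is where the $(k+1)R$ in Theorem~\ref{thm:active-repair} comes from.
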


\begin{proof}
Choose Hermitian bases $H_l$ of $\mathcal{D}$, Hermitian preimages $C_l$, and
trace-duals $F_l$. Then $\Rmap(X)=\sum_l \Tr(F_l X)\,C_l$ and one may take
\begin{equation}
\label{eq:gram-completion}
C_\Phi=\max\Bigl(1,\ \sum_l \lVert F_l\rVert_1\,\lVert C_l\rVert_\infty\Bigr).
\end{equation}
Diagonalizing $F_l$ expresses each amplified slice as a signed sum of diagonal
compressions of $X$; each compression has norm at most $\lVert X\rVert$. This
proves the bound at every dimension. Positivity of $\Rmap$ is neither asserted
nor needed.

For $V=\sum_a A_a\otimes B_a$ with $\Delta=V^*V-\one$ and
$\lVert\Delta\rVert\le d$, set $t=C_\Phi d$ and
$G=(t\one-(\Rmap\otimes\id)(\Delta))/(1+t)\ge 0$. Its square-root block
columns $C_a$ obey $C_a^*C_b=G_{ab}$. Thus $W=\sum_a A_a\otimes C_a$ has
$W^*W=(L\otimes\id)(G)=(t\one-\Delta)/(1+t)$. The stacked map
$[V/\sqrt{1+t};\,W]$ is therefore an exact isometry whose coefficients stay in
$\mathcal{K}$. Polar normalization need not preserve $\mathcal{K}$ and is not
used. Here the block columns $C_a$ map the input bath $\rg{H}_0$ into
$\CC^k\rg{H}_0$, and $\lVert W\rVert\le\sqrt{(t+d)/(1+t)}$.
\end{proof}

\begin{proof}[Proof of Theorem~\ref{thm:active-repair}]
The coefficient projection below is an offline matrix construction; the device
does not measure a Choi projector. Throughout this proof $\epsilon$ is a
cutoff parameter, distinct from the smoothing error $\varepsilon$ of
Definition~\ref{def:intrinsic-costs}.

For an $\eta$-close mixed collision $U,\tau$ implementing $\Psi$, choose a
bath eigenbasis $\tau=\sum_j\lambda_j\ketbra{j}{j}$ and write
$U=\sum_{b,j}U_{bj}\otimes\ketbra{b}{j}$. If $\Pi_{\mathcal{K}}$ is
Hilbert--Schmidt projection of system matrices onto $\mathcal{K}$, set
$V=\sum_{b,j}\Pi_{\mathcal{K}}(U_{bj})\otimes\ketbra{b}{j}$ and $L_0=U-V$. The
channel Kraus operators are $\sqrt{\lambda_j}\,U_{bj}$; the projection itself
is unweighted and the spectrum enters their Choi sum through $\lambda_j$. Let
$P_{\mathcal{K}}$ be the projector onto the vectorized Kraus span, the support
of $J_\Phi$. Normalized Choi leakage is
$p=\Tr((\one-P_{\mathcal{K}})J_\Psi)=\Tr((\one\otimes\tau)L_0^*L_0)/q\le\eta$.
Set $E=\Tr_{\rg{S}}L_0^*L_0$. The positive-operator inequality
$X\le q\,\one\otimes\Tr_{\rg{S}}X$ and the cutoff
$P=\mathbf{1}_{[0,\epsilon^2/q]}(E)$ give
\begin{equation}
\label{eq:cutoff}
\lVert L_0(\one\otimes P)\rVert\le\epsilon,\qquad
\Tr(\tau(\one-P))\le q^2\eta/\epsilon^2 .
\end{equation}
For completeness the domination follows by applying Cauchy--Schwarz to the $q$
vectors $X^{1/2}(\ket{i}\xi_i)$, then bounding each diagonal block $X_{ii}$ by
their sum. The cutoff-weight bound follows from
$E\ge(\epsilon^2/q)(\one-P)$ and $\Tr(\tau E)=qp$. Neither estimate assumes
$[\tau,P]=0$.

Repair $V$ on $\rg{S}P\rg{B}$ with the Gram construction, and use the target's
pure Stinespring map on $\rg{S}(\one-P)\rg{B}$, retaining the complete bad-bath
input via $\ket{\psi}\ket{b}\mapsto\sum_a A_a\ket{\psi}\ket{a}\ket{b}$ in the
sector $\CC^k(\one-P)\rg{B}$, orthogonal to $\rg{B}$ and $\CC^k P\rg{B}$ used
by the good map. The resulting isometry has bath output
$\hat{\rg{B}}=\rg{B}\oplus(\CC^k\otimes\rg{B})$, of dimension $(k+1)R$.
Initialize $\hat\tau=\tau\oplus 0$ and complete the initialized columns to an
active-only unitary. The old positive spectrum is unchanged, the old $\rg{F}$
remains inert, and the system channel's Choi support lies in the target face.
Off-diagonal cutoff coherences are retained in the global output; no pinching,
measurement or discard occurs.

With $d_\epsilon=2\epsilon+\epsilon^2$, $t_\epsilon=C_\Phi d_\epsilon$ and
$a_\epsilon=\epsilon+t_\epsilon/2+\sqrt{(t_\epsilon+d_\epsilon)/(1+t_\epsilon)}$,
the joint purified output distance is at most
$a_\epsilon+2q\sqrt{\eta}/\epsilon$, uniformly over inputs. Here
$\lVert\Delta\rVert\le d_\epsilon$ on the good sector because
$V(\one\otimes P)=U(\one\otimes P)-L_0(\one\otimes P)$ with $U$ an isometry
and $\lVert L_0(\one\otimes P)\rVert\le\epsilon$, so
$\lVert(\one\otimes P)(V^*V-\one)(\one\otimes P)\rVert\le 2\epsilon+\epsilon^2$.
The good-input operator distance is at most $a_\epsilon$: on that sector
$V/\sqrt{1+t_\epsilon}-U=(V-U)/\sqrt{1+t_\epsilon}+(1/\sqrt{1+t_\epsilon}-1)U$
has norm at most $\epsilon+t_\epsilon/2$, since $U$ is an isometry and
$1-1/\sqrt{1+t}\le t/2$, while the $W$ block contributes at most
$\sqrt{(t_\epsilon+d_\epsilon)/(1+t_\epsilon)}$. Taking
$\epsilon=\eta^{1/3}$ gives $\varepsilon_\eta=O_\Phi(\eta^{1/6})$. The bad
component of the same initial purification has vector norm at most
$q\sqrt{\eta}/\epsilon$; two isometries differ on it by at most twice that
norm. Pure-state half
distance is bounded by vector distance. Thus cutoff coherences are controlled
without removing them. At $\eta=0$ retain the original collision. Both gains
equal $S(\rg{Q}\rg{A}\mid\rg{F})$ on a $q$-dimensional input purification.
Thus the conditional-entropy bound of Winter \cite[Lemma 2]{Winter2016} gives
$\sup_\rho\lvert f_{\hat C}-f_C\rvert\le\beta_q(\varepsilon_\eta)$, where
$\beta_q(\varepsilon)=4\varepsilon\log q+(1+\varepsilon)
H_2(\varepsilon/(1+\varepsilon))$, for sufficiently small
$\varepsilon_\eta\le 1$. The bound involves $\dim\rg{Q}\rg{A}=q^2$; it has no
bath or purifier dimension factor.

Let $\mu$ be the smallest positive eigenvalue of $J_\Phi$ and
$\bar\eta=\eta+\varepsilon_\eta$. Inside its support,
$\lVert J_{\hat\Psi}-J_\Phi\rVert_\infty\le 2\bar\eta$ and
$J_\Phi\ge\mu P_{\mathcal{K}}$. Therefore $f=2\bar\eta/(\mu+2\bar\eta)$
ensures $J_\Phi-(1-f)J_{\hat\Psi}\ge 0$. Its partial trace is $f\one/q$, so
division by $f$ defines a CPTP correction $\Theta$ of Kraus rank at most $k$.
A counted direct-sum collision implements $\Phi=(1-f)\hat\Psi+f\Theta$ exactly
and obeys
\begin{equation}
\label{eq:pointwise-flag}
g_{\mathrm{exact}}\le g_C+\beta_q(\varepsilon_\eta)+f\log k .
\end{equation}
Indeed $g_{\mathrm{exact}}\le(1-f)(g_C+\beta_q(\varepsilon_\eta))+f\log k$,
and $g_C\ge f_C(\pi_q)\ge 0$ by entropy conservation and subadditivity. This
uses the function-level flag identity. At zero error the correction is
omitted. Final active dimension is at most $(k+1)R+k$. If the original
positive eigenvalues are $\lambda_i$, the exact collision's eigenvalues are
$((1-f)\lambda_i,f)$, with entropy $H_2(f)+(1-f)S(\tau)$. Thus spectrum
preservation belongs to support repair; exactification changes the spectrum
explicitly. The initial entropy itself need not change uniformly little as $R$
grows. That last sentence describes the repair; it is not a gap in what the
repair hands on. Theorem~\ref{thm:causal-balancing} accepts any fixed finite
exact collision together with whatever finite spectrum that collision actually
has. It nowhere asks the exact witness's spectrum to be close to the
approximate one's, and it uses no bound on eigenvalue ratios, on bath
dimension or on concentration constants. Section~\ref{sec:region} selects a
witness and fixes all of its dimensions, spectra and constants for each $m$
before any horizon is chosen, so a spectrum that jumps at exactification costs
nothing in the diagonal that follows.

Choose a sufficiently small fixed $\eta$ for each gain slack, use
Theorem~\ref{thm:uniform-gain} to choose one finite approximate witness, and
exactify it. The uniform penalty gives $\kexact\le\kappa$; exact witnesses
belong to every feasible set of Theorem~\ref{thm:uniform-gain}, giving the
reverse inequality. No horizon is selected in this argument. An exact pure
Stinespring collision has gain $h$, so replacing any witness with gain above
$h$ by this pure collision yields exact witnesses with $0\le g\le h$ and
$g\to\kappa$. These, with their actual arbitrary finite spectra, are the
inputs Theorem~\ref{thm:causal-balancing} accepts. This closes the proof
without using Theorem~\ref{thm:causal-balancing} or a horizon limit.
\end{proof}

\begin{corollary}[pointwise smoothing and privacy funnel]
\label{cor:pointwise-funnel}
Using the lifting construction in Theorem~\ref{thm:uniform-gain} and the
pointwise repair in Theorem~\ref{thm:active-repair}, for each fixed full-rank
$\rho$ one has
\[
\lim_{\varepsilon\downarrow 0}k_{\rho,\varepsilon}=K_0(\rho),
\qquad
\kappa=\sup_{\rho>0}\bigl[S(\chi_\rho)-\Pfun^{\psi_{\rg{E}\rg{Q}}}(0)\bigr].
\]
\end{corollary}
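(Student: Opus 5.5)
The proof is a two-sided sandwich at a fixed full-rank $\rho$. After that, the funnel formula follows from \eqref{eq:exact-funnel} by taking the supremum over $\rho$.

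\textbf{Easy inequality.} Since $k_{\rho,\varepsilon}$ is nonincreasing in $\varepsilon$, its limit exists as $\varepsilon\downarrow0$. Every exact extension, with $\sigma_{\rg{Q}\rg{A}}=\chi_\rho$ and $\sigma_{\rg{Q}\rg{Z}}=\rho^{T}\otimes\sigma_{\rg{Z}}$, is feasible for every $\varepsilon>0$. Hence $k_{\rho,\varepsilon}\le K_0(\rho)$ for all $\varepsilon$, and so $\lim_{\varepsilon\downarrow0}k_{\rho,\varepsilon}\le K_0(\rho)$.

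\textbf{Reverse inequality.} Fix $\varepsilon>0$ and a feasible extension $\sigma$ with value within $\gamma$ of $k_{\rho,\varepsilon}$. The lifting construction in the proof of Theorem~\ref{thm:uniform-gain} turns $\sigma$ into a finite collision $C$ with three properties:
\begin{itemize}
\item $\Ddiam(\Phi_C,\Phi)\le\eta:=\varepsilon/\lambda_{\min}(\rho)$;
\item $f_C(\rho)=S(\rg{Q}\rg{A}\mid\rg{Z})_\sigma$;
\item its initial spectrum is that of $\sigma_{\rg{Z}}$, with the purifier playing the role of $\rg{Z}$.
\end{itemize}
For $\varepsilon$ small, $\eta$ lies below the threshold of Theorem~\ref{thm:active-repair}. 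Apply support repair and then flagged exactification to $C$. This gives an exact collision $C'$ with $\Phi_{C'}=\Phi$.

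I use the pointwise form of the penalty, not the maximized one. Support repair changes $f$ at $\rho$ by at most $\beta_q(\varepsilon_\eta)$ (Winter's bound, applied to the same input purification). The flag identity then gives
\[
f_{C'}(\rho)\le(1-f)\bigl(f_C(\rho)+\beta_q(\varepsilon_\eta)\bigr)+f\log k,
\qquad f=2\bar\eta/(\mu+2\bar\eta).
\]
Both $\varepsilon_\eta$ and $f$ tend to zero with $\varepsilon$, uniformly in the bath dimension and spectrum. The constants depend only on $\Phi$ and on the fixed $\lambda_{\min}(\rho)$.

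\textbf{Converting $C'$ back into an extension.} Now read $C'$ as an exact extension. Run $C'$ on $\psi_\rho$ and let $\rg{Z}$ be the purifier $\rg{F}'$ of its initial bath state. Then $\sigma'_{\rg{Q}\rg{Z}}$ is exactly product, because the bath purifier is untouched and starts in product with $\rg{Q}$. Also $\sigma'_{\rg{Q}\rg{A}}=\chi_\rho$ exactly, since $\Phi_{C'}=\Phi$. By purity, $f_{C'}(\rho)=S(\rg{Q}\rg{A}\mid\rg{F}')$. Therefore
\[
K_0(\rho)\le f_{C'}(\rho)\le k_{\rho,\varepsilon}+\gamma+\beta_q(\varepsilon_\eta)+f\log k .
\]
Letting $\gamma\to0$ and then $\varepsilon\downarrow0$ gives $K_0(\rho)\le\lim_{\varepsilon\downarrow0}k_{\rho,\varepsilon}$.

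\textbf{Conclusion.} Combining this with \eqref{eq:exact-funnel} and with the definition $\kappa=\sup_{\rho>0}\lim_{\varepsilon\downarrow0}k_{\rho,\varepsilon}$ yields the displayed formula.

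\textbf{Main obstacle.} The delicate step is checking that the repair bounds of Theorem~\ref{thm:active-repair} hold pointwise in $\rho$, for the $f_C(\rho)$ coming from this particular lift. The approximate collision is only $\eta$-close in diamond norm, with $\eta$ scaled by $1/\lambda_{\min}(\rho)$. The repair penalty must therefore not depend on $\tau$, on $R$, or on the unbounded auxiliary dimension, and that is exactly what the theorem asserts. What remains to verify is the following: the maximized statement \eqref{eq:pointwise-flag} was derived from a supremum bound $\sup_\rho\lvert f_{\hat C}-f_C\rvert\le\beta_q$ together with the function-level flag identity. That same argument yields the pointwise inequality at the one fixed $\rho$.
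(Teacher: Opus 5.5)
Your route is the same as the paper's. Exact extensions are $\varepsilon$-feasible, which gives $k_{\rho,\varepsilon}\le K_0(\rho)$. The reverse comes from lifting a near-optimal $\varepsilon$-feasible extension to a collision at error $\eta=\varepsilon/\lambda_{\min}(\rho)$, repairing and exactifying it, and reading the exact collision back as an exact extension.

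\textbf{The gap.} In your final chain you pass from $(1-f)\bigl(f_C(\rho)+\beta_q(\varepsilon_\eta)\bigr)+f\log k$ to $f_C(\rho)+\beta_q(\varepsilon_\eta)+f\log k$. That step is valid only if $f_C(\rho)+\beta_q(\varepsilon_\eta)\ge 0$. In the maximized bound \eqref{eq:pointwise-flag} this sign comes from $g_C\ge f_C(\pi_q)\ge 0$, but that does not transfer to a fixed $\rho$. Your closing claim that ``the same argument yields the pointwise inequality'' therefore fails exactly here.

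Pointwise gains, and even $K_0(\rho)$, can be negative. A SWAP with a maximally mixed $q$-dimensional bath implements the completely depolarizing channel with $f_C(\rho)=S(\rho)-\log q<0$ for every $\rho\neq\pi_q$, and this is the optimal extension value for that channel. When $f_C(\rho)+\beta_q<0$, the factor $(1-f)$ raises the bound by $f\,\lvert f_C(\rho)+\beta_q(\varepsilon_\eta)\rvert$. So your displayed inequality $K_0(\rho)\le k_{\rho,\varepsilon}+\gamma+\beta_q(\varepsilon_\eta)+f\log k$ is not established.

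\textbf{The fix.} Do what the paper does and use the bath-independent lower bound $f_C(\rho)\ge S(\rho)-\log q\ge-\log q$ from the proof of Theorem~\ref{thm:uniform-gain}. This gives $f_{\mathrm{exact}}(\rho)\le f_C(\rho)+\beta_q(\varepsilon_\eta)+f(\log k+\log q)$. Since $f\to 0$ as $\varepsilon\downarrow 0$ at fixed $\Phi$ and $\rho$, your limit argument then goes through unchanged. The paper states explicitly that this pointwise lower bound is needed and that $g_C\ge 0$ alone would not suffice.

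\textbf{A lesser omission.} You take \eqref{eq:exact-funnel} as given. The paper's proof also verifies the parametrization behind it: purify an exact extension and use purification uniqueness to get an isometry from $\supp\psi_{\rg{E}}$ into $\rg{Z}\rg{C}$. Then trace $\rg{C}$ and extend the map arbitrarily off the support, which also covers singular $\psi_{\rg{E}}$.
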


\begin{proof}
To justify the exact mapping in Definition~\ref{def:intrinsic-costs}, purify
any exact extension on $\rg{Q}\rg{A}\rg{Z}\rg{C}$. Uniqueness of purification
gives an isometry from $\supp\psi_{\rg{E}}$ to $\rg{Z}\rg{C}$; tracing
$\rg{C}$ and extending the map arbitrarily outside that support gives a
channel $\rg{E}\to\rg{Z}$. Conversely such a channel yields an extension,
proving the claimed parametrization even when $\psi_{\rg{E}}$ is singular.

For stability, fix $\ell=\lambda_{\min}(\rho)>0$. Any $\varepsilon$-feasible
extension lifts to a finite collision $C$ at error at most
$\eta=\varepsilon/\ell$, with its objective equal to $f_C(\rho)$. The support
repair above changes this pointwise gain by at most
$\beta_q(\varepsilon_\eta)$. Let $w_\eta$ be the final correction weight.
Function-level flag mixing, $f_\Theta(\rho)\le\log k$, and
$f_C(\rho)\ge S(\rho)-\log q\ge-\log q$ give
\[
f_{\mathrm{exact}}(\rho)\le f_C(\rho)+\beta_q(\varepsilon_\eta)
+w_\eta(\log k+\log q).
\]
The exact collision supplies an exactly feasible extension. Taking arbitrarily
small optimization slack at each positive $\varepsilon$ below the
$\Phi$-dependent threshold of Theorem~\ref{thm:active-repair}, applied at
$\eta=\varepsilon/\ell$, therefore gives
\begin{equation}
\label{eq:smoothing-squeeze}
k_{\rho,\varepsilon}\ \le\ K_0(\rho)\ \le\ k_{\rho,\varepsilon}
+\beta_q(\varepsilon_{\varepsilon/\ell})
+w_{\varepsilon/\ell}(\log k+\log q).
\end{equation}
The two penalties vanish at fixed $\Phi$ and $\rho$ independently of the
extension dimension. This proves the equality before taking the full-rank
supremum. No infimum is interchanged with a limit, and no finite
auxiliary-dimension bound, optimizer, or uniformity as $\ell$ tends to zero is
asserted. The pointwise lower bound is needed here; $g_C\ge 0$ alone would not
suffice.
\end{proof}

This reformulation follows from this paper's repair, not from the prior
privacy-funnel definition, and does not alter the order in
Definition~\ref{def:intrinsic-costs}.

\section{Causal balancing of a fixed exact collision}
\label{sec:balancing}

\begin{theorem}[fixed exact witness]
\label{thm:causal-balancing}
Fix one finite exact collision $C=(U,\tau)$ for $\Phi$, with full active cell
dimension $R$. Write $\sigma=S(\tau)$, $v=\max_\rho S(\Gamma_C(\rho))$,
$g=v-\sigma$ and suppose $g\le h$. Then an all-horizon family in
Definition~\ref{def:closed-device} achieves vanishing complete adaptive error
and actual rate limits
\[
r=b=\frac{h+g}{2},\qquad s=a=\frac{h-g}{2}.
\]
Its complete initial positive spectrum is flat. More specifically it has a
pure clock, a maximally mixed seed of dimension $J_T$, $K_T$ mixed data
qubits, and pure remaining data, with $\log J_T/T\to 0$ and $K_T/T\to a$
separately. No uniform bound on the fixed witness's dimension, eigenvalue
ratios, concentration constants or circuit complexity is imposed.
\end{theorem}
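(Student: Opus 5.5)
The plan is to run the fixed exact collision $C$ once per visit on a fresh cell, and to keep the total bath linear in $T$ at rate $(h+g)/2$ by recycling cells: an encoder-only fully quantum Slepian--Wolf (state-merging) split compresses what each spent cell holds. A direct cell-per-visit construction would give dimension rate $\log R$ and entropy rate $\sigma$, both fixed but not optimal. The target pair satisfies $r+s=h$ and $r-s=g$, so the construction must store exactly the conditional content $v-\sigma$ plus half the gap $h-g$ as real data, and supply the other half from flat initial entropy.

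\textbf{Block structure and the collision cells.} Group the $T$ visits into blocks of length $n$, with $n\to\infty$ slowly in $T$, and allocate $n$ copies of the cell $\rg{B}$ for each block. Prepare $\tau^{\otimes n}$ reversibly from a flat stock. The stock consists of $K_T$ maximally mixed qubits, roughly $n\sigma+o(n)$ of them per block, decoded by a fixed reversible typical-subspace preparation with a small counted failure flag. A pure clock register of size $\lceil\log T\rceil$ tells $W_T$ which cell to act on. $W_T$ is the controlled sum over clock values of ``apply $U$ on cell $t$, increment clock''. This is a single unitary, and exactness of $C$ makes each visit exactly $\Phi$ on the input, jointly with the cell's purifier. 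Because $C$ is exact, any deviation from the ideal complete state comes only from the recycling step, never from individual visits.

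\textbf{Recycling.} After a block, the $n$ spent cells hold $\Gamma_C(\rho_1)\otimes\cdots$, correlated with the user's references through the purifier. Their entropy is at most $nv$, but the user's adaptive inputs are arbitrary. I would apply an encoder-only FQSW split to the spent cells, using a Haar-type unitary fixed offline and derandomized by the one shared seed of dimension $J_T$. The split sends the cells into a part $M$ of about $n(h+g)/2$ qubits, which is retained permanently as counted residue, and a remainder $N$. The decoupling second moment shows that $N$ is nearly maximally mixed and product with the entire exterior (user plus future-accessed registers) when its size is below $n(h-g)/2$. Roughly, half the gap between the maximal environment entropy $h$ and the conditional content is free to decouple. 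That nearly flat $N$ is then reused as the mixed stock for the next block's $\tau^{\otimes n}$ preparation after a reversible flattening. Fresh stock is therefore needed only initially, and the rates work out to $s=(h-g)/2$ and $r=(h+g)/2$. The $o(T)$ overhead from the seed, clock and flags disappears in the rates. Uniformity over adaptive controllers comes from spectral concentration. The decoupling bound depends only on the smooth min-entropy of the spent cells conditioned on the purifier, which per cell is at least $-\max_\rho(\ldots)$ regardless of the chosen input. The bound therefore holds for adaptively chosen product-less inputs, by a de Finetti-free argument: concatenate per-cell conditional min-entropy chain rules.

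\textbf{Main obstacle.} The hard step is reusing one physical seed across all blocks while keeping it jointly independent of the whole future-accessed exterior against an adaptive user. I would handle this with a telescoping hybrid argument. Replace each block's recycled $N$ by an ideal maximally mixed product one at a time, charging each replacement its decoupling error $2^{-\Omega(n)}$ via the second moment averaged over the seed. The seed is returned because the encoder is a controlled unitary conditioned on it, and averaging makes it independent. Summing gives total error $(T/n)2^{-\Omega(n)}+o(1)$, which tends to zero with $n\sim\sqrt{T}$. After the replacements, a flat counting of $J_T$, $K_T$ and the pure remaining data gives the flat positive spectrum claimed.
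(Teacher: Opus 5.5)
Your architecture matches the paper's: an exact collision at each visit, a fixed typical-support compressor, encoder-only FQSW with a retained seed, the decoupled share recycled as stock and the rest parked, a block hybrid, and a clock. The gap is that the register budget, which is the whole content of the rate claim, does not close. In the paper's names, $\rg{M}$ is the recycled share, $\rg{G}$ the parked one, and $N\approx nv$ the compressed width.

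After compression, a block's spent cells occupy about $nv$ qubits, not $nh$. Your parked size $nb$ is right, so the remainder has about $n(v-b)=n(\sigma-a)=n(\sigma+v-h)/2$ qubits. That, not $n(h-g)/2=na$, is exactly the decoupling threshold. The second moment $\sqrt{m^2K\lambda/D_{\rg{C}}}$ with $K\approx 2^{nh}$, $\lambda\approx 2^{-n\sigma}$ and $D_{\rg{C}}\approx 2^{nv}$ requires $2\log m<n(\sigma+v-h)$; the paper's exponent check is $\sigma-2a+h-v=0$. Your two sizes sum to $nh$ and agree with these only when $v=h$.

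The next block needs about $n\sigma$ flat qubits to prepare $\tau^{\otimes n}$, and recycling covers only $n(\sigma-a)$ of them. So every block must draw about $na$ untouched maximally mixed and about $ng$ untouched pure qubits from the initial bath, and that per-block draw is what produces $s=a$. ``Fresh stock is needed only initially'' would give $s=0$ with $r=b<h$ whenever $g<h$, contradicting $r+s\ge h$ in Theorem~\ref{thm:same-spectrum-converse}. Your other count, about $n\sigma$ stock qubits per block, gives $s=\sigma$ instead.

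Lemma~\ref{lem:integer-wires} is the missing bookkeeping. The $m_i\approx n\sigma$ input flat qubits are $m_o\approx n(\sigma-a)$ recycled qubits plus $c\approx na$ fresh ones. Each block's tranche of $z=c+d\approx nb$ qubits permanently absorbs the preparation residue and the $N-m_o$ qubits of $\rg{G}$. The working cells are formed by routing on exact integer wires, with the pure workspace $\rg{W}$ returned. Allocating new cells per block, as your text can be read, would by itself cost rate at least $\log R\ge b$.

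The decoupling step also needs the right conditioning and a concrete mechanism. The recycled share only has to be independent of a purification $\rg{H}$ of the complete future-accessed exterior: user, references and unused stock. It may stay correlated with the inert purifiers $\rg{F}^n$ of the consumed stock, and that is precisely where $\sigma$ enters. If ``the purifier'' in your min-entropy includes $\rg{F}^n$, nothing can be recycled in the worst case. A per-cell chain rule for smooth min-entropy under adaptive, entangled inputs is also not available off the shelf. The paper instead uses three pieces:
\begin{itemize}
\item a tester-independent spectral projector built from the entropy-maximizing output (Lemma~\ref{lem:adaptive-projector});
\item the cap $2^{-n(\sigma-\gamma)}/p$ on $\rg{B}^n\rg{H}$, obtained from purity with $\rg{F}^n$;
\item the rank bound $2^{n(h+\gamma)}$ on $\rg{H}$, from a virtual run with fresh minimal pure dilations that has the same exterior marginal (Lemma~\ref{lem:exterior-cuts}). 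This is where $h$ enters.
\end{itemize}
Finally, the slack $\gamma$ must vanish, so block errors are $\exp(-cn\gamma^2)$ and $2^{-cn\gamma}$, not $2^{-\Omega(n)}$. If the seed indexes the Clifford group, its $O(n^2)$ bits make $\log J_T$ linear in $T$ when $n\sim\sqrt{T}$; the paper takes $n=\lfloor T^{1/3}\rfloor$ and $\gamma=n^{-1/4}$.
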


\begin{proof}
Feeding $\pi_q$ and using total entropy conservation gives $v\ge\sigma$. For a
purified input, $\rg{B}\rg{F}$ is a complementary output and $\rg{F}$ has
entropy $\sigma$; hence $h\le v+\sigma$. Therefore $0\le a\le\sigma$, $b=a+g$
and $v\le\log R$. These are the only resource inequalities used to allocate
the construction.

Appendix~\ref{app:balancing} supplies the complete proof.
Lemma~\ref{lem:adaptive-projector} gives uniform adaptive spectral
concentration. Lemma~\ref{lem:exterior-cuts} transfers the virtual
minimal-environment support to a purification of the complete exterior and
retains the normalization in the simultaneous projector bound.
Lemma~\ref{lem:seed-decoupling} proves the second moment with the physical
seed still present. Lemma~\ref{lem:integer-wires} constructs the nonflat cell
marginal reversibly from flat stock on exact integer wires.
Proposition~\ref{prop:joint-recycling} proves joint recycling and every
prefix's error with disjoint parked residues.
Proposition~\ref{prop:actual-spectrum} counts the clock, actual spectrum and
every-horizon limit. All are proved in this paper.
\end{proof}

The encoder is the sender's unitary from fully quantum Slepian--Wolf:
sender $=\rg{C}$, retained share $=\rg{M}$, message $=\rg{G}$, receiver
$=\rg{N}$ and reference $=\rg{H}$; the receiver is written $\rg{N}$ here to
keep $\rg{F}$ for the purifiers of Definition~\ref{def:closed-device} and
Appendix~\ref{app:balancing}. Its decoder acts on $\rg{G}\rg{N}$, so omitting
that decoder leaves $\rg{M}\rg{H}$ unchanged. The physical device
parks and counts $\rg{G}$. The split and its half-sum rates are established
prior ingredients, specifically Abeyesinghe, Devetak, Hayden and Winter
\cite[Theorems IV.1--IV.2, Lemma IV.5 and Eq.~(30), pp.~6--9, with the
asymptotic rate choice following Eq.~(35) in section VII]{Abeyesinghe2009}.
The present appendix supplies the adaptive spectral control, reversible
flat-stock preparation and counted reuse of seed and workspace needed to turn
this split into the stated closed device. These implementation steps are part
of the operational synthesis.

\section{Assembling the region}
\label{sec:region}

\begin{proof}[Proof of Theorem~\ref{thm:rate-region}]
The argument uses Theorems~\ref{thm:same-spectrum-converse},
\ref{thm:uniform-gain}, \ref{thm:active-repair} and
\ref{thm:causal-balancing}.

First apply the converse bounds of Theorem~\ref{thm:same-spectrum-converse} to
an arbitrary all-horizon family, writing $x_T=\log R_T/T$ and
$y_T=S(\omega_T)/T$. For each $\zeta>0$, eventually \emph{both}
$x_T-y_T\ge\kappa-\zeta$ and $x_T+y_T\ge h-\zeta$. Their common actual
initialization therefore gives $2x_T\ge h+\kappa-2\zeta$, regardless of
whether $y_T$ converges. Thus $\liminf x_T\ge(h+\kappa)/2$ and $\cseq$ has
this lower bound. For rate-limit pairs, $s\ge 0$, $r+s\ge h$ and
$r-s\ge\kappa$ follow and survive closure. The definition gives
$0\le\kappa\le h$: $S(\rg{Q}\rg{A}\mid\rg{Z})\ge S(\rho)-\log q$ is
nonnegative at $\rho=\one/q$, while the exact trivial-$\rg{Z}$ extension has
entropy $S(\chi_\rho)\le h$.

Theorems~\ref{thm:uniform-gain} and \ref{thm:active-repair} give
$\kexact=\kappa$. For each $m\ge 1$ select a finite \emph{exact} collision
with gain $<\kappa+1/m$. Replace it by a pure exact Stinespring collision of
gain $h$ if its gain exceeds $h$. Then $\kappa\le g_m\le h$ and
$0\le g_m-\kappa<1/m$. This also handles $\kappa=h$ without assuming
attainment. Set $a_m=(h-g_m)/2$ and $b_m=(h+g_m)/2$. All dimensions, spectra
and concentration constants are fixed separately for each $m$.

Theorem~\ref{thm:causal-balancing} supplies, for each fixed witness,
all-horizon devices with error $\delta_{m,T}\to 0$ and rates $(b_m,a_m)$.
Their exact initialization is
\begin{equation}
\label{eq:rate-diagonal}
\begin{aligned}
\omega_{m,T}&=\ketbra{0}{0}_{\mathrm{clock}}\otimes\pi_{J_{m,T}}
\otimes\pi_{2^{K_{m,T}}}\otimes\ketbra{0}{0}_{\mathrm{rest}},\\
R_{m,T}&=(T+1)\,J_{m,T}\,2^{A_{m,T}},\qquad
S(\omega_{m,T})=\log J_{m,T}+K_{m,T}.
\end{aligned}
\end{equation}
Here $0\le K_{m,T}\le A_{m,T}$ are integers. Theorem~\ref{thm:causal-balancing}
proves separately that $\log J_{m,T}/T\to 0$ and $K_{m,T}/T\to a_m$. Choose a
finite $H_m$ such that for \emph{every} $T\ge H_m$, the error and each of
\[
\Bigl\lvert\frac{\log R_{m,T}}{T}-b_m\Bigr\rvert,\quad
\Bigl\lvert\frac{S(\omega_{m,T})}{T}-a_m\Bigr\rvert,\quad
\frac{\log J_{m,T}}{T},\quad
\Bigl\lvert\frac{K_{m,T}}{T}-a_m\Bigr\rvert
\]
are at most $1/m$. Take $T_1=\max(1,H_1)$ and
$T_{m+1}=\max(H_{m+1},T_m+1)$. For $T\ge T_1$ select the complete $m$-th
device with $m=m(T)=\max\{j: T_j\le T\}$; use exact pure preallocated cells
for earlier $T$. Since $T_m\ge m$, this maximum is finite; for every $M$,
$T\ge T_M$ implies $m(T)\ge M$. Thus $m(T)\to\infty$ over all horizons. Every
selected $T$ lies beyond $H_m$, so the bounds hold throughout
$T_m\le T<T_{m+1}$. With
\begin{equation}
\label{eq:corner}
r_0=\frac{h+\kappa}{2},\qquad s_0=\frac{h-\kappa}{2},
\end{equation}
the actual rate errors are at most $3/(2m(T))$, the adaptive error is at most
$1/m(T)$, $K_T/T\to s_0$, and $\log J_T/T\to 0$. The separate seed threshold
is needed for the region proof; it must not be inferred merely from total-rate
limits. Hardware and its actual initializer are selected before the user for
each $T$; no witness changes online. Each elementary collision implements
\emph{exact} $\Phi$, so there is no accumulated $T\eta$ term. This proves
$\cseq=r_0$ with actual entropy rate $s_0$. No effective thresholds or
witness-size bound are claimed for this achievability half. The converse half
is the half that constrains an actual device: for every finite horizon,
Theorem~\ref{thm:same-spectrum-converse}'s two inequalities hold with explicit
constants in terms of that horizon's own error $\delta_T$, although the
extension-cost term $k_{\rho,\delta_T}$ itself is an infimum that is not
asserted to be computable. Only the thresholds $H_m$ above are non-effective.

The selected base family has $P_T=J_T 2^{K_T}$ equal positive eigenvalues
$1/P_T$ and $R_T-P_T$ zeros. To reach $0\le s\le s_0$ put $d=s_0-s$ and
$l_T=\min(K_T,\lfloor dT\rfloor)$. Then $l_T/T\to d$, including the clipped
endpoint $s=0$. Replace $l_T$ of the initially mixed data qubits by halves of
Bell pairs with $l_T$ \emph{new}, counted active partner qubits $\rg{P}$. Keep
the other initial factors unchanged and extend the repeated unitary to
$W_T\otimes\one_{\rg{P}}$. The new old-bath marginal is \emph{exactly}
$\omega_T$. Partial trace over $\rg{P}$ commutes with every device and
adaptive-user operation, so the entire original marginal process and its
complete adaptive error are unchanged. This uses a new offline initializer,
not an operation on the original inaccessible purifier.

Exactly, the new dimension is $R_T 2^{l_T}$, its positive rank is
$P_T^-=J_T 2^{K_T-l_T}$, and its positive spectrum is $1/P_T^-$ repeated
$P_T^-$ times, with zeros elsewhere. Its entropy is $S(\omega_T)-l_T$.
Therefore the limiting pair is $(r_0+d,s_0-d)=(h-s,s)$. At $s=0$ the remaining
entropy is $\log J_T+\max(K_T-\lfloor s_0T\rfloor,0)=o(T)$; a zero entropy
\emph{rate} does not require exactly pure finite-horizon initialization.
Partners remain counted and untouched, and a purifier of the new initializer
remains inaccessible.

For $s\ge s_0$ instead append $n_T=\lfloor(s-s_0)T\rfloor$ unused maximally
mixed qubits in product, extending $W_T$ by identity. Dimension and positive
rank both multiply by $2^{n_T}$ and entropy increases by $n_T$. The positive
spectrum remains flat and the user process is unchanged. This gives
$(\kappa+s,s)$. For any $r\ge r_{\min}(s)=\max(h-s,\kappa+s)$, append
$\lfloor(r-r_{\min}(s))T\rfloor$ unused pure qubits. This increases dimension,
leaves positive rank and entropy unchanged, and adds only zero eigenvalues.
All rounding errors vanish after division by $T$.

Every finite rate pair in the stated region is thus realized with actual rate
limits; the set is already closed. At each finite horizon,
$\rank\le\text{dimension}$ and $S\le\log R$ hold: internal purification
decreases rank and increases dimension, mixed padding multiplies both equally,
and pure padding only increases dimension. Asymptotically $r\ge s$ follows
from $\kappa\ge 0$. At $q=1$ the trivial bath-free device and padding realize
$r\ge s\ge 0$. If $\kappa=h$ then $s_0=0$; if $\kappa=0$ then $r_0=s_0=h/2$.
Both endpoints are covered by the same integer constructions.

The dependency order is acyclic: Theorem~\ref{thm:same-spectrum-converse} is
an independent converse; the fixed-error minimax of
Theorem~\ref{thm:uniform-gain} precedes the exactification of
Theorem~\ref{thm:active-repair}; Theorem~\ref{thm:causal-balancing} acts on
one fixed exact witness without Theorems~\ref{thm:uniform-gain} and
\ref{thm:active-repair}; the all-horizon diagonal then combines these inputs.
Internal purification and padding are last. The encoder-only Slepian--Wolf
attribution of Section~\ref{sec:balancing} remains in force.
\end{proof}

\section{Examples and limitations}
\label{sec:examples}

\begin{proposition}[unitary channels]
\label{prop:unitary}
For $\Phi(\rho)=V\rho V^*$, $h=\kappa=0$ and the minimum memory rate is zero.
\end{proposition}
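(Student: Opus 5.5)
The plan is to show $h=0$ and $\kappa=0$ directly from the definitions, and then read the minimum rate $(h+\kappa)/2=0$ off Theorem~\ref{thm:rate-region}.

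\emph{Step 1: $h=0$.} A minimal pure Stinespring dilation of $\Phi(\rho)=V\rho V^*$ is $V$ itself, with a one-dimensional environment. Hence $\Phi^c(\rho)=\Tr\rho=1$ is the trivial state on $\CC$, and $h(\Phi)=\max_\rho S(\Phi^c(\rho))=0$.

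\emph{Step 2: $\kappa=0$.} I would bound $\kappa$ from both sides.

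\emph{Upper bound.} For every full-rank $\rho$, the state $\chi_\rho=(\id\otimes\Phi)(\psi_\rho)=(\one\otimes V)\psi_\rho(\one\otimes V^*)$ is pure. Take the exact extension with trivial $\rg{Z}$. It satisfies the product constraint trivially and has error $0\le\varepsilon$, so
\[
k_{\rho,\varepsilon}\le S(\chi_\rho)=0 .
\]

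\emph{Lower bound.} Any feasible $\sigma$ has a pure $\rg{Q}\rg{A}$ marginal up to error $\varepsilon$. The cleanest route is to use the bound $k_{\rho,\varepsilon}\ge S(\rho)-\log q$ stated after Definition~\ref{def:intrinsic-costs}, combined with the general fact $\kappa\ge 0$ proved in Section~\ref{sec:region} (evaluation at $\rho=\one/q$). Together these give $\kappa=\sup_{\rho>0}\lim_{\varepsilon\downarrow0}k_{\rho,\varepsilon}\le 0\le\kappa$.

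\emph{Alternative route.} One could instead invoke Theorem~\ref{thm:active-repair}'s identity $\kexact=\kappa$. The unitary collision with trivial bath $\rg{B}=\CC$ and $\tau=1$ has $\Gamma_C(\rho)=1$, so $g_C=0$. This gives $\kexact\le 0$, and $\kexact\ge 0$ follows from $g_C\ge f_C(\pi_q)\ge0$.

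\emph{Step 3: minimum rate.} By Theorem~\ref{thm:rate-region}, the minimum dimension rate is $(h+\kappa)/2=0$, attained at entropy rate $0$. In fact no asymptotics are needed: the bath-free device $R_T=1$ with $W_T=V$ has $\delta_T=0$ at every horizon. This directly realizes $(r,s)=(0,0)$ and also shows that $\cseq(\Phi)=0$.

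\emph{Expected obstacle.} The only delicate point is the lower bound $\kappa\ge0$. For that step I would rely on the paper's own argument in the proof of Theorem~\ref{thm:rate-region} rather than re-derive it. Everything else is immediate from purity of $\chi_\rho$.
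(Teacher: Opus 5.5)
Your proposal is correct and follows essentially the same route as the paper. A one-dimensional minimal environment gives $h=0$, the trivial-$\rg{Z}$ extension of the pure $\chi_\rho$ gives $\kappa\le 0$, the bound $S(\rg{Q}\rg{A}\mid\rg{Z})\ge S(\rho)-\log q$ at $\rho=\pi_q$ gives $\kappa\ge 0$, and the bath-free repeated $V$ realizes both rates zero exactly. The lower bound you flag as delicate is immediate from the chain rule, since $S(\rg{Q}\mid\rg{Z})=S(\rho)$ by the exact product and $S(\rg{A}\mid\rg{Q}\rg{Z})\ge-\log q$, so you need neither Section~\ref{sec:region} nor $\kexact=\kappa$ here.
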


\begin{proof}
A minimal environment is one-dimensional, so $h=0$. Every feasible extension
has $S(\rg{Q}\rg{A}\mid\rg{Z})\ge S(\rho)-\log q$, which is nonnegative at
$\rho=\pi_q$. Taking a trivial $\rg{Z}$ gives $\kappa\le h$, hence $\kappa=0$.
The bath-free repeated $V$ implements all visits exactly and has both rates
zero.
\end{proof}

\begin{proposition}[qubit dephasing]
\label{prop:dephasing}
For $\Delta(\rho)=(\rho+Z\rho Z)/2$, $h=1$, $\kappa=0$, and
Theorem~\ref{thm:rate-region} gives the corner $r=s=1/2$.
\end{proposition}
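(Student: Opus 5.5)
The plan is to compute $h$ and $\kappa$ for the dephasing channel directly and then read off the corner from Theorem~\ref{thm:rate-region}.

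\emph{Computing $h$.} A minimal Stinespring dilation is $V\ket{\psi}=\tfrac{1}{\sqrt2}(\ket{\psi}\ket{0}+Z\ket{\psi}\ket{1})$. Its complementary channel is
\[
\Phi^c(\rho)=\tfrac12\begin{pmatrix}1&\Tr(\rho Z)\\ \Tr(\rho Z)&1\end{pmatrix}.
\]
The environment is a qubit, so $S(\Phi^c(\rho))\le 1$. Any input with $\Tr(\rho Z)=0$, for instance $\rho=\pi_2$, attains this bound, so $h=1$.

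\emph{Lower bound on $\kappa$.} Every feasible extension in \eqref{eq:kappa-definition} satisfies $S(\rg{Q}\rg{A}\mid\rg{Z})\ge S(\rho)-\log q$. At $\rho=\pi_2$ this bound is $0$, so the supremum defining $\kappa$ is at least $0$. In any case $\kappa\ge0$ holds in general by Section~\ref{sec:region}.

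\emph{Upper bound on $\kappa$.} This is the main step. It suffices to exhibit an exact collision with $g_C=0$, because then $\kappa=\kexact\le 0$ by Theorem~\ref{thm:active-repair}. The natural choice is the shared-seed collision: a bath qubit in $\tau=\pi_2$, with $U$ the controlled-$Z$ that applies $Z$ to the system when the bath is $\ket1$. Then $\Phi_C=\Delta$ exactly. The active bath output is
\[
\Gamma_C(\rho)=\tfrac12\begin{pmatrix}1&\Tr(\rho Z)\\ \Tr(\rho Z)&1\end{pmatrix},
\]
whose entropy is at most $1=S(\tau)$. Hence $f_C(\rho)\le 0$ for every input, so $g_C\le 0$ and $\kappa\le0$.

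Alternatively, one can argue with a direct extension in Definition~\ref{def:intrinsic-costs}. Take $\rg{Z}$ to be the purifier of this seed; it stays independent of $\rg{Q}$, and the identity $f_C(\rho)=S(\rg{Q}\rg{A}\mid\rg{F})$ used in Theorem~\ref{thm:active-repair} gives the bound pointwise at every full-rank $\rho$. The one point to check is that $\Tr(\rho Z)$ indeed appears as the off-diagonal entry of $\Gamma_C(\rho)$. This is a one-line computation: $\bra{0}\rho\ket{0}-\bra{1}\rho\ket{1}$ arises from $\Tr(Z\rho)$ after tracing out $\rg{A}$. Combining the two bounds, $\kappa=0$.

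\emph{Corner.} With $h=1$ and $\kappa=0$, Theorem~\ref{thm:rate-region} gives the minimum dimension rate $(h+\kappa)/2=1/2$, attained at entropy rate $(h-\kappa)/2=1/2$. This is the corner $r=s=1/2$.

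It is worth noting the contrast with Proposition~\ref{prop:shared-seed}. That result shows that reusing this single-seed collision naively fails as a closed device, which is why the achievability must pass through Theorem~\ref{thm:causal-balancing} rather than through direct reuse of the collision. No new argument is needed for that, since Theorem~\ref{thm:causal-balancing} takes any fixed exact witness.
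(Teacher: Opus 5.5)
Your route is essentially the paper's. You get $h=1$ from a qubit minimal environment; your dilation differs from the paper's basis-copying one only by a Hadamard on the environment. You get $\kappa\ge 0$ from the $S(\rho)-\log q$ bound at $\pi_2$, and $\kappa\le 0$ from the fair-bit controlled-$Z$ collision.

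The conclusion is right, but the displayed $\Gamma_C(\rho)$, which you single out as ``the one point to check,'' is false. Since $\tau=\pi_2$ carries no coherence between the control states, $U(\rho\otimes\ketbra{b}{b})U^*=Z^b\rho Z^b\otimes\ketbra{b}{b}$. Hence $U(\rho\otimes\pi_2)U^*=\tfrac12\sum_b Z^b\rho Z^b\otimes\ketbra{b}{b}$, and $\Gamma_C(\rho)=\pi_2$ for every input, with no off-diagonal entries. The matrix you wrote is $\Phi^c(\rho)$, the bath output of the \emph{pure} dilation with the environment starting in $\ket{+}$. It cannot be the active output of a collision with $S(\tau)=1$. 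At $\rho=\ketbra{0}{0}$ it is pure, so subadditivity would give $S(\rg{A}\rg{B})\le S(\rg{A})+S(\rg{B})=0$, contradicting $S(\rg{A}\rg{B})=S(\rho)+S(\tau)=1$. Your inequality $f_C\le 0$ survives only because it uses nothing beyond $S(\Gamma_C(\rho))\le\log\dim\rg{B}=1=S(\tau)$. The correct computation gives $f_C\equiv 0$ and $g_C=0$, as the paper states.

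Two smaller points. First, for an exact collision you only need $\kappa\le g_C$. This already follows from Theorem~\ref{thm:uniform-gain}, because an exact collision lies in every $\mathcal{C}_\eta$; that is what the paper uses. Quoting $\kexact=\kappa$ from Theorem~\ref{thm:active-repair} instead imports the support repair and exactification, whose nontrivial direction plays no role here. Your alternative, taking the seed purifier $\rg{F}$ as an exactly feasible $\rg{Z}$ in \eqref{eq:kappa-definition}, needs neither theorem and is the most elementary route. A classical copy of the seed bit works equally well: it gives $\sigma_{\rg{Q}\rg{Z}}=\rho^{T}\otimes\pi_2$, $\sigma_{\rg{Q}\rg{A}}=\chi_\rho$, and $S(\rg{Q}\rg{A}\mid\rg{Z})=0$ because each branch is pure.

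Second, your closing claim that achievability ``must'' go through Theorem~\ref{thm:causal-balancing} is too strong. The paper also gives a direct exact streaming realization of this corner, using $\lceil T/2\rceil$ mixed bath qubits and distinct Pauli generators.
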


\begin{proof}
A minimal dilation copies the computational-basis label into a qubit
environment. Its output entropy is the binary entropy of the input diagonal,
maximized at one. A fair classical bath bit controlling $\one$ or $Z$
implements $\Delta$. Its active marginal stays $\pi_2$ for every input, so
$g=0$. Theorem~\ref{thm:uniform-gain} and $\kappa\ge 0$ imply $\kappa=0$.
\end{proof}

There is also a direct exact streaming realization of the corner. Put
$m=\lceil T/2\rceil$, initialize $m$ bath qubits in $\pi_{2^m}$, and choose
$T$ distinct binary-independent Pauli generators from
$X_1,Z_1,\dots,X_m,Z_m$. At visit $t$ use the system computational bit to
control the $t$-th generator $P_t$ on the bath. For a history
$x\in\{0,1\}^T$, its ordered bath word is
$P(x)=P_T^{x_T}\cdots P_1^{x_1}$. Distinct histories give distinct Pauli words
up to phase and $\Tr(P(x)P(y)^*)/2^m=\delta_{x,y}$. Expanding any purified
adaptive tester in computational histories, its retained user vectors are
weighted by precisely this Gram matrix after tracing the bath. All
cross-history terms vanish, exactly as for fresh dephasing records; this
proves complete adaptive service, including reference-entangled inputs. A pure
$(T+1)$-state clock compiles these gates into one repeated unitary as in
\eqref{eq:repeated-unitary}. Thus $\log R=m+\log(T+1)$ and $S(\omega)=m$, with
no discard. This is the counted streaming specialization of the
orthogonal-unitary construction in Boes et al.\ \cite[Lemma 1]{Boes2018}, not a claim to
originate its square-root dephasing memory saving. The subsequent 2020
erratum corrects section V / Theorem 3's expander result, which is not used
here; it does not modify the dephasing lemma cited above.

\begin{proposition}[pure replacement]
\label{prop:pure-replacer}
For $\Phi(\rho)=\ketbra{0}{0}\Tr\rho$ on $\CC^q$, $h=\kappa=\log q$ and
$\cseq=\log q$.
\end{proposition}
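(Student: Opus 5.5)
We need $h=\log q$, $\kappa=\log q$, and then by Theorem~\ref{thm:rate-region} the minimum dimension rate is $(h+\kappa)/2=\log q$, which is $\cseq$.

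\emph{Computing $h$.} A minimal Stinespring dilation is the swap-type isometry $V\ket{\psi}=\ket{0}_{\rg{A}}\otimes\ket{\psi}_{\rg{E}}$ with $\rg{E}=\CC^q$. Hence $\Phi^c=\id$, and $h=\max_\rho S(\rho)=\log q$, attained at $\pi_q$.

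\emph{Upper bound on $\kappa$.} This follows from the general bound $\kappa\le h$ established in Section~\ref{sec:region}: the trivial-$\rg{Z}$ extension gives $k_{\rho,\varepsilon}\le S(\chi_\rho)\le h$.

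\emph{Lower bound on $\kappa$.} This is the substantive step. For full-rank $\rho$ the ideal state is $\chi_\rho=\rho^{T}_{\rg{Q}}\otimes\ketbra{0}{0}_{\rg{A}}$, so $S(\chi_\rho)=S(\rho)$. Take any feasible $\sigma$ on $\rg{Q}\rg{A}\rg{Z}$ with $\sigma_{\rg{Q}\rg{Z}}=\rho^T\otimes\sigma_{\rg{Z}}$ and $\Dist(\sigma_{\rg{Q}\rg{A}},\chi_\rho)\le\varepsilon$. Then
\[
S(\rg{Q}\rg{A}\mid\rg{Z})\ge S(\rg{Q}\mid\rg{Z})-S(\rg{A}\mid\rg{Q}\rg{Z})^{-}\quad\text{type bound};
\]
more cleanly, by the chain rule and strong subadditivity,
\[
S(\rg{Q}\rg{A}\mid\rg{Z})=S(\rg{Q}\mid\rg{Z})+S(\rg{A}\mid\rg{Q}\rg{Z})\ge S(\rho)-S(\rg{A}),
\]
using $S(\rg{Q}\mid\rg{Z})=S(\rho)$ from the exact product constraint and $S(\rg{A}\mid\rg{Q}\rg{Z})\ge -S(\rg{A})$. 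Since $\sigma_{\rg{A}}$ is within $\varepsilon$ of the pure state $\ketbra{0}{0}$, Fannes--Audenaert gives $S(\rg{A})\le\varepsilon\log(q-1)+H_2(\varepsilon)\to0$. Thus $\lim_{\varepsilon\downarrow0}k_{\rho,\varepsilon}\ge S(\rho)$. Taking the supremum over full-rank $\rho$ (approaching $\pi_q$) gives $\kappa\ge\log q$.

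Combining the bounds, $\kappa=h=\log q$. The corner of Theorem~\ref{thm:rate-region} then has $s_0=0$ and $r_0=\log q$, so $\cseq=\log q$.

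As a sanity check on achievability, the explicit device is a pure initial bath of $T$ pure $q$-dimensional cells plus a clock. At visit $t$ it swaps the input into cell $t$, which is fresh and pure, so the output is exactly $\ketbra{0}{0}$ and the service is exact against all adaptive users. This uses $\log R_T=T\log q+\log(T+1)$ and $S(\omega_T)=0$, matching the corner.

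The only delicate point is that the conditional entropy lower bound must not depend on $\dim\rg{Z}$. The chain-rule argument above achieves this, because continuity is applied only to the $q$-dimensional marginal $\rg{A}$.
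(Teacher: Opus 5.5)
Your proof is correct and follows the paper's argument. It uses the same dilation $\ket{\psi}\mapsto\ket{0}_{\rg{A}}\ket{\psi}_{\rg{E}}$ for $h$, the same bound $S(\rg{Q}\rg{A}\mid\rg{Z})\ge S(\rho)-S(\sigma_{\rg{A}})$ with continuity applied only to the $q$-dimensional $\rg{A}$ marginal at $\rho=\pi_q$ (itself full rank, so no limit toward it is needed), $\kappa\le h$, and the preallocated pure-cell swap device with a counted clock. Delete the garbled first display, and note that $S(\rg{A}\mid\rg{Q}\rg{Z})\ge-S(\rg{A})$ comes from Araki--Lieb (subadditivity on a purification) rather than from strong subadditivity.
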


\begin{proof}
A minimal dilation maps $\ket{\psi}$ to $\ket{0}_{\rg{A}}\ket{\psi}_{\rg{E}}$,
so $h=\log q$. For any feasible extension,
$S(\rg{Q}\rg{A}\mid\rg{Z})\ge S(\rho)-S(\sigma_{\rg{A}})$. Its $\rg{A}$
marginal is within $\varepsilon$ of $\ketbra{0}{0}$, hence
$S(\sigma_{\rg{A}})\to 0$ at fixed $q$ by entropy continuity. At $\rho=\pi_q$
this gives $\kappa\ge\log q$, while $\kappa\le h$ gives equality.
Specializing the preallocated cell construction of Ryb\'ar and Ziman
\cite[section III]{RybarZiman2008}, preallocate
$T$ pure $q$-dimensional cells and swap each arriving input into its
designated cell, returning $\ket{0}$; retain all cells and count the pure
clock. This exact device has $\log R=T\log q+\log(T+1)$ and $S(\omega)=0$. Its
bath grows with $T$, as it must: Theorem 2 of Ryb\'ar and Ziman excludes a
fixed finite memory for this nonunital channel. No mixed-replacement or other
companion classification is needed for this example.
\end{proof}

\begin{proposition}[a lower bound on $\kappa$ over the input space alone]
\label{prop:kappa-lower}
For every channel $\Phi$,
\[
\kappa(\Phi)\ \ge\ \max_\rho\ \bigl[S(\rho)-S(\Phi(\rho))\bigr].
\]
\end{proposition}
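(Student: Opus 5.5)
The plan is to bound every feasible extension pointwise in $\rho$, pass to the positive-error limit, and take the full-rank supremum. Fix a full-rank $\rho$, $\varepsilon>0$, and a feasible $\sigma_{\rg{Q}\rg{A}\rg{Z}}$ for \eqref{eq:kappa-definition}. The chain rule gives
\[
S(\rg{Q}\rg{A}\mid\rg{Z})=S(\rg{Q}\mid\rg{A}\rg{Z})+S(\rg{A}\mid\rg{Z}).
\]
I will bound the two terms separately.

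For the first term, strong subadditivity (conditioning reduces entropy, in the form $S(\rg{Q}\mid\rg{A}\rg{Z})\ge S(\rg{Q}\mid\rg{Z})-S(\rg{A})$) applies. Alternatively one can write $S(\rg{Q}\rg{A}\mid\rg{Z})\ge S(\rg{Q}\mid\rg{Z})-S(\rg{A}\mid\rg{Z})$ after adding $S(\rg{A}\mid\rg{Z})\ge -S(\rg{A})$. The cleanest route is
\[
S(\rg{Q}\rg{A}\mid\rg{Z})=S(\rg{Q}\rg{A}\rg{Z})-S(\rg{Z})\ \ge\ S(\rg{Q}\rg{Z})-S(\rg{A})-S(\rg{Z}),
\]
using the Araki--Lieb inequality $S(\rg{Q}\rg{A}\rg{Z})\ge S(\rg{Q}\rg{Z})-S(\rg{A})$. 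The exact product constraint $\sigma_{\rg{Q}\rg{Z}}=\rho^{T}\otimes\sigma_{\rg{Z}}$ then gives $S(\rg{Q}\rg{Z})-S(\rg{Z})=S(\rho)$, since transposition preserves the spectrum. Hence
\[
S(\rg{Q}\rg{A}\mid\rg{Z})_\sigma\ \ge\ S(\rho)-S(\sigma_{\rg{A}}).
\]
This is exactly the inequality already used in Proposition~\ref{prop:pure-replacer}. The bound is independent of $\dim\rg{Z}$, so it passes through the infimum: $k_{\rho,\varepsilon}\ge S(\rho)-\sup S(\sigma_{\rg{A}})$, with the supremum over $\rg{A}$-marginals $\varepsilon$-close to $\Phi(\rho)$.

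Next I handle the smoothing. Partial trace contracts trace distance, so $\Dist(\sigma_{\rg{A}},\Phi(\rho))\le\Dist(\sigma_{\rg{Q}\rg{A}},\chi_\rho)\le\varepsilon$, because $\Tr_{\rg{Q}}\chi_\rho=\Phi(\rho)$. Audenaert's bound \cite[Theorem 1]{Audenaert2007} on the $q$-dimensional system $\rg{A}$ gives $S(\sigma_{\rg{A}})\le S(\Phi(\rho))+\varepsilon\log(q-1)+H_2(\varepsilon)$ for $\varepsilon\le 1/2$. Therefore
\[
k_{\rho,\varepsilon}\ge S(\rho)-S(\Phi(\rho))-\varepsilon\log q-H_2(\varepsilon),
\]
and letting $\varepsilon\downarrow 0$ at fixed $\rho$ yields $\lim_{\varepsilon}k_{\rho,\varepsilon}\ge S(\rho)-S(\Phi(\rho))$ for every full-rank $\rho$.

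Finally I pass from full-rank inputs to all inputs, which is the only mildly delicate step, since $\kappa$ is a supremum over $\rho>0$ while the maximum on the right may sit on the boundary. The function $\rho\mapsto S(\rho)-S(\Phi(\rho))$ is continuous on the compact state space of the fixed finite-dimensional $\CC^q$, because von Neumann entropy is continuous there and $\Phi$ is linear. So its maximum is attained at some $\rho_*$ and approached along $\rho_t=(1-t)\rho_*+t\,\one/q$ with $t\downarrow 0$. Each $\rho_t$ is full rank. Taking the supremum over $t$ gives $\kappa\ge S(\rho_*)-S(\Phi(\rho_*))$, which completes the proof. I expect no real obstacle: the key point is that the product constraint is exact, so no continuity estimate on $\rg{Z}$, whose dimension is unbounded, is ever needed. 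All smoothing is absorbed on the fixed $q$-dimensional register $\rg{A}$.
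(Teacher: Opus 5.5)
Your argument is correct and is essentially the paper's proof. The paper writes $S(\rg{Q}\rg{A}\mid\rg{Z})=S(\rg{Q}\mid\rg{Z})+S(\rg{A}\mid\rg{Q}\rg{Z})$ and uses $S(\rg{A}\mid\rg{Q}\rg{Z})\ge -S(\rg{A})$, which is your Araki--Lieb step on the cut $\rg{Q}\rg{Z}:\rg{A}$; it then applies Audenaert's bound on $\rg{A}$, takes the limit at fixed $\rho$, and passes to all inputs by density of the full-rank states. Do strike the two exploratory alternatives in your second paragraph: neither $S(\rg{Q}\mid\rg{A}\rg{Z})\ge S(\rg{Q}\mid\rg{Z})-S(\rg{A})$ nor $S(\rg{Q}\rg{A}\mid\rg{Z})\ge S(\rg{Q}\mid\rg{Z})-S(\rg{A}\mid\rg{Z})$ holds in general (a maximally entangled $\rg{Q}\rg{A}$, respectively a maximally entangled $\rg{A}\rg{Z}$ with $\rg{Q}$ in product, violates them), although the route you actually use does not rely on either.
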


\begin{proof}
Fix $\rho>0$ and any $\varepsilon$-feasible extension $\sigma$ in
Definition~\ref{def:intrinsic-costs}, with $q\ge 2$. The chain rule and the
conditional form of Araki--Lieb give
\[
S(\rg{Q}\rg{A}\mid\rg{Z})=S(\rg{Q}\mid\rg{Z})+S(\rg{A}\mid\rg{Q}\rg{Z})
\ \ge\ S(\rho)-S(\sigma_{\rg{A}}),
\]
using $S(\rg{Q}\mid\rg{Z})=S(\rho)$ from the exact product
$\sigma_{\rg{Q}\rg{Z}}=\rho^{T}\otimes\sigma_{\rg{Z}}$ and
$S(\rg{A}\mid\rg{Q}\rg{Z})\ge-S(\rg{A})$. Since
$\Dist(\sigma_{\rg{Q}\rg{A}},\chi_\rho)\le\varepsilon$ and
$\Tr_{\rg{Q}}\chi_\rho=\Phi(\rho)$, the marginal $\sigma_{\rg{A}}$ is within
half distance $\varepsilon$ of $\Phi(\rho)$, so Audenaert
\cite[Theorem 1]{Audenaert2007} gives
$S(\sigma_{\rg{A}})\le S(\Phi(\rho))+\varepsilon\log(q-1)+H_2(\varepsilon)$.
Hence $k_{\rho,\varepsilon}\ge S(\rho)-S(\Phi(\rho))-\varepsilon\log(q-1)
-H_2(\varepsilon)$. Letting $\varepsilon\downarrow 0$ at fixed $\rho$ and then
taking the supremum over full-rank $\rho$ gives
$\kappa\ge\sup_{\rho>0}[S(\rho)-S(\Phi(\rho))]$, which equals the maximum over
all states by continuity of the bracket and density of the full-rank states;
the maximum exists by compactness. This uses neither
Theorem~\ref{thm:uniform-gain} nor Theorem~\ref{thm:active-repair}.

The same bound holds for every finite \emph{exact} collision $C=(U,\tau)$, by
the entropy-conservation and subadditivity budget of Ryb\'ar and Ziman
\cite[section III, Eq.~(3.7)]{RybarZiman2008}: the state
$U(\rho\otimes\tau)U^*$ on $\rg{A}\rg{B}$ is unitarily equivalent to
$\rho\otimes\tau$, so $S(\rg{A}\rg{B})=S(\rho)+S(\tau)$, while subadditivity
gives $S(\rg{A}\rg{B})\le S(\Phi(\rho))+S(\Gamma_C(\rho))$. Subtracting
$S(\tau)$,
\[
f_C(\rho)\ \ge\ S(\rho)-S(\Phi(\rho))\quad\text{for every }\rho,
\]
hence $g_C\ge\max_\rho[S(\rho)-S(\Phi(\rho))]$. This second route bounds
$\kexact$ directly and reaches $\kappa$ only through
Theorems~\ref{thm:uniform-gain} and \ref{thm:active-repair}.
\end{proof}

Unlike Definition~\ref{def:intrinsic-costs}, the right-hand side is an
optimization over the $q$-dimensional state space alone: no auxiliary system,
no smoothing and no order of limits enter it. It is exactly the universal
parallel work rate at trivial Hamiltonians of Faist, Berta and Brand\~ao
\cite[Theorem 5.1 and Eq.~(5.4)]{Faist2019}, so the entropy-deficit rate of a
closed device is at least that thermodynamic work rate;
Section~\ref{sec:related} compares the two contracts. The bound reproduces the
three examples above exactly: it is zero for a unitary channel, zero for
dephasing because a unital channel never decreases entropy, and $\log q$ for
the pure replacer.

\begin{proposition}[an interior corner]
\label{prop:interior}
For $0\le\nu,p\le 1$ let $\Phi_{\nu,p}$ on $q=2$ have the generalized
amplitude damping operators
\[
\begin{gathered}
K_0=\sqrt{p}\,\mathrm{diag}(1,\sqrt{1-\nu}),\qquad
K_1=\sqrt{p\nu}\,\ketbra{0}{1},\\
K_2=\sqrt{1-p}\,\mathrm{diag}(\sqrt{1-\nu},1),\qquad
K_3=\sqrt{(1-p)\nu}\,\ketbra{1}{0},
\end{gathered}
\]
which satisfy $\sum_a K_a^*K_a=\one$; the damping parameter is written $\nu$
to keep $\gamma$ for the concentration slack of Appendix~\ref{app:balancing}.
At $\nu=1/2$ and $p=3/4$,
\[
0<H_2(7/12)-H_2(1/3)\ \le\ \kappa\ \le\ 1-H_2(3/4)\ <\ H_2(3/8)\ \le\ h,
\]
that is $0.061572\le\kappa\le 0.188722$ and $h\ge 0.954434$, each decimal
rounded towards the side that keeps the inequality valid. Hence
$0<\kappa<h$.
\end{proposition}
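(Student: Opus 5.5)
The plan is to prove the three inequalities separately, using only earlier results and elementary entropy evaluations. At $\nu=1/2$, $p=3/4$, first record the action of $\Phi=\Phi_{\nu,p}$ on diagonal inputs. For $\rho_x=\mathrm{diag}(x,1-x)$, summing the four Kraus contributions gives $\Phi(\rho_x)$ diagonal with $\ket{0}$-weight
\[
px+p\nu(1-x)+(1-p)(1-\nu)x=\tfrac38+\tfrac x2 .
\]
Off-diagonal terms vanish because every $K_a$ maps basis vectors to basis vectors.

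\emph{Lower bound on $\kappa$.} I will invoke Proposition~\ref{prop:kappa-lower}, which gives $\kappa\ge\max_\rho[S(\rho)-S(\Phi(\rho))]$, and evaluate the bracket at one convenient input. Take $x=7/12$. Then $S(\rho_x)=H_2(7/12)$ and the output $\ket{0}$-weight is $3/8+7/24=2/3$, so $S(\Phi(\rho_x))=H_2(1/3)$. Positivity follows because $7/12$ is closer to $1/2$ than $1/3$ is, and the decimal $0.061572$ is a routine rounded evaluation.

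\emph{Upper bound on $\kappa$.} I will exhibit an exact finite collision with a qubit bath and then use $\kappa=\kexact\le g_C$. The identity $\kappa=\kexact$ is Theorems~\ref{thm:uniform-gain} and \ref{thm:active-repair}; more directly, an exact collision belongs to every $\mathcal{C}_\eta$, so $\kappa\le g_C$ already follows from Theorem~\ref{thm:uniform-gain}. The collision is the standard thermal dilation of generalized amplitude damping:
\begin{itemize}
\item the bath qubit starts in $\tau=\mathrm{diag}(p,1-p)=\mathrm{diag}(3/4,1/4)$;
\item $U$ is the partial-swap (beam-splitter) unitary on $\mathrm{span}\{\ket{01},\ket{10}\}$ with transmissivity $1-\nu$, acting as identity on $\ket{00}$ and $\ket{11}$.
\end{itemize}
I will check that the Kraus operators conditioned on each bath eigenvector reproduce $K_0,K_1$ (bath in $\ket0$) and $K_2,K_3$ (bath in $\ket1$), up to harmless phases. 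Since $\dim\rg{B}=2$, one has $S(\Gamma_C(\rho))\le 1$ for every $\rho$, so $g_C\le 1-S(\tau)=1-H_2(3/4)\approx 0.188722$. The only real checking in the whole argument is this Kraus-matching step, and I expect it to be the main (mild) obstacle.

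\emph{Lower bound on $h$.} For a pure input $\psi$, the pure Stinespring output gives $S(\Phi^c(\psi))=S(\Phi(\psi))$. Taking $\psi=\ket1$ ($x=0$), the output $\ket0$-weight is $3/8$, so $h\ge H_2(3/8)\approx 0.954434$.

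Finally, $0.188722<0.954434$ gives $1-H_2(3/4)<H_2(3/8)$, and combined with the positive lower bound this yields $0<\kappa<h$. In all three decimal evaluations I will round outward, as the statement specifies.
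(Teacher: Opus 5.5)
Your proposal is correct and follows the paper's proof essentially step for step: the diagonal action $\tfrac38+\tfrac x2$ with Proposition~\ref{prop:kappa-lower} at $\mathrm{diag}(7/12,5/12)$, the qubit-bath beam-splitter collision with $\tau=\mathrm{diag}(3/4,1/4)$ giving $g_C\le 1-H_2(3/4)$ (your remark that $\kappa\le g_C$ already follows from Theorem~\ref{thm:uniform-gain} alone is right), and $h\ge H_2(3/8)$ from the pure input $\ket{1}$. The differences are cosmetic. You close with outward-rounded decimals where the paper uses the analytic chain $H_2(3/8)>H_2(1/4)=H_2(3/4)>1/2>1-H_2(3/4)$, and you do not check $\sum_a K_a^*K_a=\one$ explicitly; that identity follows from your unitary Kraus matching, which works for all $\nu,p$.
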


\begin{proof}
Trace preservation is the diagonal identity
$p+(1-p)(1-\nu)+(1-p)\nu=1$ and $p(1-\nu)+p\nu+(1-p)=1$. For the lower
bound, the channel acts on diagonal inputs by
$\Phi(\mathrm{diag}(a,1-a))=\mathrm{diag}(3/8+a/2,\,5/8-a/2)$, so
$\rho=\mathrm{diag}(7/12,5/12)$ has output exactly $\mathrm{diag}(2/3,1/3)$,
and Proposition~\ref{prop:kappa-lower} at this $\rho$ gives
$\kappa\ge H_2(7/12)-H_2(1/3)>0$ through the entropy-decrease budget credited
there. The maximally mixed input, with $\Phi(\one/2)=\mathrm{diag}(5/8,3/8)$,
gives only $1-H_2(5/8)=0.04556\ldots$ and is not the maximiser of the bracket
in Proposition~\ref{prop:kappa-lower}; N.~Mghirbi (private communication,
2026) observed that $\mathrm{diag}(3/5,2/5)$ already improves it to
$0.061214$. A numerical search over qubit inputs, rerun by the script named
in Appendix~\ref{app:diagnostics}, finds the largest value $0.061598$ at
$\mathrm{diag}(0.58671,0.41329)$, so the rational witness above is within
$0.00003$ of the largest value found numerically; the search is a diagonal
grid with random general inputs and certifies no global maximum.

For the upper bound, exhibit an exact collision. Take $\rg{B}=\CC^2$,
$\tau=\mathrm{diag}(p,1-p)$, and the unitary $U$ fixing $\ket{00}$ and
$\ket{11}$ and acting on the ordered pair $\ket{01},\ket{10}$ by
\[
\begin{pmatrix}\sqrt{1-\nu} & -i\sqrt{\nu}\\
-i\sqrt{\nu} & \sqrt{1-\nu}\end{pmatrix}.
\]
That block is unitary, so $U$ is. Writing
$U=\sum_{b,j}U_{bj}\otimes\ketbra{b}{j}$ as in Section~\ref{sec:repair}, its
weighted bath matrix elements $\sqrt{\lambda_j}\,U_{bj}$ come from
$U_{00}=\mathrm{diag}(1,\sqrt{1-\nu})$,
$U_{11}=\mathrm{diag}(\sqrt{1-\nu},1)$,
$U_{10}=-i\sqrt{\nu}\ketbra{0}{1}$ and
$U_{01}=-i\sqrt{\nu}\ketbra{1}{0}$, weighted by $\sqrt{p}$ and
$\sqrt{1-p}$; these are the four operators above up to individual phases, so
$\Phi_C=\Phi$ exactly. This is a lawful finite collision in the sense of
Definition~\ref{def:collision}: $\tau$ is user independent, $U$ acts on system
and active bath only, and $\log\dim\rg{B}=1$ with $S(\tau)=H_2(p)$. Because
$\dim\rg{B}=2$, its active output entropy is at most one for \emph{every}
input, whatever the off-diagonal entries, so $g_C\le 1-H_2(3/4)$. This bound
is attained: the input $\mathrm{diag}(1/4,3/4)$ sends the active bath output
of this collision to $\one/2$, so $g_C=1-H_2(3/4)$ exactly and the witness
cannot be improved. Theorems~\ref{thm:uniform-gain} and
\ref{thm:active-repair} give $\kappa=\kexact\le g_C$.

For $h$, the minimal dilation of a pure input has system and environment with
the same nonzero spectrum, so $h\ge S(\Phi(\ketbra{1}{1}))=H_2(3/8)$, since
$\Phi(\ketbra{1}{1})=\mathrm{diag}(3/8,5/8)$. A mixed input does better: at
$\mathrm{diag}(1/3,2/3)$ the complementary output has entropy $1.148402$ (also
observed by N.~Mghirbi; the largest value found numerically is $1.148986$ at
$\mathrm{diag}(0.35586,0.64414)$), so in fact $h>1$, although the analytic
bound suffices here. Strictness follows from $H_2$
being strictly increasing on $(0,1/2)$:
$H_2(3/8)>H_2(1/4)=H_2(3/4)>1/2>1-H_2(3/4)$.
\end{proof}

This locates the corner without computing $\kappa$. Two consequences are
already visible from the bracket. Since $\kappa>0$, no device for this channel
has a vanishing entropy-deficit rate, so $r=s$ is impossible and the minimum
dimension rate $(h+\kappa)/2$ strictly exceeds $h/2$. Since $\kappa<h$, the
optimal initial entropy rate $s_0=(h-\kappa)/2$ is strictly positive, so the
cheapest device is strictly mixed rather than pure. The bracket also gives
$g_C\le h$ for this witness, so Theorem~\ref{thm:causal-balancing} applies to
it directly and produces an explicit all-horizon family. The finite arithmetic
behind the displayed decimals is rerun by the script named in
Appendix~\ref{app:diagnostics}; the strict inequalities above are analytic and
use no optimizer.

The quantity $\cseq$ measures complete operating storage. It is not identified
with a universal unavoidable terminal residual rate under unrestricted growth
of a working bank. A construction's parked-residue rate does not establish
such an equality. The theorem asserts no efficient witness search, circuit
bound, bath return, independent return, spatial additivity, or service law for
arbitrary multi-visit processes. It concerns one fixed memoryless channel.

\section{Related work and distinct zero-cost statements}
\label{sec:related}

Ryb\'ar and Ziman's \emph{Repeatable quantum memory channels}
\cite[sections II--IV]{RybarZiman2008} is a close physical comparison. After
Eq.~(2.3) they assume product inputs. Their section III repeatability
definition fixes one memory, initializer and unitary for all repetitions and
requires the same single-use marginal channel at every step. Theorem 1
supplies finite random-unitary repeaters; Theorem 2 rules out nonunital
finite-memory repeaters. Their conclusion leaves general unital,
non-random-unitary repeatability open and acknowledges correlations between
outputs and effects of measurements. These are statements of that 2008 paper,
not a claim about the current status of its question.

\begin{proposition}[shared-seed separation]
\label{prop:shared-seed}
One fair bath bit controlling $\one$ or $Z$ on every qubit is such a marginal
dephasing repeater, but its complete service error against independent
dephasing is at least $1-2^{1-T}$.
\end{proposition}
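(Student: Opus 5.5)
The proof has two parts. First, the device is a marginal dephasing repeater in the Ryb\'ar--Ziman sense. Second, a single explicit adaptive-free tester already incurs complete error at least $1-2^{1-T}$.

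For the first part, take the bath $\rg{B}=\CC^2$ with $\omega=\pi_2$, and let $W$ be the controlled unitary $\ketbra{0}{0}_{\rg{B}}\otimes\one+\ketbra{1}{1}_{\rg{B}}\otimes Z$. The bath is never changed by $W$, so the seed bit $b$ stays fair at every visit. For an input $\rho$ uncorrelated with earlier outputs, the single-use marginal is therefore $\tfrac12\rho+\tfrac12 Z\rho Z=\Delta(\rho)$ at every repetition. This is the repeatability condition after their Eq.~(2.3).

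For the second part, I would feed $T$ qubits, each maximally entangled with its own reference qubit, and retain everything. The real final state is
\[
\rho^{\mathrm{real}}=\tfrac12\Phi^+_{\otimes T}+\tfrac12 (Z^{\otimes T}\Phi^+ Z^{\otimes T})_{\otimes T},
\]
a uniform mixture of two orthogonal product Bell states: all $\Phi^+$, or all $\Phi^-$. The ideal state applies independent dephasing to each pair. This gives $(\tfrac12\Phi^++\tfrac12\Phi^-)^{\otimes T}$, which is uniform over the $2^T$ strings in $\{\Phi^+,\Phi^-\}^T$.

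Both states are diagonal in the same orthonormal Bell-product basis, so their trace distance is the classical total variation distance. One distribution is uniform on 2 points and the other uniform on $2^T$ points, and the two points lie in the larger support. The distance is therefore
\[
2\bigl(\tfrac12-2^{-T}\bigr)\cdot\tfrac12\cdot 2=1-2^{1-T}.
\]
Concretely, the total variation is the sum of positive parts, which equals $2(\tfrac12-2^{-T})=1-2^{1-T}$. Since $\delta_T$ in \eqref{eq:complete-error} is a supremum over testers, it is at least this value. At $T=2$ this gives $1/2$.

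The only delicate point is checking that the two real branches coincide with basis elements of the ideal mixture. This holds because $Z\Phi^+Z=\Phi^-$ exactly, so the computation is routine. No continuity or asymptotic estimates are needed, and the bound holds at every finite $T$.
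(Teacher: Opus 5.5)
Your proof is correct and follows the paper's route: fix one non-adaptive tester, observe that the real final state is uniform on the two constant strings while the ideal one is uniform on all $2^T$ strings in a common product basis, and read off total variation $1-2^{1-T}$. The only difference is that the paper feeds unentangled $\ket{+}$ inputs and works in the $X$ basis, which also shows that no reference is needed, whereas you use Bell-pair inputs and the Bell basis.
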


\begin{proof}
For any product input list, the bath bit's probabilities remain $(1/2,1/2)$,
so each individual output has dephasing's marginal. For $T$ inputs $\ket{+}$,
the joint output is
$(\ketbra{+}{+}^{\otimes T}+\ketbra{-}{-}^{\otimes T})/2$. Independent
dephasing produces $\pi_{2^T}$. Both states are diagonal in the $X$ basis.
Summing the two occupied-string discrepancies and the other $2^T-2$
discrepancies gives half distance $1-2^{1-T}$. The event that all $X$ outcomes
agree attains it, already $1/2$ for $T=2$. No feedback is needed to
distinguish the contracts. In particular, the finite shared seed is not a
small-error device for Definition~\ref{def:complete-error-rates}.
\end{proof}

Our device may depend on $T$ and has vanishing complete adaptive error. Those
quantifiers and its joint service requirement differ from fixed finite exact
marginal repeatability. Finite $n$-repeatability and the $n$-cell permutation
construction are also in Ryb\'ar--Ziman; horizon dependence alone is not a new
distinction. Theorem~\ref{thm:rate-region} is \emph{not} asserted
to resolve Ryb\'ar--Ziman's question or an unspecified asymptotic adaptive
version.

The resource distinctions can be stated precisely from the core theorem.

\begin{center}
\small
\begin{tabular}{@{}>{\raggedright\arraybackslash}p{4.0cm}>{\raggedright\arraybackslash}p{10.4cm}@{}}
\hline
\textbf{Statement} & \textbf{Meaning and consequence}\\
\hline
Zero initial entropy & $S(\omega_T)=0$ at a specified finite horizon; stronger
than its entropy being sublinear.\\
Zero initial entropy rate & $s=0$; the region requires $r\ge h$. Exact purity
at every horizon is not implied.\\
Zero entropy deficit & $\log R_T-S(\omega_T)=0$ at a specified horizon,
equivalently $\omega_T=\pi_{R_T}$.\\
Zero entropy-deficit rate & $r-s=0$; under Theorem~\ref{thm:rate-region} this
is possible iff $\kappa=0$, with $r=s\ge h/2$.\\
Zero memory rate & $r=0$; then $s=0$ and $h=0$. In the square-channel setting
this is possible exactly for unitary channels.\\
Exact finite tracial factorization & One exact finite weighted-tracial
dilation exists, with fixed finite algebra and weights.\\
Finite-tracial approximation closure & Arbitrarily accurate finite tracial
dilations exist; no bounded dimension or exact attained dilation follows.\\
\hline
\end{tabular}
\end{center}

For the unitary characterization in this table, if $h=0$ every complementary
output is pure. Convexity forces these outputs to be the same pure state,
since a mixture of distinct pure states has positive entropy. The dilation
therefore factors as an isometry $\rg{S}\to\rg{A}$ tensor that pure state.
Equal input and output dimensions make the isometry unitary;
Proposition~\ref{prop:unitary} gives the converse.

A finite weighted-tracial dilation means $\rg{B}=\bigoplus_j\CC^{d_j}$,
$\tau=\bigoplus_j p_j\pi_{d_j}$, and a unitary preserving these sectors. It
implements $\sum_j p_j\Tr_{\rg{B}_j}U_j(\rho\otimes\pi_{d_j})U_j^*$. Each
sector's output bath entropy is at most $\log d_j$ and $\pi_q$ attains
equality in every sector, so $g=0$ by the block entropy identity.
Theorem~\ref{thm:uniform-gain} therefore implies that membership in this
approximation closure gives $\kappa=0$. The reverse implication requires an
additional gain-to-tracial approximation argument and is not a consequence
proved in this paper.

Allowing weighted direct sums is a broader exact convention than a single
maximally mixed matrix bath. Their approximation closures coincide: for large
$D$ choose nonnegative integers $m_j$ with $\sum_j m_j d_j=D$ and $m_jd_j/D$
approaching $p_j$ (for example take $D$ along multiples of the $d_j$, and
approximate the weights rationally). Replicate $U_j$ on $m_j$ copies. The
half-diamond channel discrepancy is at most the total variation in weights.
This observation supplies approximation, not exact attainment for irrational
weights.

The distinction is substantive prior mathematics. Musat and R\o rdam
\cite[Theorem 4.1]{MusatRordam2020} give Schur channels in the
finite-factorization closure with no exact finite-dimensional factorization,
in dimensions $2n+1$ for $n\ge 5$. We cite that theorem for this distinction,
without re-proving its operator-algebraic dependency chain. Likewise Lie, Son, Boes, Ng and Wilming
\cite[published Proposition 6 and Eq.~(12)]{Lie2026} relate informational
equilibrium to weighted-tracial dilations and factorizable channels. These
structural and equilibrium ingredients are prior work. No companion
finite-tracial-closure equivalence or quantitative modulus is imported into
the results proved here.

Pointwise product-reference lifting uses the Stinespring uniqueness mechanism
in Eggeling, Schlingemann and Werner \cite[section III]{Eggeling2002}.
The uniform optimization and smoothing
order in Section~\ref{sec:collision-gain} are written separately. Ordinary
dilation continuity, as in Kretschmann, Schlingemann and Werner
\cite[Definition 1 and Theorem 1]{Kretschmann2008}, does not itself impose a
comparison acting only on the active bath with an already fixed inert purifier
and positive initial spectrum. This is a difference in the stated contract,
not proof of originality of Theorem~\ref{thm:active-repair}.

In particular, purifying a mixed initializer gives a dilation into
$\rg{A}\rg{B}\rg{F}$. The environmental comparison in ordinary Stinespring
continuity can act on the joint $\rg{B}\rg{F}$; it need not factor as an
operation on $\rg{B}$ tensored with $I_{\rg{F}}$.
Theorem~\ref{thm:active-repair} instead preserves the old positive spectrum
at the support-repair stage and completes columns inside the target Kraus
span, with a gain bound uniform in the bath size. The final flagged
exactification changes the spectrum. These are the precise extra
requirements in the comparison.

For factorization background, Haagerup and Musat
\cite[Definition 1.3 and Theorem 2.2]{HaagerupMusat2011} characterize matrix
Markov maps through unitary dilations with a tracial von Neumann algebra.
Their finite von Neumann algebra need not be finite dimensional; our device
bath must be. This is background to the closure/attainment distinction above,
not a historical claim that they originated all factorization terminology.

The privacy-funnel comparison in Definition~\ref{def:intrinsic-costs}
identifies the exact pointwise cost, with its smoothing equality supplied by
Section~\ref{sec:repair}. The classical privacy funnel originates with
Makhdoumi et al.\ \cite{Makhdoumi2014}; Datta, Hirche and Winter supply the
quantum definition used here. Other channel entropies answer different questions.
In particular, Gour and Wilde
\cite[Proposition 6, Theorem 10 and Proposition 24]{GourWilde2021} give a
channel entropy and a universal parallel channel-merging theorem. Their
protocol allows operations on both output and environment shares, free one-way
classical communication, and measures net entanglement gain. Its channel
entropy is $-\log q$ for a unitary and zero for a pure replacer; our $\kappa$
is respectively zero and $\log q$. Their universal merging result is a close
antecedent, but its resource accounting and access differ from ours.

Slepian--Wolf attribution is explicit in Section~\ref{sec:balancing}. Its
encoder and the universal merging and reverse-Shannon constructions, including Bennett et al.\ \cite[Theorem 3]{Bennett2014}, provide the relevant splitting machinery. Our
immediate-output construction runs exact collisions before encoding spent
cells. It must additionally control adaptive spectra and allocate actual flat
stock, seed, work and all residues in one closed bath; communication or net
entanglement rates alone do not count that allocation.

Baghali Khanian and Leung \cite[section II, Definition 3 and Theorems 6 and 8]{KhanianLeung2025}
strengthen the reverse-Shannon comparison by allowing a general mixed
source/reference and preserved encoder-side information, unifying feedback
and non-feedback simulation. Their input is a specified source tensor power,
and the rates charge a transmitted register and entanglement. Their assisted
optimum and unassisted bounds do not allocate all encoder/decoder environments
inside one closed sequential bath. Transferring that machinery to
Definition~\ref{def:closed-device} still requires the adaptive compression,
timing and complete storage accounting of Appendix~\ref{app:balancing}.

Devetak and Yard \cite[main region and Eqs.~(1)--(2)]{DevetakYard2008}
place fully quantum
Slepian--Wolf in state redistribution: the resources are quantum communication
and net entanglement for many identical copies of a specified state with
accessible sender and receiver shares. Those resource rates do not by
themselves price the complete retained bath of
Definition~\ref{def:closed-device}.

Faist and Renner \cite[Main Result and Eq.~(2)]{FaistRenner2018}
characterize a specified
process on a specified input, preserving its reference correlations, by
coherent relative entropy with an information battery and Gibbs-preserving
free operations. This is a work resource comparison; it does not impose our
whole-bath dimension charge or one device serving all adaptive testers.

Faist, Berta and Brand\~ao \cite[Theorem 5.1 and Eq.~(5.4)]{Faist2019}
give a universal parallel
thermodynamic implementation with work rate
$\max_\rho[S(\rho)-S(\Phi(\rho))]$ for trivial Hamiltonians. Their battery and
free-operation accounting includes a partial-trace implementation, so it does
not charge the entire retained bath of
Definition~\ref{def:closed-device}. The two models are nevertheless comparable
in one direction: Proposition~\ref{prop:kappa-lower} shows that our
entropy-deficit rate $r-s$ is at least that work rate, for every channel. The
mixed-bath model of Terhal et al.\ already imposes the unitary column
constraints; its finite single-use environment analysis does not state this
repeated rate region. Lie and Jeong \cite[Theorems 1, 5--6]{LieJeong2021} study
randomness-utilizing implementations with input-independent bath output. That
restriction is absent here, and their no-secret recovery acts on a purifier
that our device cannot access.

Quantum strategy memory in Bisio et al.\
\cite[Definitions 3--4 and Theorem 3]{Bisio2012} permits free classical memory
and local CPTP implementation; Definition~\ref{def:closed-device} charges
those physical resources.

Lie and Jeong \cite[Proposition 2, Corollary 10 and Theorem 11]{LieJeongCatalytic2021}
study implementations returning a catalyst for every input. Their entropy
bounds specialize to $s\ge h/2$ and $r+s\ge h$ for a returned catalyst implementing
the tensor-power channel. Definition~\ref{def:closed-device} imposes no such return;
its region also permits $s=0$ and $r=h$. Their catalytic block decomposition
does not identify every zero-gain collision: a SWAP with a maximally mixed bath
has $g=0$ but returns the input state to the bath. No reverse closure implication
is imported here.

Kotowski and Kotowski \cite{Kotowski2026} implement a unital channel on a
$d$-dimensional system with a fresh ancilla of dimension $k$ and success
probability of order $k/\log d$, shown optimal up to constants, and simulate
highly noncommutative channels with one auxiliary qubit. Their protocol may
fail with a flag and draws a fresh ancilla at every use;
Definition~\ref{def:closed-device} removes both permissions, since the device
must succeed at every visit with one bath that is never replenished, so
neither cost measure bounds the other and nothing is imported.

Dimension and entropy need not have the same minimizer in quantum models
of classical stochastic processes, as shown by Loomis and Crutchfield
\cite{Loomis2019} and Liu et al.\ \cite{Liu2019}. Those costs concern a stationary
information-bearing memory; here actual initial entropy is priced together
with the entire closed device. In the quantum-input setting, Chang, Berk and Gu
\cite[Theorem 3 and Appendix B.5]{Chang2026} bound stationary recurrent memory
by temporal excess entropy. Their bound vanishes for a memoryless channel,
and their recurrent step may be an arbitrary CPTP operation.
Definition~\ref{def:closed-device} instead counts the entire closed apparatus
needed to supply repeated calls. These comparisons explain the operational content of
the dimension--initial-entropy characterization. Known entropy, minimax and
decoupling ingredients do not by themselves supply the complete closed
implementation; combining them under this contract is what the paper adds.
None of the works cited above states a rate region for the closed contract of
Definition~\ref{def:closed-device}.

\section{Conclusion and open questions}
\label{sec:conclusion}

Under the closed contract a single-use dilation question becomes a rate region
in two resources, bath dimension per use and initial entropy per use, with $h$
fixing their sum and $\kappa$ fixing their difference. Three questions are left
open here. Computability: $\kappa$ is an ordered limit of an infimum over
extensions of unbounded dimension, and no algorithm, finite witness bound or
continuity in $\Phi$ is asserted; Proposition~\ref{prop:kappa-lower} gives
only a lower bound that is easy to evaluate, and Proposition~\ref{prop:interior}
leaves a gap of a factor of about three between that bound and the exhibited
upper bound. Efficiency: the all-horizon device of
Theorem~\ref{thm:causal-balancing} is built from a witness that is not
constructed, and no circuit-size or latency claim is made; the streaming
realization for dephasing shows that some corners admit explicit efficient
devices, and which channels do is open. The contract: immediate output return,
one visit per input and a memoryless target are all load-bearing here. Devices
that deliver outputs on a schedule rather than immediately are studied in a
companion note on Schur channels \cite{Douglas2026Schur}; testers with bounded
persistent quantum memory rather than unrestricted references are a separate
question that this paper does not address.

\section*{AI assistance}

AI systems assisted with proof exploration, manuscript preparation and checking.
Additional independent AI sessions audited the arguments and their relationship
to prior work. These sessions were not human external peer review. The scope
of the separate, partial Lean verification is stated in Appendix~\ref{app:diagnostics}.

\appendix

\section{Complete causal balancing proof}
\label{app:balancing}

This appendix proves Theorem~\ref{thm:causal-balancing}. Fix the channel and
one finite \emph{exact} collision before any limit. Use
$\sigma,v,g,h,a,b,R$ from Theorem~\ref{thm:causal-balancing}, with
$0\le a\le\sigma$, $g\ge 0$ and $b=a+g$. All logs are base two and $\Dist$
denotes half trace distance throughout; in Lemma~\ref{lem:seed-decoupling}
the integer dimension of the compressed register is written $D_{\rg{C}}$.

\subsection{Adaptive spectral concentration}

\begin{lemma}
\label{lem:adaptive-projector}
Let $\Lambda$ have finite output and $H_\Lambda=\max_\rho S(\Lambda(\rho))$.
There is a fixed spectral projector on $n$ successively produced, untouched
$\Lambda$ outputs with rank at most $2^{n(H_\Lambda+\gamma)}$ and rejected
weight at most $\exp(-c_\Lambda n\gamma^2)$, uniformly over adaptive
controllers and all finite reference dimensions. Deterministic spectral costs
have zero loss.
\end{lemma}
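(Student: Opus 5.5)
The plan is to build the projector from one fixed single-output state and then control the rejected weight with a martingale bound; nothing in it depends on the controller.

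\textbf{Step 1 (fixed reference output).} Let $\sigma^*=\Lambda(\rho^*)$, where $\rho^*$ attains $H_\Lambda$. Since $S$ is strictly concave on the output space, $\sigma^*$ is the unique entropy-maximizing output, although $\rho^*$ need not be unique. I will prove the first-order (Kuhn--Tucker) inequality
\[
-\Tr\Lambda(\rho)\log\sigma^*\le H_\Lambda\quad\text{for every input }\rho .
\]
To get it, differentiate $t\mapsto S(\Lambda((1-t)\rho^*+t\rho))$ at $t=0^+$. Its right derivative equals $-\Tr(\Lambda(\rho)-\sigma^*)\log\sigma^*$, which must be $\le 0$.

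This argument first needs every $\Lambda(\rho)$ to be supported in $\supp\sigma^*$. I will establish that separately. If some $\Lambda(\rho)$ had weight outside $\supp\sigma^*$, the mixture entropy would grow like $-t\log t$ near $t=0$. The right derivative would then be $+\infty$, contradicting maximality.

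With supports settled, set $L=-\log\sigma^*$ on $\supp\sigma^*$. Its spectrum lies in $[0,\ell_{\max}]$, where $\ell_{\max}=-\log\lambda_{\min}^{+}(\sigma^*)$ depends only on $\Lambda$.

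\textbf{Step 2 (the projector).} Let $P$ be the spectral projector of $L^{(n)}=\sum_t L_t$ onto eigenvalues $\le n(H_\Lambda+\gamma)$. Equivalently, $P$ keeps the eigenvectors of $\sigma^{*\otimes n}$ with eigenvalue $\ge 2^{-n(H_\Lambda+\gamma)}$. Since $\Tr\sigma^{*\otimes n}\le 1$, the rank of $P$ is at most $2^{n(H_\Lambda+\gamma)}$. The projector is fixed in advance and does not depend on the controller or on any reference.

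\textbf{Step 3 (adaptive martingale; main obstacle).} The rejected weight is the probability of a product measurement of each output $t$ in the eigenbasis of $\sigma^*$. The key point is that this measurement can be moved in time. Output $t$ is untouched after it is produced, and every later controller step and every later use of $\Lambda$ acts on other registers. So measuring output $t$ immediately after it appears gives the same joint outcome distribution as measuring it at the end, whatever reference systems the controller holds.

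Conditioned on the earlier outcomes $\ell_1,\dots,\ell_{t-1}$ and the resulting post-measurement global state, the $t$-th input is some state $\rho_t$. The outcome $\ell_t$ therefore has conditional mean $\Tr\Lambda(\rho_t)L\le H_\Lambda$ by Step 1. Hence $X_t=\ell_t-H_\Lambda$ is a supermartingale-difference sequence with increments in $[-H_\Lambda,\ell_{\max}]$. Azuma--Hoeffding then gives
\[
\Pr\Bigl[\textstyle\sum_t\ell_t>n(H_\Lambda+\gamma)\Bigr]\le\exp\bigl(-2n\gamma^2/\ell_{\max}^2\bigr),
\]
so one can take $c_\Lambda=2/\ell_{\max}^2$. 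Nothing here depends on the controller or on the reference dimension. The work in this step is the commutation claim and the correct conditioning on a quantum adaptive process.

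\textbf{Deterministic case.} Suppose $\sigma^*$ has a flat spectrum on its support, meaning the spectral costs are deterministic. Then every $\ell_t$ equals $H_\Lambda$ exactly, and the rejected weight is $0$.
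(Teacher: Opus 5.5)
Your proposal is correct and follows the paper's proof essentially step for step. The shared steps are the entropy-maximizing output with its support argument and first-order inequality $-\Tr\Lambda(\rho)\log\sigma^*\le H_\Lambda$, the fixed spectral projector of $\sigma^{*\otimes n}$ with the trace-counting rank bound, and hypothetical eigenbasis measurements moved to just after production, followed by an iterated conditional Hoeffding bound with the same constant $c_\Lambda=2/\ell_{\max}^2$. Two small corrections. First, the increments $X_t=\ell_t-H_\Lambda$ lie in $[-H_\Lambda,\ell_{\max}-H_\Lambda]$, an interval of width $\ell_{\max}$, rather than in $[-H_\Lambda,\ell_{\max}]$; the narrower interval is what justifies the exponent you state. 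Second, writing the conditional output as $\Lambda(\rho_t)$ relies on each use of $\Lambda$ being a fresh, independent call, and you should say so explicitly, as the paper does.
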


\begin{proof}
Choose an entropy-maximizing output $\zeta$. Its support contains every output
support: mixing an output with positive weight outside $\supp\zeta$ into
$\zeta$ would increase entropy by a positive leading $-t\log t$ term. On the
common support the directional entropy derivative toward any output $\omega$
is $-\Tr(\omega-\zeta)\log\zeta\le 0$. Thus
\[
\Tr\Lambda(\rho)(-\log\zeta)\ \le\ S(\zeta)=H_\Lambda .
\]
Let $z_j>0$ be its eigenvalues and $L_\Lambda=\max_j(-\log z_j)$.
Hypothetically measure each spent output in this basis. Conditioned on earlier
outcomes, the next input is a density operator and its channel resource is
fresh; therefore $X_i=-\log z_{j_i}$ has conditional mean at most $H_\Lambda$
and lies in $[0,L_\Lambda]$. More explicitly, for a positive-probability
hypothetical history $j_{<i}$, let $\xi_{\rg{S}_i\rg{K}_i\mid j_{<i}}$ be the
normalized conditional state just before visit $i$, with $\rg{K}_i$ containing
the controller and all its references. Freshness makes the current dilation
resource product with this joint state. Writing
$\rho_i(j_{<i})=\Tr_{\rg{K}_i}\xi_{\rg{S}_i\rg{K}_i\mid j_{<i}}$, one has
\[
\Pr(j_i=j\mid j_{<i})=\bra{j}\Lambda(\rho_i(j_{<i}))\ket{j},
\qquad
\mathbb{E}[X_i\mid j_{<i}]=\Tr\Lambda(\rho_i(j_{<i}))(-\log\zeta)\le H_\Lambda.
\]
The conditional input may depend on the full history and be mixed because of
coherent entanglement; neither fact changes the channel-output formula.
Zero-probability histories may be omitted. These measurements are proof
devices, not information supplied to the controller. The conditional
exponential-moment bound follows directly by iterating
$\mathbb{E}[\exp(t(X_i-\mathbb{E}[X_i\mid\text{past}]))\mid\text{past}]
\le\exp(t^2L_\Lambda^2/8)$, then minimizing
$\exp(-tn\gamma+nt^2L_\Lambda^2/8)$ over $t\ge 0$. It gives
\[
\Pr\Bigl\{\sum_i X_i>n(H_\Lambda+\gamma)\Bigr\}
\le\exp(-2n\gamma^2/L_\Lambda^2).
\]
Indeed $\mathbb{E}[\exp(t(X_i-H_\Lambda))\mid\text{past}]
\le\exp(t^2L_\Lambda^2/8)$ for $t\ge 0$, which gives the iteration even when
the conditional means vary. Thus $c_\Lambda=2/L_\Lambda^2$ is independent of
the tester; it need not be uniform over different channels or witnesses with
small positive eigenvalues.

This is conditional iteration of the bounded-variable exponential-moment
estimate underlying Hoeffding's inequality \cite{Hoeffding1963}. For
completeness, the log moment-generating function of a variable in $[0,L]$ has
second derivative equal to its variance in an exponentially tilted law, at
most $L^2/4$. Integrating twice from zero gives the centered bound $t^2L^2/8$,
also for each conditional law. The application here selects one spectral cost
from the entropy-maximizing output, uniformly over controllers.

For $L_\Lambda=0$ the event is empty. The hypothetical measurements commute
with subsequent service and controller operations up to the boundary just
before compression, because these operations do not touch spent outputs.
Thus the same probability is the rejected weight of the unmeasured state
at that boundary. No commutation with the later compressor or encoder is
asserted or needed. Every
accepted eigenstring has product $\zeta$ weight at least
$2^{-n(H_\Lambda+\gamma)}$, so there are at most $2^{n(H_\Lambda+\gamma)}$
such strings. This proves the claim without i.i.d.\ inputs or a union bound
over users.
\end{proof}

For $\Lambda=\Gamma_C$, fresh $\tau$ cells therefore give a user-independent
$P_{\rg{B}}$ on the spent $\rg{B}^n$ cells with rank at most
$2^{n(v+\gamma)}$ and $\mathrm{loss}_{\rg{B}}\le\exp(-c_{\rg{B}}n\gamma^2)$.
All physical basis changes and compressors depend only on this fixed witness,
$n$ and $\gamma$. The same lemma applies to untouched minimal environments of
$\Phi$.

For comparison, Metger et al.\ \cite[Theorem 4.1 and Appendix A]{Metger2024}
bound sequential
smooth entropies. Their min-entropy theorem requires
$\Tr_{\rg{A}_i\rg{R}_i}M_i=\mathcal{R}_i'\Tr_{\rg{R}_{i-1}}$; Appendix A
Eq.~(A.2) there gives a max-entropy form without that restriction. Such
statewise bounds alone do not specify one compression projector shared by
every tester. Here the fixed $P_{\rg{B}}$ preserves all complete adaptive
tests in Definition~\ref{def:complete-error-rates} through the trace-distance
argument; Lemma~\ref{lem:exterior-cuts} additionally constructs simultaneous
support and spectral-cap cuts. We use the direct proof above, not an
entropy-accumulation citation, to infer projector uniformity or seed
freshness.

\subsection{The complete purified exterior}

\begin{lemma}
\label{lem:exterior-cuts}
In an exact comparison block there are commuting proof projectors on
$\rg{B}^n,\rg{H},\rg{F}^n$ whose success $p$ is at least
$1-\mathrm{loss}_{\rg{B}}-\mathrm{loss}_{\rg{E}}-\mathrm{loss}_{\rg{F}}$, and
whose normalized pure projection obeys
\[
\rank\hat\rho_{\rg{H}}\le K=2^{n(h+\gamma)},
\qquad
\lVert\hat\rho_{\rg{B}^n\rg{H}}\rVert_\infty\le 2^{-n(\sigma-\gamma)}/p .
\]
Here $\rg{H}$ is a purification of the \emph{complete} future-accessed
exterior, excluding the independent seed and current source. No rank bound on
its raw mixed marginal is asserted.
\end{lemma}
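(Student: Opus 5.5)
The plan is to realise everything on one global pure state and to use three independent spectral cuts, one on each of $\rg{B}^n$, $\rg{H}$ and $\rg{F}^n$. Fix the exact comparison block: $n$ visits of the fixed exact collision $C=(U,\tau)$, each drawing a fresh cell in $\tau$ with inert purifier $\rg{F}_i$. Let $\rg{H}$ purify everything the future can access, namely controller, references and returned outputs, excluding the seed and the current source. Then $\ket{\Psi}$ on $\rg{B}^n\rg{H}\rg{F}^n$ is pure. I will take $P_{\rg{B}}$ to be the projector supplied after Lemma~\ref{lem:adaptive-projector} for $\Lambda=\Gamma_C$, and $\Pi_{\rg{F}}$ to be the projector onto eigenvectors of $\tau^{\otimes n}$ with eigenvalue at most $2^{-n(\sigma-\gamma)}$. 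The new object is $P_{\rg{H}}$, constructed in the second paragraph. The three projectors act on distinct tensor factors, so they commute, and $Q=P_{\rg{B}}\otimes P_{\rg{H}}\otimes\Pi_{\rg{F}}$ is a projector. The operator inequality $\one-ABC\le(\one-A)+(\one-B)+(\one-C)$ for commuting projectors then gives $p=\bra{\Psi}Q\ket{\Psi}\ge 1-\mathrm{loss}_{\rg{B}}-\mathrm{loss}_{\rg{E}}-\mathrm{loss}_{\rg{F}}$. The $\rg{F}^n$ marginal is exactly $\tau^{\otimes n}$, because no one ever touches $\rg{F}$. Hence $\mathrm{loss}_{\rg{F}}$ is an i.i.d.\ lower-tail Hoeffding bound on $-\log\lambda_j$, which is the deterministic case of the estimate inside Lemma~\ref{lem:adaptive-projector} and decays exponentially in $n\gamma^2$.

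For $P_{\rg{H}}$, I will transfer the problem to a virtual minimal environment. Because $C$ implements $\Phi$ exactly, the joint state of the exterior in the real block coincides with the one obtained by replacing every visit with the pure Stinespring isometry $V$ into $\rg{A}\rg{E}$. In that virtual run the untouched $\rg{E}^n$, together with the purification of the seed and source that is excluded here, purifies the exterior. Lemma~\ref{lem:adaptive-projector} applied to $\Lambda=\Phi^c$, with $H_{\Phi^c}=h$, then gives a controller-independent projector $P_{\rg{E}}$ of rank at most $2^{n(h+\gamma)}=K$ and rejected weight $\mathrm{loss}_{\rg{E}}$. Uniqueness of purification supplies a partial isometry $X$ from $\supp\rho_{\rg{E}^n}$ into $\rg{H}$ (after enlarging $\rg{H}$ by an inert zero-weight summand if necessary) that carries one purification to the other. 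I set $P_{\rg{H}}=XP_{\rg{E}}X^*$. This projector has rank at most $K$ and the same rejected weight. It is used only in the proof, so its possible dependence on the tester is harmless, whereas $P_{\rg{B}}$ must stay user-independent and does.

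The two bounds for the normalised state $\hat\rho=Q\Psi Q/p$ follow directly. The rank bound holds because $\hat\rho_{\rg{H}}$ is supported inside $\operatorname{ran}P_{\rg{H}}$. For the norm bound, $\hat\rho$ is pure, so $\hat\rho_{\rg{B}^n\rg{H}}$ and $\hat\rho_{\rg{F}^n}$ have the same nonzero spectrum. Tracing out $\rg{B}^n\rg{H}$ with the contraction $P_{\rg{B}}\otimes P_{\rg{H}}$ inserted yields the operator bound $p\,\hat\rho_{\rg{F}^n}\le\Pi_{\rg{F}}\tau^{\otimes n}\Pi_{\rg{F}}$. Its operator norm is at most $2^{-n(\sigma-\gamma)}$, which gives the factor $1/p$ and no rank statement for the raw marginal.

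I expect the main obstacle to be the second step. There I must justify carefully that exactness of $\Phi_C$ makes the adaptive exterior state identical in the real and virtual runs at the compression boundary, so that Lemma~\ref{lem:adaptive-projector} applies to untouched virtual environments even though the controller is coherent. I must also handle the singular supports and the excluded seed and source when applying purification uniqueness. My first attempt will induct visit by visit, using the channel identity $\Tr_{\rg{B}}U(\cdot\otimes\tau)U^*=\Tr_{\rg{E}}V\cdot V^*$ on the complete current state.
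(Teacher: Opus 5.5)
Your proposal follows the paper's proof in structure. It uses the same three cuts on distinct registers: $P_{\rg{B}}$ from Lemma~\ref{lem:adaptive-projector} for $\Gamma_C$, a virtual $P_{\rg{E}}$ for $\Phi^c$ obtained by rerunning the purified controller against fresh minimal dilations, and the i.i.d.\ lower-tail cut on $\rg{F}^n$. It also uses the same union bound and the same contraction-plus-purity argument for the cap $2^{-n(\sigma-\gamma)}/p$. Your planned visit-by-visit induction for the equality of the real and virtual $\rg{H}$ marginals is what the paper's one-line justification, ``both provide exactly $\Phi$ at every visit'', amounts to.

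The only divergence is the passage from $P_{\rg{E}}$ to $P_{\rg{H}}$, and that step needs a small repair. The paper sets $X_{\rg{H}}=\Tr_{\rg{E}^n}[(P_{\rg{E}}\otimes\one)\ketbra{\theta}{\theta}(P_{\rg{E}}\otimes\one)]$. It notes $0\le X_{\rg{H}}\le\rho_{\rg{H}}$ and $\rank X_{\rg{H}}\le K$, and takes $P_{\rg{H}}=\supp X_{\rg{H}}$, which is a projector by construction. You instead transport $P_{\rg{E}}$ through a partial isometry $X$ with $X^*X=\Pi$, the support projection of $\rho_{\rg{E}^n}$. The fact that gives $\rho_{\rg{H}}=X\rho_{\rg{E}^n}X^*$ is the Schmidt decomposition of the pure virtual state $\theta_{\rg{E}^n\rg{H}}$, not uniqueness of purification. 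It requires $\rg{E}^n$ \emph{alone} to purify $\rg{H}$. That holds because $\rg{H}$ already contains $\rg{E}'$ and the dilated records, so no seed or source purifier should enter the virtual run.

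The real problem is that, with $X$ defined only on $\supp\rho_{\rg{E}^n}$, the operator $XP_{\rg{E}}X^*=X\Pi P_{\rg{E}}\Pi X^*$ is a projector only if $P_{\rg{E}}$ commutes with $\Pi$. That need not hold. $P_{\rg{E}}$ is a spectral projector of $\zeta^{\otimes n}$, while the support of the adaptive environment state is generally not spanned by $\zeta$-eigenstrings. Then $Q$ is not a projector, and $\hat\rho=Q\Psi Q/p$ is misnormalised.

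The fix is to replace $XP_{\rg{E}}X^*$ by its support projector. That projector has rank at most $K$ and dominates $XP_{\rg{E}}X^*$, so its weight on $\rho_{\rg{H}}$ is at least $\Tr(P_{\rg{E}}\rho_{\rg{E}^n})\ge 1-\mathrm{loss}_{\rg{E}}$. Equivalently, you can extend $X$ to a full isometry after tensoring $\rg{E}'$ with a pure ancilla, or take the top-$K$ spectral projector of the isospectral $\rho_{\rg{H}}$.

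After that fix, both routes prove the same lemma. Yours isolates the underlying fact that the complementary marginals of $\theta$ are isospectral. The paper's domination $X_{\rg{H}}\le\rho_{\rg{H}}$ yields a projector directly, with no Schmidt-basis choice or padding.
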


\begin{proof}
At a comparison boundary let $\rg{E}_{\mathrm{raw}}$ contain the entire user
and references, all unused stock, and every other spectator that will be
accessed again. Exclude the seed $\rg{Z}$, the $n$ current source cells, and
permanently parked registers. The source $\tau^{\otimes n}$ is independent of
$\rg{E}_{\mathrm{raw}}$ and $\rg{Z}$ jointly. Purify
$\rg{E}_{\mathrm{raw}}$ to $\rg{H}=\rg{E}_{\mathrm{raw}}\rg{E}'$ for analysis,
and purify each source with $\rg{F}_i$. Dilate the tester's operations,
retaining all its records in $\rg{H}$. The final
$\rg{B}^n\rg{F}^n\rg{H}$ state is pure, and the $\rg{F}^n$ marginal stays
$\tau_{\rg{F}}^{\otimes n}$. The device acts on neither $\rg{F}^n$ nor
$\rg{E}'$. These are comparison purifications, not extra physical workspace or
modifications of an actual inert purifier.

Run the same purified controller with fresh pure minimal dilations of $\Phi$
instead. The resulting virtual state $\theta_{\rg{E}^n\rg{H}}$ is pure and its
$\rg{H}$ marginal equals the actual comparison $\rg{H}$ marginal: both provide
exactly $\Phi$ at every visit, including to reference-entangled inputs.
Lemma~\ref{lem:adaptive-projector} gives a virtual $P_{\rg{E}}$ of rank at
most $K$ and $\mathrm{loss}_{\rg{E}}\le\exp(-c_{\rg{E}}n\gamma^2)$. Define
\[
X_{\rg{H}}=\Tr_{\rg{E}^n}\bigl[(P_{\rg{E}}\otimes\one)
\ketbra{\theta}{\theta}(P_{\rg{E}}\otimes\one)\bigr].
\]
Then $0\le X_{\rg{H}}\le\rho_{\rg{H}}$, since partial trace kills the cross
terms between $P_{\rg{E}}$ and $\one-P_{\rg{E}}$, and
$\rank X_{\rg{H}}\le\rank P_{\rg{E}}\le K$. If
$P_{\rg{H}}=\supp X_{\rg{H}}$, then
$\Tr P_{\rg{H}}\rho_{\rg{H}}\ge\Tr X_{\rg{H}}\ge 1-\mathrm{loss}_{\rg{E}}$.
The projector $P_{\rg{H}}$ may depend on the user and is only a proof
projector. An initially mixed spectator may have arbitrarily large raw rank
even when $h=0$; including $\rg{E}'$ before this argument is essential.

The positive eigenvalues of $\tau$ have finite logarithmic costs. The ordinary
lower-tail bound for independent source eigenstrings supplies $P_{\rg{F}}$
accepting $-\log p_x\ge n(\sigma-\gamma)$, with
$\mathrm{loss}_{\rg{F}}\le\exp(-c_{\rg{F}}n\gamma^2)$, and
\[
P_{\rg{F}}\,\tau_{\rg{F}}^{\otimes n}\,P_{\rg{F}}
\le 2^{-n(\sigma-\gamma)}P_{\rg{F}} .
\]
Zero eigenvalues have zero probability and are omitted. Deterministic-cost
tails have zero loss. The projectors $P_{\rg{B}},P_{\rg{H}},P_{\rg{F}}$ act on
distinct registers and commute; the union bound gives the stated $p$.
Crucially the virtual $P_{\rg{E}}$ is not applied to the mixed-collision
purification alongside a physical bath projector.

For every vector on $\rg{F}$, the quadratic form of the $\rg{F}$ marginal
after projecting $\rg{B}\rg{H}$ is at most its original value, because
$P_{\rg{B}}P_{\rg{H}}$ is a positive contraction. Thus that unnormalized
$\rg{F}$ marginal is at most $\tau_{\rg{F}}^{\otimes n}$. After $P_{\rg{F}}$
and normalization it is capped by $2^{-n(\sigma-\gamma)}/p$. Purity transfers
this nonzero spectrum to $\rg{B}^n\rg{H}$. The normalized projected state has
half distance $\sqrt{1-p}$ from the original pure state. No independence of
spectral events is needed, and the factor $1/p$ cannot be suppressed.
\end{proof}

\subsection{Second moment with a retained physical seed}

\begin{lemma}
\label{lem:seed-decoupling}
Let $\rg{C}=\rg{M}\rg{G}$, $\dim\rg{C}=D_{\rg{C}}$, $\dim\rg{M}=m$, and
$\rho_{\rg{C}\rg{H}}$ be normalized with $\rank\rho_{\rg{H}}\le K$ and
$\lVert\rho_{\rg{C}\rg{H}}\rVert_\infty\le\lambda$. A uniform exact unitary
two-design $V_z$ on $\rg{C}$ gives
\[
\mathbb{E}_z\bigl\lVert\Tr_{\rg{G}}(V_z\rho_{\rg{C}\rg{H}}V_z^*)
-\pi_{\rg{M}}\otimes\rho_{\rg{H}}\bigr\rVert_1
\le\sqrt{m^2K\lambda/D_{\rg{C}}}.
\]
If an independent actual classical seed $\rg{Z}$ is retained, the half trace
distance of $\rg{Z}\rg{M}\rg{H}$ from
$\pi_{\rg{Z}}\otimes\pi_{\rg{M}}\otimes\rho_{\rg{H}}$ is at most half this
expression.
\end{lemma}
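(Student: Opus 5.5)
The plan is the standard decoupling second-moment argument, done with care about the retained seed. Write $\tilde\rho_z=\Tr_{\rg{G}}(V_z\rho_{\rg{C}\rg{H}}V_z^*)$ and $\Delta_z=\tilde\rho_z-\pi_{\rg{M}}\otimes\rho_{\rg{H}}$. Everything will be controlled in Hilbert--Schmidt norm and then converted to trace norm.

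The first step converts trace norm to a weighted Hilbert--Schmidt norm. Let $\Pi_{\rg{H}}$ be the support projector of $\rho_{\rg{H}}$, which has rank at most $K$. Every $\tilde\rho_z$ has $\rg{H}$ marginal $\rho_{\rg{H}}$, so $\Delta_z$ is supported on $\rg{M}\otimes\supp\rho_{\rg{H}}$. That space has dimension at most $mK$, and Cauchy--Schwarz gives
\[
\lVert\Delta_z\rVert_1\le\sqrt{mK}\,\lVert\Delta_z\rVert_2 .
\]
Taking $\mathbb{E}_z$ and using Jensen, it suffices to show
\[
\mathbb{E}_z\lVert\Delta_z\rVert_2^2\le m\lambda/D_{\rg{C}} .
\]
With the prefactor $\sqrt{mK}$ this gives $\sqrt{m^2K\lambda/D_{\rg{C}}}$, which is exactly the stated form. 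This bookkeeping, rather than a smoothed min-entropy weighting, is what makes the claimed constant come out.

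The second step is the two-design computation. Expanding the square,
\[
\mathbb{E}\lVert\Delta_z\rVert_2^2=\mathbb{E}\Tr\tilde\rho_z^2-\Tr\bigl[(\pi_{\rg{M}}\otimes\rho_{\rg{H}})^2\bigr].
\]
This holds because the first moment of the design gives $\mathbb{E}\tilde\rho_z=\pi_{\rg{M}}\otimes\rho_{\rg{H}}$, and the cross term reduces to the same quantity. Next, use the swap trick $\Tr\tilde\rho^2=\Tr[(\tilde\rho\otimes\tilde\rho)F_{\rg{M}}F_{\rg{H}}]$. The exact two-design replaces $\mathbb{E}V^{\otimes2}{}^*(F_{\rg{M}}\otimes\one_{\rg{G}})V^{\otimes2}$ by $\alpha\one+\beta F_{\rg{C}}$. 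The standard Weingarten coefficients are $\alpha=(m^2g-m)/(D^2-1)\cdot(\ldots)$, with $g=\dim\rg{G}$, $D=mg$. These give
\[
\mathbb{E}\Tr\tilde\rho^2\le\frac1m\Tr\rho_{\rg{H}}^2+\frac1g\Tr\rho_{\rg{C}\rg{H}}^2 .
\]
Subtracting $\Tr\rho_{\rg{H}}^2/m$ leaves at most $\Tr\rho_{\rg{C}\rg{H}}^2/g\le\lambda/g=m\lambda/D_{\rg{C}}$, using $\Tr\rho^2\le\lVert\rho\rVert_\infty$ for a normalized state. I expect the main care point to be checking the signs of the Weingarten coefficients, so that dropping the negative terms is legitimate. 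I will verify this directly, since $\alpha\le1/m$ and $\beta\le1/g$ for $D\ge2$, with the trivial cases $m=1$ or $g=1$ handled separately.

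The third step treats the retained seed. With a classical independent seed, the state of $\rg{Z}\rg{M}\rg{H}$ is
\[
\sum_z\frac{1}{|\rg{Z}|}\ketbra{z}{z}\otimes\tilde\rho_z .
\]
Its half trace distance from $\pi_{\rg{Z}}\otimes\pi_{\rg{M}}\otimes\rho_{\rg{H}}$ equals, by block diagonality in $z$, exactly $\tfrac12\mathbb{E}_z\lVert\Delta_z\rVert_1$. Applying the first bound then gives half the displayed expression. The essential point is that the seed distribution is uniform over the exact design and independent of $\rho_{\rg{C}\rg{H}}$, so no averaging is lost and no trace over $\rg{Z}$ is needed.
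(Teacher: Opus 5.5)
Your proposal is correct and follows the paper's proof essentially step for step: the two-design twirl of $\mathrm{Swap}_{\rg{M}}\otimes\one_{\rg{G}\rg{G}}$ into $\alpha\one+\beta\,\mathrm{Swap}_{\rg{C}}$, the swap trick, Cauchy--Schwarz on the support $\rg{M}\otimes\supp\rho_{\rg{H}}$ of dimension at most $mK$ together with Jensen and $\Tr\rho_{\rg{C}\rg{H}}^2\le\lambda$, and the cq block trace-norm identity for the jointly independent seed. The one blemish is your placeholder for $\alpha$: the correct coefficients are $\alpha=(D_{\rg{C}}^2-m^2)/(m(D_{\rg{C}}^2-1))$ and $\beta=D_{\rg{C}}(m^2-1)/(m(D_{\rg{C}}^2-1))$, which do satisfy your bounds $\alpha\le 1/m$ and $\beta\le m/D_{\rg{C}}$, whereas the paper keeps the exact identity $\mathbb{E}\Tr\Delta^2=\beta[\Tr\rho_{\rg{C}\rg{H}}^2-\Tr\rho_{\rg{H}}^2/D_{\rg{C}}]$ instead of dropping terms.
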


\begin{proof}
For $\Delta_V=\Tr_{\rg{G}}(V\rho_{\rg{C}\rg{H}}V^*)
-\pi_{\rg{M}}\otimes\rho_{\rg{H}}$, the two-copy Haar twirl of
$\mathrm{Swap}_{\rg{M}}\otimes\one_{\rg{G}\rg{G}}$ equals
$\alpha\one+\beta\,\mathrm{Swap}_{\rg{C}}$. The untwirled operator has trace
$D_{\rg{C}}^2/m$ and trace against $\mathrm{Swap}_{\rg{C}}$ equal to
$D_{\rg{C}}m$; hence, for $D_{\rg{C}}>1$,
\[
\beta=\frac{D_{\rg{C}}(m^2-1)}{m(D_{\rg{C}}^2-1)}\le\frac{m}{D_{\rg{C}}},
\qquad \alpha=\frac{1}{m}-\frac{\beta}{D_{\rg{C}}}.
\]
The swap trick and the cross term against
$\pi_{\rg{M}}\otimes\rho_{\rg{H}}$ give exactly
\[
\mathbb{E}\Tr\Delta_V^2=\beta\Bigl[\Tr\rho_{\rg{C}\rg{H}}^2
-\Tr\rho_{\rg{H}}^2/D_{\rg{C}}\Bigr].
\]
$\Delta_V$ is supported on $\rg{M}\otimes\supp\rho_{\rg{H}}$, of dimension at
most $mK$. Cauchy--Schwarz for its singular values, Jensen, and
$\Tr\rho_{\rg{C}\rg{H}}^2\le\lambda$ give the claimed trace-norm bound. If
$m=1$, then $\Delta_V=0$ exactly, including $D_{\rg{C}}=1$. An exact
two-design has these same second moments. This is the elementary
specialization underlying Dupuis et al.\
\cite[Theorem 3.3 and Lemmas 3.4--3.5]{Dupuis2014}; the factor $1/2$ is
explicit here.

For $N$ qubits, the uniform Clifford group modulo global phases is such a
finite exact two-design, by Dankert et al.\
\cite[Theorem 1, pp.~2--3]{Dankert2009}. Its size
$J_N$ has $\log J_N\le 2N(2N+1)$: each of $2N$ Pauli generators has at most
$2^{2N+1}$ signed images, and these images determine conjugation. For $N=0$
use $J_0=1$. Initialize $\rg{Z}$ in $\pi_{J_N}$ offline and apply
$\sum_z\ketbra{z}{z}\otimes V_z$. If $\rg{Z}$ is independent of
$\rg{C}\rg{H}$ \emph{jointly} beforehand, the cq block trace-norm identity
gives
\[
\Dist(\rho_{\rg{Z}\rg{M}\rg{H}},
\pi_{\rg{Z}}\otimes\pi_{\rg{M}}\otimes\rho_{\rg{H}})
=\tfrac12\,\mathbb{E}_z\lVert\Delta_{V_z}\rVert_1 .
\]
Keeping the random unitary choice in the output is also explicit in Dupuis et al.\
\cite[discussion after Corollary 3.2]{Dupuis2014}. The additional accounting
here charges this physical label and reuses it across blocks through the joint
comparison in Proposition~\ref{prop:joint-recycling}.

No one seed value is selected for all users. This is a guarantee for each user
on a physically retained mixed seed, with constants uniform over users. The
seed's inert purifier is not included in this reduced independence claim; it
is never accessed. Marginal independence of $\rg{Z}$ from $\rg{C}$ and from
$\rg{H}$ separately would not suffice.
\end{proof}

\subsection{Exact integer wires and reversible preparation}
\label{app:integer-wires}

\begin{lemma}
\label{lem:integer-wires}
Fix $n\ge 1$ and $\gamma>0$. Define integer widths
\[
\begin{gathered}
N=\lceil n(v+\gamma)\rceil,\quad
m_i=\lceil n(\sigma+2\gamma)\rceil,\quad
m_o=\max(0,\lfloor n(\sigma-a-6\gamma)\rfloor),\quad
r_p=\lceil 3n\gamma\rceil+4,\\
c=m_i-m_o,\quad d=N+r_p-m_i,\quad z=c+d=N-m_o+r_p,\\
e=n\lceil\log R\rceil,\quad w=\max(0,e-N).
\end{gathered}
\]
The full block can be implemented on permanent $\rg{M},\rg{W}$ of $m_o,w$
qubits and one disjoint $z$-qubit tranche. The tranche starts with $c$ mixed
and $d$ pure qubits. Both preparation and encoding residues end in that
tranche, which is never accessed again. All dimensions in this statement are
physical.
\end{lemma}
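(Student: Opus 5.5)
The lemma is a wiring and counting statement, so I would verify it by constructing the block circuit explicitly and checking that every register it touches fits one of the three declared allocations.

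\textbf{Integer bookkeeping.} First check that the widths are nonnegative integers with the intended relations. Since $n(\sigma+2\gamma)\ge n(\sigma-a-6\gamma)$, we get $c\ge 0$. For $d$, note $v\ge\sigma$ (Theorem~\ref{thm:causal-balancing}), so
\[
N+r_p\ \ge\ n(v+\gamma)+3n\gamma+4\ \ge\ n(\sigma+2\gamma)+1\ \ge\ m_i,
\]
hence $d\ge 0$. The identity $z=c+d=N-m_o+r_p$ is immediate. Finally $e$ is the exact number of qubit wires holding $n$ spent cells of dimension $R$, after embedding each cell into $\lceil\log R\rceil$ qubits with zero-weight padding.

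\textbf{The block circuit.} I would build it in this order.

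1. \emph{Stock preparation.} Starting from the $c$ mixed qubits, combined with $m_o$ mixed qubits drawn from $\rg{M}$, we have $m_i$ flat qubits. From these, prepare reversibly an approximation to $\tau^{\otimes n}$ restricted to the typical projector $P_{\rg{F}}$ of Lemma~\ref{lem:exterior-cuts}. The nonflat spectrum is produced by a fixed permutation-then-isometry that maps flat strings onto typical eigenstrings, using $d$ pure qubits as target space. Typical eigenstrings have weight in $[2^{-n(\sigma+\gamma)},2^{-n(\sigma-\gamma)}]$, so a flat register of width $m_i$ can be split coherently by a fixed unitary, with the mismatch controlled by the $2\gamma$ slack. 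The extra $r_p$ qubits absorb the rounding and the ``which-bin'' labels.

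2. \emph{Exact collisions.} Run the $n$ exact collisions on the prepared cells.

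3. \emph{Compression.} Apply the fixed projector $P_{\rg{B}}$ of Lemma~\ref{lem:adaptive-projector} as an isometric compression of the $e$ spent wires into $N$ wires. The remaining $w=e-N$ wires (when positive) are pure workspace, returned to $\ket{0}$ up to the rejected weight; they are kept permanently in $\rg{W}$.

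4. \emph{Encoding.} Apply the seeded Clifford encoder of Lemma~\ref{lem:seed-decoupling} to $\rg{C}$, the $N$ compressed qubits together with the preparation residue. This splits $\rg{C}$ into a retained share $\rg{M}$ of $m_o$ qubits, which by the decoupling bound is close to $\pi_{\rg{M}}$ jointly with the seed and exterior and so can serve as mixed stock for the next block, and a message $\rg{G}$ of $N+r_p-m_o=z$ qubits. Overwrite the tranche with $\rg{G}$ by a SWAP-free relabelling: design the circuit so the output wires of $\rg{G}$ literally are the tranche wires.

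\textbf{Main obstacle.} The step I expect to be hardest is verifying that the wire count closes exactly, that is, that input tranche width $c+d$ equals output $\rg{G}$ width $z$, with no fresh ancilla. The dimension identity
\[
\dim(\text{tranche})\cdot 2^{m_o}=2^{z+m_o}=2^{N+r_p}
\]
shows the total dimension before and after the encoder matches. So everything reduces to a unitary on a fixed wire set, and I would present the block as one unitary on $\rg{M}\rg{W}\otimes\text{tranche}\otimes\text{cells}$, with the cells themselves identified with $e$ wires that $\rg{W}$ and the compression reuse.
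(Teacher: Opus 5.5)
Your architecture is the paper's: flat stock from $\rg{M}$ plus the $c$ fresh qubits, a reversible injection onto typical cell strings, the exact collisions, isometric compression of $\supp P_{\rg{B}}$, the seeded encoder, and parking of the message. Step~4, however, changes it in a way your argument does not support. You feed the $r_p$ preparation-residue qubits into the encoder together with the $N$ compressed qubits. The residue is then accessed again at closing, and part of its content is recycled through $\rg{M}$, so it is not the untouched parked residue the lemma describes. More importantly, the hypotheses of Lemma~\ref{lem:seed-decoupling} are then no longer the ones Lemma~\ref{lem:exterior-cuts} provides. That lemma gives the rank bound and the cap $2^{-n(\sigma-\gamma)}/p$ only for the $\rg{B}^n\rg{H}$ state. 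It does so in a comparison that replaces only the cell marginal by $\tau^{\otimes n}$ and may disregard the residue precisely because the residue is never touched again. Your variant might be rescued with a joint cell--residue comparison state and a new spectral cap for the enlarged encoder input, but you supply neither. The paper avoids the issue. It routes the residue wires into the current tranche immediately after preparation, encodes only the $N$ qubits of $\rg{C}$ into $\rg{M}$ ($m_o$ qubits) and $\rg{G}$ ($N-m_o$ qubits), and reads disjointness off $z=r_p+(N-m_o)$, equation~\eqref{eq:tranche-split}.

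Your ``main obstacle'' is also misplaced, since $m_o+z=N+r_p$ is just the definition of $z$. The real counting happens at the opening. There the $m_i$ flat qubits, the $d$ pure tranche qubits and the $w$ pure $\rg{W}$ qubits must be carried onto $e$ cell wires, $r_p$ residue wires and $\max(0,N-e)$ pure compressor padding. This is \eqref{eq:wire-volume}, $m_i+d+w=e+r_p+\max(0,N-e)=m_o+z+w$. It shows that $\rg{W}$ supplies cell wires when $e>N$, and that pure padding is needed when $N>e$. The latter occurs, for example, when $v=\log R$ is an integer, and your compression step omits this case. The injection also needs three checks you do not make:
\begin{itemize}
\item capacity, $m_i\le e+r_p$, which uses $\sigma\le v\le\log R$;
\item fibre size, $\lfloor 2^{m_i}p_x\rfloor\le 2^{3n\gamma+1}<2^{r_p}$, which follows from $p_x\le 2^{-n(\sigma-\gamma)}$;
\item the preparation error $\delta_{\mathrm{prep}}\le\Pr\{\text{source atypical}\}+2^{-n\gamma}$.
\end{itemize}
The rounding-deficit term $2^{n(\sigma+\gamma)-m_i}\le 2^{-n\gamma}$ in that last bound counts at most $2^{n(\sigma+\gamma)}$ strings. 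It therefore uses the lower bound $p_x\ge 2^{-n(\sigma+\gamma)}$, which the one-sided $P_{\rg{F}}$ you cite does not impose. It also needs leftover labels sent injectively to unused cell/residue pairs. Saying the mismatch is ``controlled by the $2\gamma$ slack'' names this estimate without proving it.
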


\begin{proof}
The typical-spectrum threshold is the usual source-coding ingredient; see Abeyesinghe et al.\
\cite[section VII and Appendix A]{Abeyesinghe2009}. Here the preparation is
implemented as an injection of flat labels with all its fibres retained on the
counted wires below. Zero-pad the $R$-dimensional cell to $\lceil\log R\rceil$
qubits and extend $U$ unitarily outside its valid sector. From
$0\le a\le\sigma$, $g=v-\sigma\ge 0$ and $v\le\log R$, we have $c,d\ge 0$,
$m_o\le m_i$, $m_o\le N$, $m_i\le e+r_p$ and $z\ge r_p$. In particular
$d\ge ng+2n\gamma+3$ and $m_i\le e+2n\gamma+1\le e+r_p$. Clipping $m_o$
changes its ideal leading term $n(\sigma-a)$ by at most $6n\gamma+1$.
Consequently, uniformly over clipping,
\[
c=na+O(n\gamma+1),\qquad d=ng+O(n\gamma+1),\qquad z=nb+O(n\gamma+1).
\]
At opening, $\rg{M}$ and the fresh $c$ mixed tranche qubits give exactly $m_i$
mixed qubits, while all other local wires are pure. Their exact volume
identity is
\begin{equation}
\label{eq:wire-volume}
m_i+d+w=N+r_p+w=e+r_p+\max(0,N-e)=m_o+z+w .
\end{equation}
In the eigenbasis of $\tau^{\otimes n}$, for strings with probabilities in
$[2^{-n(\sigma+\gamma)},2^{-n(\sigma-\gamma)}]$, assign
$\lfloor 2^{m_i}p_x\rfloor$ distinct flat input labels to cell label $x$ and
different preparation-residue labels. Each fibre uses at most
$2^{3n\gamma+1}<2^{r_p}$ labels. There are at most $2^{n(\sigma+\gamma)}$
typical strings, so their total rounding deficit is at most
$2^{n(\sigma+\gamma)-m_i}\le 2^{-n\gamma}$. The desired subdistribution is
pointwise below the target distribution. Send leftover labels injectively to
unused cell/residue pairs; $m_i\le e+r_p$ guarantees capacity. Although these
labels can raise some marginal probabilities above target, total variation is
at most the missing subdistribution mass, giving
\[
\delta_{\mathrm{prep}}\le\Pr\{\text{source atypical}\}+2^{-n\gamma}.
\]
The two-sided source tail is at most $2\exp(-c_\tau n\gamma^2)$, with
deterministic costs treated exactly. The injection of an orthonormal input set
extends to a unitary on the full local space in \eqref{eq:wire-volume}; any
$\max(0,N-e)$ padding stays pure. The complete local positive spectrum remains
exactly $2^{m_i}$ copies of $2^{-m_i}$. Only its cell marginal is
approximately nonflat.

Route the $r_p$ preparation-residue positions into the current tranche and
never touch them again. In the comparison run replace only the cell marginal
by $\tau^{\otimes n}$; all future-accessed exterior registers remain
independent. The preparation error is charged once for the entire block by
contractivity, even with adaptive service. This is a comparison of reduced
states, not a physical erasure or operation on the purifying systems.

Run each exact collision when its input arrives. At an interior visit,
hand back its output after that collision. At the final visit of a block,
perform the collision, then the bath-only closing gates described below,
and then release that visit's output, all before the next input arrives.
Bath-only unitaries preserve the joint reduced state of every user output
and reference, including the current output. On $e$ cell positions and
$\max(0,N-e)$ padding, a fixed
unitary compresses $\supp P_{\rg{B}}$ into $N$ qubits $\rg{C}$ and $w$ pure
qubits $\rg{W}$. The rank bound in Lemma~\ref{lem:adaptive-projector}
guarantees this embedding; complete its remaining columns orthonormally. No
projection is performed or success flag measured. Split $\rg{C}$ into $\rg{M}$
of $m_o$ qubits and $\rg{G}$ of $N-m_o$ qubits, use
Lemma~\ref{lem:seed-decoupling}'s controlled encoder, and route $\rg{M},\rg{W}$
to their permanent positions. Park $\rg{G}$ beside the preparation residue in
the same tranche. The equality
\begin{equation}
\label{eq:tranche-split}
z=r_p+(N-m_o)
\end{equation}
proves that neither residue overlaps the other or any future stock. The
compression failure part remains coherently present on these same wires.
\end{proof}

\subsection{Joint recycling, every prefix and the repeated unitary}

\begin{proposition}
\label{prop:joint-recycling}
The block above has a user-independent reduced-state error
$\epsilon_n\le A[\exp(-c_0n\gamma^2)+2^{-c_1n\gamma}]$ for large $n\gamma^2$,
returning $\rg{M},\rg{W}$ and the seed jointly independent of the entire
future-accessed exterior in the comparison state. Constants depend only on the
fixed collision.
\end{proposition}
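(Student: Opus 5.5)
We prove Proposition~\ref{prop:joint-recycling} by a hybrid comparison over one block, assembled from the appendix lemmas in the order they act. The \emph{comparison state} is the block run with the cell marginal replaced by exact $\tau^{\otimes n}$ and with fresh cells. The real run differs from it only through the reversible preparation of Lemma~\ref{lem:integer-wires}. By contractivity under every later device and tester operation, that difference is charged once, at most $2\exp(-c_\tau n\gamma^2)+2^{-n\gamma}$, uniformly over users. During the block each visit applies the exact collision $U$. Hence, before the closing gates, the user's complete state (outputs and references) equals that of $n$ ideal $\Phi$ calls exactly. The closing gates act on the bath alone, so they cannot change it. Every interior prefix of the block is therefore exact, up to the preparation term.

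Next comes recycling. Fix the comparison boundary, and let $\rg{H}$ purify the complete future-accessed exterior, as in Lemma~\ref{lem:exterior-cuts}. Apply that lemma's commuting proof projectors. This gives a normalized pure state $\hat\rho$ within half distance $\sqrt{1-p}$ of the true one, with $\rank\hat\rho_{\rg{H}}\le 2^{n(h+\gamma)}$ and $\lVert\hat\rho_{\rg{B}^n\rg{H}}\rVert_\infty\le 2^{-n(\sigma-\gamma)}/p$. The projected state lies in $\supp P_{\rg{B}}$, so the fixed compressor maps it isometrically into $\rg{C}$ with $\rg{W}$ pure. The spectral cap is preserved on $\rg{C}\rg{H}$. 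Now apply Lemma~\ref{lem:seed-decoupling} with $m=2^{m_o}$, $D_{\rg{C}}=2^N$, $K=2^{n(h+\gamma)}$ and $\lambda=2^{-n(\sigma-\gamma)}/p$. This requires the seed $\rg{Z}$ to be independent of $\rg{C}\rg{H}$ \emph{jointly} at this boundary. The exponent is
\[
2m_o+n(h+\gamma)-n(\sigma-\gamma)-N\le 2n(\sigma-a-6\gamma)+nh-n\sigma-nv+O(n\gamma+1).
\]
Since $2a=h-g$ and $g=v-\sigma$, this equals $-c_1 n\gamma+O(1)$ after collecting the $\gamma$ terms. Let $\bar\rho$ denote the actual (unprojected) state at this boundary. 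Then $\rg{Z}\rg{M}\rg{H}$ is within $\tfrac12\,2^{-c_1n\gamma/2}/\sqrt p$ of $\pi_{\rg{Z}}\otimes\pi_{\rg{M}}\otimes\hat\rho_{\rg{H}}$. Triangle inequalities replace $\hat\rho_{\rg{H}}$ by $\bar\rho_{\rg{H}}$ and the projected state by the true one, at cost $O(\sqrt{1-p})$. With $\rg{W}$ pure, this shows that $\rg{M}\rg{W}\rg{Z}$ is jointly independent of the future-accessed exterior and has exactly the opening marginal. All that remains is the bound $\sqrt{1-p}\le\sqrt{3}\exp(-cn\gamma^2/2)$, which gives $\epsilon_n$ in the stated form.

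Chaining blocks goes by induction. Suppose that at a block opening the seed, $\rg{M}$ and $\rg{W}$ are within $\epsilon$ of $\pi\otimes\pi\otimes\ketbra{0}{0}$, independent of everything future-accessed. Replacing them by the ideal product costs $\epsilon$ by contractivity. The next block then starts from exactly the hypothesis used above, since the fresh tranche is new stock and the parked residues are never accessed again. Errors therefore add, at most one $\epsilon_n$ per block, and every prefix inherits the bound accumulated through its last completed block. The main obstacle is the joint-independence requirement for seed reuse. The seed must be decoupled from $\rg{H}$ together with $\rg{C}$, not merely marginally. The exterior $\rg{H}$ must also include every register the future device or user touches, including earlier $\rg{M},\rg{W}$ and the stock. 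I would handle this by defining $\rg{H}$ at each boundary only after the inductive replacement, so that $\rg{Z}$ is exactly independent in the hybrid. The cost of this is paid entirely in the triangle-inequality terms.
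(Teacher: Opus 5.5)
Your proposal is correct and follows the paper's proof essentially step for step: the preparation error is charged once by contractivity, then come the cuts of Lemma~\ref{lem:exterior-cuts}, isometric compression into $\rg{C}$ with $\rg{W}$ pure, and Lemma~\ref{lem:seed-decoupling} via the identity $\sigma-2a+h-v=0$ (your collected $\gamma$ terms give exactly $-11n\gamma$), followed by two $\sqrt{1-p}$ triangle-inequality terms and an additive hybrid over blocks in which joint seed freshness is used only in the comparison run. Two details need tightening: phrase the inductive hypothesis as closeness to the fixed ideal boundary state \eqref{eq:boundary-state}, with its ideal user part, because closeness to a product with the \emph{actual} exterior marginal naively yields only $e_{j+1}\le 2e_j+\epsilon_n$; and treat the clipped case $m_o=0$ separately, where your exponent inequality can fail but Lemma~\ref{lem:seed-decoupling} gives zero decoupling error because $\rg{M}$ is trivial.
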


\begin{proof}
Apply Lemma~\ref{lem:exterior-cuts} to the exact comparison block. After
compression its projected $\rg{W}$ is pure and its normalized
$\rg{C}\rg{H}$ state has rank bound $K$ and cap
$\lambda=2^{-n(\sigma-\gamma)}/p$. Before encoding the seed is independent of
$\rg{C}\rg{H}$ jointly: preparation and service have not touched it, and the
proof cuts can be chosen without referring to it. In particular the comparison
input is $\pi_{\rg{Z}}\otimes\rho_{\mathrm{source},\mathrm{exterior}}$, with
source and exterior also in product. Its purification and $P_{\rg{H}}$ may
depend on the tester but not on $\rg{Z}$; $P_{\rg{B}}$ depends only on the
fixed collision. Applying these proof cuts therefore preserves the product
with $\pi_{\rg{Z}}$. This is the joint hypothesis of
Lemma~\ref{lem:seed-decoupling}, including all unused stock and
future-accessed spectators in the exterior. For $m_o>0$,
\[
2m_o+n(h+\gamma)-n(\sigma-\gamma)-N\le-11n\gamma,
\]
since $\sigma-2a+h-v=0$. Lemma~\ref{lem:seed-decoupling} therefore bounds the
projected joint half distance by $2^{-11n\gamma/2}/(2\sqrt{p})$. For $m_o=0$
it is zero. Add $\sqrt{1-p}$ for smoothing the initial pure comparison, and
another for replacing its projected $\rg{H}$ marginal by its original $\rg{H}$
marginal. $\rg{W}$ is pure on the projected branch, so it is included jointly
in the same estimate. Tracing the mathematical $\rg{E}'$ then gives the bound
on the physical exterior. Including preparation once yields
\begin{equation}
\label{eq:block-error}
\epsilon_n\le\delta_{\mathrm{prep}}
+2\sqrt{\mathrm{loss}_{\rg{B}}+\mathrm{loss}_{\rg{E}}
+\mathrm{loss}_{\rg{F}}}+\frac{2^{-11n\gamma/2}}{2\sqrt{p}}
\end{equation}
when $m_o>0$; omit the last term otherwise. With $p\ge 1/2$ the asserted bound
follows, absorbing square roots into fixed positive constants.

For $L$ full blocks the boundary comparison is
\begin{equation}
\label{eq:boundary-state}
\rho_{\mathrm{user},\mathrm{ideal}}\otimes\pi_{\rg{Z}}\otimes\pi_{\rg{M}}
\otimes\ketbra{0}{0}_{\rg{W}}\otimes(\text{all unused mixed/pure tranches})
\otimes(\text{pure tail cells}).
\end{equation}
Old tranches are omitted only from this trace-distance metric; their full
dimensions remain allocated. Initially the actual reduced state is
\eqref{eq:boundary-state}, so $e_0=0$. Fix any adaptive tester, apply the
\emph{same} next-block CPTP map to its actual state and to
\eqref{eq:boundary-state}, and contract their distance. The uniform local
bound then gives $e_{j+1}\le e_j+\epsilon_n$. Thus actual learning of the seed
and correlations with recycled $\rg{M}$ are charged to $e_j$. Freshness is
used only in the comparison run, where the next source and seed are jointly
independent of the complete exterior. There is no assumption about
independence of errors.

At a prefix inside a block before its closing encoder, the user error is at
most $e_j+\delta_{\mathrm{prep}}$. The closing bath-only gates can be
performed after the last collision and before handing back that visit's
output, preserving immediate service. Fewer than $n$ tail visits use
preallocated pure minimal-Stinespring cells, exactly implementing $\Phi$;
these are part of the initial bath. Thus the entire horizon and every prefix
obey the conservative bound
$\delta_T\le L\epsilon_n+\delta_{\mathrm{prep}}$. When the final segment
consists only of pure tail cells, no extra preparation term is needed.
Stopping is handled by extending a stopped tester with dummy inputs while
keeping its records and stop flag, then tracing the continuation. This is
unconditional; no rare-event postselection estimate follows.

All opening, collision, closing and routing steps for visit $t$ form a fixed
unitary $V_t$ on the system and allocated non-clock bath. A pure counted
$(T+1)$-dimensional clock implements the \emph{same} unitary each visit:
\begin{equation}
\label{eq:repeated-unitary}
W_T=\sum_{t=0}^{T}\ketbra{t+1\bmod (T+1)}{t}_{\mathrm{clock}}\otimes V_t .
\end{equation}
Choose the unused $V_T$ arbitrarily unitary. Orthogonality of clock sectors
and unitarity of each $V_t$ prove $W_T$ is unitary. Starting at $0$, the first
$T$ visits do not wrap. All control resides in this fixed hardware. There is
no bound on gate count or latency, but the next input never arrives before the
previous output is handed back.
\end{proof}

\subsection{Actual spectrum and the every-horizon limit}

\begin{proposition}
\label{prop:actual-spectrum}
The construction has the exact initialization and separate limits asserted in
Theorem~\ref{thm:causal-balancing}.
\end{proposition}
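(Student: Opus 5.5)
The plan is to write out the exact initial state of the assembled all-horizon device and then read off each of the claimed limits from it. For a horizon $T$, I choose the block length $n=n_T$ and slack $\gamma=\gamma_T$ so that $n_T\to\infty$, $\gamma_T\to 0$ and $n_T\gamma_T^2/\log T\to\infty$ (for instance $n_T=\lceil T^{1/2}\rceil$ and $\gamma_T=T^{-1/8}$). The number of full blocks is $L_T=\lfloor T/n_T\rfloor$, and the remaining $T-L_Tn_T<n_T$ tail visits use the pure minimal-Stinespring cells of Proposition~\ref{prop:joint-recycling}.

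The bath is the tensor product of the following parts. A pure $(T+1)$-state clock as in \eqref{eq:repeated-unitary}. One seed $\rg{Z}$ in $\pi_{J_T}$, with $J_T=J_{N}$ the Clifford-group size on $N=\lceil n(v+\gamma)\rceil$ qubits. The permanent $\rg{M}$ register of $m_o$ qubits, initialized maximally mixed. The pure permanent $\rg{W}$ register of $w$ qubits. $L_T$ disjoint tranches, each with $c$ mixed and $d$ pure qubits as in Lemma~\ref{lem:integer-wires}. Finally the pure tail cells, each of dimension at most $q^2$.

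The mixed qubits are therefore $K_T=m_o+L_Tc$, and all other registers are pure. The positive spectrum is then exactly flat, consisting of $J_T2^{K_T}$ copies of $1/(J_T2^{K_T})$. This gives
\[
S(\omega_T)=\log J_T+K_T,\qquad \log R_T=\log(T+1)+\log J_T+m_o+w+L_Tz+(\text{tail}),
\]
which matches \eqref{eq:rate-diagonal} with $A_T$ the total data width.

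The limits follow by bounding each term. First, $\log J_T\le 2N(2N+1)=O(n_T^2)$. That is not $o(T)$ for $n_T=T^{1/2}$, so I instead take $n_T=\lceil T^{1/3}\rceil$, which gives $\log J_T/T\to0$ because the seed is reused across all blocks rather than refreshed. Second, $m_o,w\le n\lceil\log R\rceil+O(n)=O(n_T)$ and the tail contributes at most $2n_T\log q$, so both are $o(T)$. Third, Lemma~\ref{lem:integer-wires} gives $c=na+O(n\gamma+1)$ and $z=nb+O(n\gamma+1)$. Hence
\[
\frac{K_T}{T}=\frac{L_T n_T}{T}\,a+O(\gamma_T+1/n_T)\to a,\qquad \frac{\log R_T}{T}\to b,
\]
and $S(\omega_T)/T\to a$.

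For the error, Proposition~\ref{prop:joint-recycling} gives
\[
\delta_T\le L_T\epsilon_{n_T}+\delta_{\mathrm{prep}}\le \frac{T}{n_T}A\bigl[e^{-c_0n_T\gamma_T^2}+2^{-c_1n_T\gamma_T}\bigr]+\delta_{\mathrm{prep}}.
\]
With $n_T=T^{1/3}$ and $\gamma_T=T^{-1/12}$ we have $n_T\gamma_T^2=T^{1/6}$, which dominates $\log T$, and $\delta_{\mathrm{prep}}\to 0$ by its source-tail bound. This holds for every integer $T$, and small $T$ is covered by pure preallocated cells.

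The main obstacle is tuning the parameters jointly. The seed cost is quadratic in $n$, the concentration error must beat the block count $T/n$, and the $O(n\gamma+1)$ rounding must vanish per visit. I must also check that the constants $c_0,c_1,A$, which depend only on the fixed witness, make the polynomial choice valid from some finite threshold onward. Because $\gamma_T\to0$, I also need the large-$n\gamma^2$ hypothesis of Proposition~\ref{prop:joint-recycling} and the bound $p\ge1/2$ to hold eventually. Both follow from $n_T\gamma_T^2\to\infty$.
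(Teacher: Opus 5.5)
Your proposal is correct and follows the paper's own argument. It uses the same exact flat initialization (one reused seed, pure clock and rest, $K_T=L_Tc+m_o$ mixed qubits), the same term-by-term sublinearity bookkeeping and, after your self-correction, the same parameters $n\approx T^{1/3}$ and $\gamma=T^{-1/12}=n^{-1/4}$, so that $n\gamma^2=\sqrt{n}$ beats the $T/n$ block count; the one cosmetic difference is that the paper zero-pads each tail cell to $\lceil\log k\rceil$ qubits, which is still $O(n)$ and changes nothing.
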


\begin{proof}
Set $L=\lfloor T/n\rfloor$ and $t_{\mathrm{tail}}=(T-Ln)\lceil\log k\rceil$,
where $k=\rank J_\Phi$ is the minimal environment dimension. Zero-padding each
pure tail cell and unitary completion are counted. The exact numbers of
allocated data qubits and initially mixed data qubits are
\begin{equation}
\label{eq:data-widths}
A_T=m_o+w+Lz+t_{\mathrm{tail}},\qquad K_T=Lc+m_o .
\end{equation}
Since $d\ge 0$, $K_T\le A_T$. With the clock and the one reused seed,
\begin{equation}
\label{eq:initial-spectrum}
\begin{gathered}
R_T=(T+1)J_N2^{A_T},\qquad
\omega_T=\ketbra{0}{0}_{\mathrm{clock}}\otimes\pi_{J_N}\otimes
\pi_{2^{K_T}}\otimes\ketbra{0}{0}_{\mathrm{rest}},\\
P_T=J_N2^{K_T},\qquad S(\omega_T)=\log J_N+K_T .
\end{gathered}
\end{equation}
Its positive eigenvalues are exactly $1/P_T$ repeated $P_T$ times, with
$R_T-P_T$ zeros. Its minimal inert purifier has dimension $P_T$ and is never
accessed. The witness entropy $\sigma$ is an intermediate preparation
parameter; it is not substituted for the actual entropy in
\eqref{eq:initial-spectrum}.

For every sufficiently large integer $T$ take $n=\lfloor T^{1/3}\rfloor$ and
$\gamma=n^{-1/4}$. Fixed-witness constants have already been chosen. Then
$n\gamma^2=\sqrt{n}$, so the exponential in \eqref{eq:block-error} beats the
polynomial number $L$ of blocks; the other exponent is of order $n^{3/4}$.
Thus
\[
L\epsilon_n+\delta_{\mathrm{prep}}\to 0,\quad
A_T/T\to b,\quad K_T/T\to a,\quad
\log J_N/T=O_C(n^2/T)\to 0,
\]
\[
(m_o+w+t_{\mathrm{tail}})/T=O_C(n/T)\to 0,\qquad \log(T+1)/T\to 0 .
\]
Slack costs $O(T\gamma)$, rounding costs $O(T/n)$, and the seed $O_C(n^2)$
bits; all are sublinear. Finitely many smaller horizons use preallocated exact
pure cells and a counted clock. Consequently for every fixed witness and every
tolerance there is a finite threshold after which \emph{every} horizon has
small error and small rate discrepancies, including the separate seed and
mixed-data limits. No effective or witness-uniform threshold is asserted.

The same integers handle $m_o=0$, $g=0$, $a=0$, $R=1$ and $N>e$. For example
pure $\tau$ forces $\sigma=a=0$; its mixed slack consumption is $o(T)$. An
identity collision with an arbitrary mixed spectator has $g=h=a=b=0$ and its
startup stock is $O_C(n)$. These checks use no negative register widths.
\end{proof}

The argument uses encoder-only fully quantum Slepian--Wolf, with its would-be
message parked and counted; see Section~\ref{sec:balancing} for attribution.
All original proof steps needed beyond that ingredient have been supplied
above. This concludes the proof of Theorem~\ref{thm:causal-balancing}.

\section{Conventions, resources and references}
\label{app:conventions}

This appendix records conventions, registers, reproducible finite checks and
the bibliography. It adds no claims.

\subsection{Normalization and finite objects}

\begin{center}
\small
\begin{tabular}{@{}>{\raggedright\arraybackslash}p{3.0cm}>{\raggedright\arraybackslash}p{11.4cm}@{}}
\hline
\textbf{Symbol} & \textbf{Convention}\\
\hline
$\rg{S},\rg{A},\rg{Q}$ & Input, immediate output and purification reference;
dimension $q$ each.\\
$\rho_{\rg{Q}}$ & $\rho^{T}$ for the canonical purification
$(\one\otimes\sqrt{q\rho})\Omega_q$; spectra and minimum eigenvalues agree
with $\rho$.\\
$J_\Phi$ & Trace-one Choi state, $\Tr_{\rg{A}}J_\Phi=\one_{\rg{Q}}/q$;
$k=\rank J_\Phi\le q^2$.\\
$\Dist,\Ddiam$ & Half trace norm and half diamond norm; range $[0,1]$ for
states and channels.\\
$S,H_2$ & Base-two von Neumann and binary entropy; $0\log 0=0$.\\
$\pi_d$ & $\one_d/d$, with $d$ an actual integer Hilbert dimension.\\
$k_{\rho,\varepsilon}$ & All normalized finite-$\rg{Q}\rg{A}\rg{Z}$
extensions with exact $\rg{Q}\rg{Z}$ product and $\rg{Q}\rg{A}$ error at most
$\varepsilon$; auxiliary dimension unrestricted.\\
$\kappa$ & Full-rank input supremum after the positive-error limit at fixed
$\rho$; Corollary~\ref{cor:pointwise-funnel} derives the exact privacy-funnel
form using Theorems~\ref{thm:uniform-gain} and \ref{thm:active-repair},
without attained witnesses.\\
$K_0(\rho),\Pfun$ & Exact fixed-input extension infimum and quantum
privacy-funnel supremum with $X=\rg{E},Y=\rg{Q},R=\rg{A},W=\rg{Z}$;
unrestricted finite auxiliary dimension. The prior subscript $\mathrm{q}$
means quantum.\\
$\Gamma_C$ & Active bath output only; no inert purifier included.\\
$r,s$ & Actual limits $\log R_T/T$ and $S(\omega_T)/T$ for vanishing-error
all-horizon families.\\
Entropy deficit & $\log R_T-S(\omega_T)$; neither entropy nor log rank.\\
Appendix~\ref{app:balancing} constants & May depend on a single fixed finite
exact collision; uniform in tester, finite reference dimension and horizon.\\
Reused letters & $\rg{F}$ is the inert purifier of the actual initializer
(Definition~\ref{def:closed-device}) and, as $\rg{F}^n$ in
Appendix~\ref{app:balancing}, the purifier of the comparison stock; the
Slepian--Wolf receiver is written $\rg{N}$. $C$ names both a finite collision
$(U,\tau)$ and the Slepian--Wolf sender register $\rg{C}$. $k$ is the Choi
rank; $k_{\rho,\varepsilon}$ is the smoothed extension cost. $\nu$ in
Proposition~\ref{prop:interior} is a damping parameter; $\gamma$ in
Appendix~\ref{app:balancing} is the concentration slack.\\
\hline
\end{tabular}
\end{center}

The case $q=1$ is scalar and is directly realized without a bath. In
Theorem~\ref{thm:active-repair} one uses the smallest \emph{positive} target
Choi eigenvalue, never a kernel inverse. Square input and output dimensions
are needed for the support-isometry unitary completion in
Theorems~\ref{thm:uniform-gain} and \ref{thm:active-repair}. All finite
internal direct sums and zero padding are included in the active dimension.

\subsection{Physical and proof registers of Appendix~\ref{app:balancing}}

\begin{center}
\small
\begin{tabular}{@{}>{\raggedright\arraybackslash}p{2.6cm}>{\raggedright\arraybackslash}p{4.0cm}>{\raggedright\arraybackslash}p{7.8cm}@{}}
\hline
\textbf{Register} & \textbf{Size / initialization} & \textbf{Action and charge}\\
\hline
Full $\rg{B}_T$ & $R_T=(T+1)J_N2^{A_T}$ & Every active device register below
is included.\\
Clock & $T+1$; pure & Cyclic shift controls the fixed per-visit unitaries;
$\log(T+1)$ charged.\\
$\rg{Z}$ & $J_N$; $\pi_{J_N}$ & One retained seed reused for every block;
$\log J_N$ charged in dimension and entropy.\\
$\rg{M}$ & $m_o$ qubits; maximally mixed & Permanent recycled capital,
returned jointly in the reduced comparison.\\
$\rg{W}$ & $w$ qubits; pure & Permanent workspace; returned pure jointly up to
the block error.\\
Tranche $j$ & $z=c+d$ qubits; $c$ mixed and $d$ pure & Distinct for every full
block. Preparation residue $r_p$ and encoding residue $N-m_o$ fill it
permanently.\\
Working cells & $e=n\lceil\log R\rceil$ positions during a block & Obtained by
routing within the exact local wires, not a separate free bank.\\
Compressor padding & $\max(0,N-e)$ pure positions & Included in identity
\eqref{eq:wire-volume}; handles $N>e$.\\
Tail & $(T-Ln)\lceil\log k\rceil$ pure qubits & Preallocated exact
pure-Stinespring cells; retained after use.\\
Actual $\rg{F}$ & Minimal dimension $J_N2^{K_T}$ & Purifies the complete
initializer and is forever inert; not an active resource.\\
Proof $\rg{F}^n$ & Purifies comparison $\tau^{\otimes n}$ & Not the actual
repeated stock initializer and never an operated register.\\
$\rg{H}=\rg{E}_{\mathrm{raw}}\rg{E}'$ & Purification of the complete
comparison exterior & Includes all future-accessed spectators before its rank
bound; $\rg{E}'$ is mathematical only.\\
Virtual $\rg{E}^n$ & Fresh minimal environments in a comparison & Produces
support projector $P_{\rg{H}}$; virtual $P_{\rg{E}}$ is never a physical
compression operation.\\
\hline
\end{tabular}
\end{center}

Only the reduced boundary distance omits parked tranches and inaccessible
purifiers. Their omission from a metric does not delete any actual bath
dimensions. Exact initialization is $P_T$ equal eigenvalues $1/P_T$, where
$P_T=J_N2^{K_T}\le R_T$; its remaining $R_T-P_T$ eigenvalues are zero.

\subsection{Scope of the statements}

The physical contract for every statement in this paper is
Definitions~\ref{def:closed-device}--\ref{def:collision}. Nothing above uses a
theorem about finite-tracial closure, about the strength of a physical record,
about feedback-assisted protocols, about computability, or about processes
with more than one visit per input. Propositions~\ref{prop:unitary}--%
\ref{prop:interior} and \ref{prop:shared-seed} are proved here under those
same definitions. No external classification of generalized amplitude damping,
arbitrary mixed replacers or unequal pure-output sectors is used.

Four shortcuts are deliberately avoided and are used nowhere above: raw
mixed-history rank in place of purified-exterior support, merely marginal seed
freshness, uniform initial-entropy continuity under a small flag, and attained
witnesses.

\subsection{Reproducible diagnostics and their limits}
\label{app:diagnostics}

The scripts named below accompany this paper in \texttt{ancillary/}; run the
Python commands from that directory. For Lean, use the wrapper command in
\texttt{ancillary/COVERAGE.md} with an already installed compatible package
cache; the table names its target files, not standalone commands with configured
imports. They print their results; all are bounded
finite controls. They do not prove unbounded auxiliary optimizations,
arbitrary adaptive asymptotics, or any theorem of this paper. Seeds and
dimensions are explicit in the scripts.

\begin{center}
\small
\begin{tabular}{@{}>{\raggedright\arraybackslash}p{5.6cm}>{\raggedright\arraybackslash}p{8.8cm}@{}}
\hline
\textbf{Command in the ancillary bundle} & \textbf{Finite coverage}\\
\hline
\texttt{python c2\_lifting\_mixing.py} & 24 lifts, 20 flagged probes;
$q=2,3$, $R=2,3,4$; seed 20260908; tolerance $10^{-8}$.\\
\texttt{python c3\_exactification.py} & 8 cases, 40 probes; $q=2,3$, original
$R=5,7$; largest unitary 93; seed 20260909; tolerance $10^{-8}$.\\
\texttt{python c4\_adaptive\_spectral.py} & 8 nonproduct distributions, 112
conditional nodes; $q=R=2$, $n=3$; vector dimension $\le 1024$; seed 20260909;
tolerance $10^{-8}$.\\
\texttt{python c5\_region\_spectrum.py} & 12 spectrum/feedback cases, 576
rational cases; seed 20260909; tolerance $10^{-9}$; density dimension
$\le 768$.\\
\texttt{python support\_repair.py} & 12 small-matrix cases; seed 20260908;
tolerance $10^{-8}$.\\
\texttt{python block\_invariant.py} & 300 integer allocations, 1024-label
preparation, joint-freshness and rare-stop negative controls, 12 states over
24 Cliffords.\\
\texttt{python encoder\_only\_merging.py} & 36 partial-trace decoder-omission
identity checks and 125 rational balances; seed 20260908; no decoupling or
compiler validation.\\
\texttt{python rz\_contract\_negative\_control.py} & Exact rational $q=R=2$,
$T=1,\dots,12$ marginal/joint separation; no random seed or numerical
tolerance.\\
\texttt{python gad\_interior\_example.py} &
Proposition~\ref{prop:interior}: trace preservation, exchange-block
unitarity, 2007 inputs reproducing the channel, both analytic endpoints, the
exact rational lower witness, the attained upper witness, the mixed-input
entropy-exchange value and diagonal grid searches for both; seed 20260910;
tolerance $10^{-8}$.\\
\texttt{BlockAccounting.lean} (via documented wrapper) & Six elementary natural-number accounting
lemmas only.\\
\texttt{ActiveBath.lean} (via documented wrapper and pinned mathlib) & Finite complex density
matrices and exact bath-only closing invariance for the complete user marginal.\\
\hline
\end{tabular}
\end{center}

The Lean files report their axioms and use only standard foundations
(\texttt{propext}, \texttt{Quot.sound} and \texttt{Classical.choice}), with no
\texttt{sorry} or added scientific axiom. ActiveBath defines positive-semidefinite
trace-one complex matrices, proves that partial trace and unitary closing
preserve legal states, and proves exact invariance of the complete user marginal
under a bath-only unitary, including arbitrary user--bath correlations and finite
references. This supports only the closing identity in Appendix~\ref{app:integer-wires},
not construction of the compiler. Coverage is \emph{partial}: entropy, trace distance
and the asymptotic theorem are not formalized. No numerical telescope suite for
Theorem~\ref{thm:same-spectrum-converse} is claimed. The ancillary coverage matrix
and pinned dependency record give exact reproduction details.

\subsection{Primary bibliography and exact uses}
\label{app:bibliography}
\begingroup
\renewcommand{\section}[2]{}

\endgroup

The bibliography is a bounded ingredient and contract comparison, not an
exhaustive literature survey. Imported primary theorems are cited at their
specific uses. The original arguments are in the main text and
Appendix~\ref{app:balancing}.

\end{document}